\newtheorem{theorem}{Theorem}
\newtheorem{lemma}{Lemma}
\newtheorem{remark}{Remark}
\newcommand{\revised}[1]{#1} 
\newcommand\Prob{\mathbb{P}}
\newcommand\E{\mathbb{E}}
\renewcommand\O{\mathcal{O}}
\newcommand{\erf}[1]{\operatorname{erf}\mleft(#1\mright)}
\newcommand{\e}{\mathrm{e}} 
\begin{document}

\title{Prophet Inequalities over Time}

\date{\today}
\author[1]{Andreas Abels}
\author[2]{Elias Pitschmann}
\author[2]{Daniel Schmand}
\affil[1]{HHU D\"usseldorf, Germany}
\affil[2]{University of Bremen, Germany}

\maketitle

\begin{abstract}
In this paper, we introduce an over-time variant of the well-known prophet inequality with i.i.d.\ random variables. Instead of stopping with one realized value at some point in the process, we decide for each step how long we select the value. Then we cannot select another value until this period is over. The goal is to maximize the expectation of the sum of selected values. We describe the structure of the optimal stopping rule and give upper and lower bounds on the prophet inequality. In online algorithms terminology, this corresponds to bounds on the competitive ratio of an online algorithm. 

We give a surprisingly simple algorithm with a single threshold that results in a prophet inequality of $\approx 0.396$ for all input lengths $n$. Additionally, as our main result, we present a more advanced algorithm resulting in a prophet inequality of $\approx 0.598$ when the number of steps tends to infinity. We complement our results by an upper bound that shows that the best possible prophet inequality is at most $1/\varphi \approx 0.618$, where $\varphi$ denotes the golden ratio.
\end{abstract}

\section{Introduction}
\noindent Prophet inequalities are a well-known concept from optimal stopping theory. A decider, the gambler, samples independent random variables one-by-one. She stops this process at any point and selects the current value. Her goal is to select the largest among all observed and future values. Then the prophet inequality is the ratio between the expected value selected by the decider and the expected value selected by an omniscient prophet that knows the outcome of all the random samples in advance. In the late seventies, Krengel and Sucheston, and Garling~\cite{krengel77} showed that this ratio is $\frac{1}{2}$ even if the variables are distributed differently.

One solution, due to Samuel-Cahn~\cite{samuel-cahn83}, to this problem is to simply select the first sample that beats some pre-computed threshold. Over the last decade, this has sparked renewed interest in prophet inequalities because such thresholds can be interpreted as prices in online mechanisms. Here, random samples represent customers that arrive one by one and are interested in a good. The threshold price is an individual take-it-or-leave-it offer for each customer. Similar models are used for online ad pricing and pricing of cloud services. If nothing is known about the types of customers, it is natural to assume that all customers are identically distributed. In this case, Correa et al.~\cite{DBLP:conf/sigecom/CorreaFHOV17} have shown that the ratio is at least 0.745.

We extend this line of work and, instead of selling goods to customers, we lend them over time. There is a single good and customers arrive online one by one. The value of each customer is an i.i.d.~random variable and the decider determines whether she lends to the customer and for how long. The value collected through the customer is his value times the number of steps that he borrows. Please note that, once we have selected a customer, we cannot cancel this commitment and we have to reject all customers that arrive until the lease expires.

Our approach gives an over-time flavor to prophet inequalities that allows us to apply them to applications with limited, but non-depleting resources. The optimal prophet inequality corresponds to the competitive ratio of an optimal online algorithm for this problem, where the online algorithm describes the decision routine of the decider. Similar to classical prophet inequalities, our algorithms constitute threshold functions that depend on the distribution of the random variables and the number of steps remaining.

Like in other prophet inequalities, our algorithm can also be interpreted as an online posted-price mechanism where the thresholds constitute the prices. In that case, our results directly translate to the social welfare collected under these prices.

\subsection{Related Work}
Classical prophet inequalities go back to Gilbert and Mosteller~\cite{gilbert-mosteller66} and Krengel and Sucheston, and Garling~\cite{krengel77,krengel78}. The latter showed a stopping rule that for $n$ non-negative, independent random variables drawn one after the other, in expectation, collects at least half the maximum value. And they showed that this is best possible. Samuel-Cahn~\cite{samuel-cahn83} simplified the solution and noted that a single threshold suffices for all steps of the sequence.

Interest in prophet inequalities resurged with research on the posted-price mechanism and online mechanism design. Hajiaghayi et al.~\cite{DBLP:conf/aaai/HajiaghayiKS07} and Chawla et al.~\cite{DBLP:conf/stoc/ChawlaHMS10} pointed out the connection and first used stopping rules for prophet inequalities as prices in online mechanisms. Starting from this perspective, a lot of work has been done on different kinds of prophet inequalities, and we can only present a small fraction of this body of work. Kleinberg and Weinberg~\cite{DBLP:journals/geb/KleinbergW19} extended prophet inequalities to matroid and matroid intersection constraints, and D\"utting and Kleinberg~\cite{DBLP:conf/esa/DuttingK15} covered polymatroids. Feldman et al.~\cite{DBLP:conf/soda/FeldmanGL15} analyzed combinatorial auctions with posted-prices. Subsequently, D\"utting et al.~\cite{DBLP:conf/focs/DuttingKL20} gave a prophet inequality in $\mathcal{O}(\log\log n)$ for subadditive combinatorial auctions, and Christi and Correa~\cite{CorreaC23} improved this to a constant. D\"utting et al.~\cite{DBLP:journals/siamcomp/DuttingFKL20} presented a unifying framework for many different types of combinatorial prophet inequalities.

All these results only assume that the number of random variables and their distributions are known. In many applications, it is reasonable to assume that all numbers are drawn i.i.d.\ from the same distribution. Hill and Kertz~\cite{hill-kertz82,kertz86} studied this case and gave a recursive characterization of the prophet inequality for different numbers $n$ of random variables. Correa et al.~\cite{DBLP:conf/sigecom/CorreaFHOV17} showed that this characterization gives tight bounds on the prophet inequality with i.i.d.\ random variables and that its value is approximately 0.745. Subsequently, Correa et al.~\cite{CorreaDFS22} considered the prophet inequality with i.i.d.\ random variables with an unknown distribution. They showed that this prophet inequality is $\frac{1}{e}$, and that this bound holds for a sublinear number of additional samples on the distribution. Additionally, they gave an upper bound of $\O(n^2)$ on the minimum number of samples necessary to gain information equivalent to full knowledge of the distribution. This result has been improved by Correa et al.~\cite{DBLP:journals/corr/abs-2011-06516}. Their analysis is parameterized in the size of the sample compared to the input. For intermediate sample sizes, they achieve a value of $0.745 - \mathcal{O}(\epsilon)$ for $\mathcal{O}(\frac{n}{\epsilon})$ samples.

In contrast to the prophet inequalities discussed above, the work on models that are over time is sparse. For secretary problems, Fiat et al.~\cite{DBLP:conf/esa/FiatGKN15} have coined the temp secretary problem, where each random variable arrives uniformly distributed over a period of time. There is a fixed duration $\lambda$ for all random variables and each random variable has an individual value. Then the decider online selects a subset of the random variables such that no two random variables overlap. Kesselheim and T\"onnis~\cite{DBLP:conf/esa/KesselheimT16} improved and generalized the result and gave a $\frac{1}{2} - \O(\sqrt{\lambda})$-competitive algorithm for the basic problem.
In the prophet inequality minimization domain, Disser et al.~\cite{DBLP:journals/mor/DisserFGGKSST20} showed that, for the minimization version of the problem considered in this paper, it is possible to achieve constant prophet inequalities if the algorithm is required to cover every step of the online process. This means, there is no restriction on the number of values selected simultaneously. In addition, they showed that this is not possible if the decider is restricted to selecting only a single value at a time. This result is in contrast to the impossibility result by Esfandiari et al.~\cite{DBLP:journals/siamdm/EsfandiariHLM17} who showed that even for i.i.d.\ random variables, there is no constant prophet inequality for the minimization version. For a maximization over time similar to the model in this paper, where the lending durations are not a decision of the algorithm but instead a part of the online input, Feng et al.~\cite{feng2022near} showed a lower bound on the competitive ratio of $0.5$. Faw et al.~\cite{faw2022learning} refine this result for the special case where the lending duration is one fixed parameter $d\in o(n)$. In this setting, they prove a tight competitive ratio of $d/(2d-1)>0.5$.

Recently, Cristi and Oren~\cite{cristi2024planning} introduced a graph-theoretic framework for prophet inequalities, where the algorithm traverses a graph and the available decisions depend on the algorithm's position in the graph. Our model is a special case of this setting and we obtain a lower bound of $0.5$ on our prophet inequality by~\cite{cristi2024planning}. In this work, we improve the lower bound to $0.598$ for $n\to\infty$.

\revised{Independent of our work Perez-Salazar and Verdugo~\cite{perez2024optimal} presented a single-threshold algorithm for prophet inequalities over time with a competitive ratio of $0.567$ and proved that no single-threshold algorithm can outperform this guarantee. Moreover, they improved the general lower bound for the problem to $0.618$, matching our upper bound in Section \ref{sec:upperbound}, by modeling the optimal dynamic programming approach for a worst-case instance though convex programming. This shows the tightness of our upper bound. Nevertheless, our algorithm presented in Section \ref{sec:betterlower} obtains a near-optimal guarantee on the competitive ratio while using significantly easier thresholds given in a closed form.}

\subsection{Our Contribution}
We introduce an over-time component to prophet inequalities with i.i.d.\ distributed random variables. Instead of stopping at some point, we decide for how many steps we want to select a value, and then we cannot select another one until that period is over. The objective is to maximize the expectation of the sum of values taken in each step.

It is easy to see that, as the distribution of the random variables is known, the optimal algorithm for this problem results from a simple dynamic program. Like with ordinary prophet inequalities, it is not clear how to compute the value of this prophet inequality, the competitive ratio of this optimal algorithm. Nevertheless, we show that the optimal online algorithm has a simple form. When it has no value selected for the current step, it selects the current value either for a single step or until the end of the input sequence. Based on this paradigm, we present a multi-threshold rule, where the current threshold is a quantile that changes over the input sequence. We bound this prophet inequality in the limit and show that it is $0.598$ when $n$, the length of the input sequence, tends to infinity. We complement this result with an upper bound of $\frac{1}{\varphi} \approx 0.618$ where $\varphi$ is the golden ratio. This is a strict separation from classical prophet inequalities with i.i.d.\ random variables where a prophet inequality of value 0.745 exists~\cite{DBLP:conf/sigecom/CorreaFHOV17}.  

We present the algorithmic result in an incremental way. As a warm-up, we give a simple single-threshold algorithm where the threshold is a specific quantile of the distribution that only depends on the total length of the input sequence. By using similar techniques as in the analysis of the multi-threshold rule, we show that this simple algorithm results in a prophet inequality of $\frac{2+e^{-2}}{3-e^{-2}} \approx 0.396$ for \emph{all lengths} of the input sequence $n$. Subsequently, we extend these techniques to the analysis of the multi-threshold rule.

\section{The Model}\label{sec:Model}
For a natural number $n$, let $[n] = \{1,\dots,n\}$. Consider a sequence of $n>0$ i.i.d.\ random variables $X_1,\dots,X_n$, from a common distribution. We denote the corresponding cumulative distribution function as $F$. In technical parts of the paper, we will usually assume continuity of $F$, which is without loss of generality due to the arguments in the last paragraph of this section. Further, we assume that the distribution assigns positive probability to non-negative numbers only, that is $F(0)=0$. In every step $i\in[n]$, the task is to choose a number of steps $t_i\in[n-i+1]$ for the value in step $i$ to be \textit{selected}. We write $x_i$ for the realization of $X_i$ and gain a value of $x_i$ for $t_i$ steps. In case we already have some other value selected in step $i$, we have to discard the random variable and set $t_i=0$. The objective is to maximize the expected overall gained value $\E\left[\sum_{i=1}^n X_i t_i\right]$. Throughout the paper, we often write $X$ or $x$ without index in case the step $i$ is either clear from the context or the statement is general and holds for all $i$.

A \textit{decision procedure} for this problem can also be seen as an online algorithm, which we denote by \textsc{Alg}. It has access to the number of steps $n$, the cumulative distribution function $F$ of the random variables, and in step $i$ to the realizations $x_1,\dots,x_i$ as well as its own decisions $t_1,\dots,t_{i-1}$. The prophet inequality compares the expected value obtained by a gambler, i.e., the online algorithm \textsc{Alg} to a prophet, an omniscient \textit{offline algorithm} \textsc{Opt}. The prophet sees all the realizations of the random variables beforehand and makes optimal decisions that maximize the objective function. The competitive analysis of online algorithms corresponds to the prophet inequalities in this setting.  For any given algorithm \textsc{Alg}, we denote the expected overall value it achieves on an input of length $n$ as $\E[\textsc{Alg}_n]$. Note that we drop the dependence on $F$ if it is clear from context. The ratio $\sup_{F,\textsc{Alg}^\prime} (\E[\textsc{Alg}^\prime_n] / \E[\textsc{Opt}_n])$ denotes the value of the prophet inequality over time.

Our algorithms are based on thresholds on the value that is realized in a step. For these thresholds, we use quantiles of the distribution. Throughout the paper, we write $\delta_p$ for the \textit{$p$-quantile} of the given distribution, i.e., the value such that $\Prob_{X \sim F}[X \leq \delta_p] = p$. In case $p$ is negative, we assume $\delta_p = 0$. Note that the quantiles according to this definition do exist for continuous probability distributions. For a discrete probability distribution, a slight perturbation of the drawn values yields a continuous probability distribution as follows. Suppose we are given an instance with a discrete probability distribution and known $n$. For a corresponding discrete value $x$ with positive probability $\alpha$ in the probability distribution, one can introduce a continuous interval $x'=[x-\varepsilon,x+\varepsilon]$ for tiny $\varepsilon>0$ with $Pr[X \in x'] = \alpha$. This induces a total additional error of at most $n\varepsilon$ in the objective function. By choosing $\varepsilon$ small enough (in dependence on $n$ and the expected value of \textsc{Opt}) the change in the value of the prophet inequality is negligible.

\section{Structure of an Optimal Decision Procedure}
\label{structureOnline}
We derive and discuss a short description of an optimal decision procedure. Let $G_n$ denote the expected value achieved by an optimal procedure for $n$ steps. Define $\tau_{n-i} \coloneqq G_{n-i}/(n-i)$ for $i=0,\dots,n-1$ and $\tau_0 = 0$, i.e., $\tau_{n-i}$ denotes the expected average value per step of an optimal procedure for $n-i$ random variables. Consider the following algorithm.

\begin{algorithmEnv}[ht]
  \begin{algorithm}[H]
\For{$i=1,\dots,n$}{
\eIf{$x_i > \tau_{n-i}$}{
select $x_i$ for all $n-i+1$ remaining steps;\\
\textbf{break}
}{select $x_i$ only in step $i$;}}
\caption{An Optimal Decision Procedure for Prophet Inequalities over Time}
\label{alg:onlineopt}
\end{algorithm}
\caption{An Optimal Decision Procedure for Prophet Inequalities over Time.}
\end{algorithmEnv}

Interestingly, for a given realization, the procedure either selects the value for all remaining steps or a single step. We show that Algorithm \ref{alg:onlineopt} is indeed optimal among all possible online procedures.
\begin{theorem}
    Algorithm \ref{alg:onlineopt} is an optimal online algorithm for prophet inequalities over time.
    \label{thm:structureOptimalOnlineAlgo}
\end{theorem}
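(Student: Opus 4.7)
The plan is to prove the statement by a simultaneous induction on $k$, the number of steps remaining, establishing two claims: (a) the function $j \mapsto G_j$ is convex on $\{0, 1, \dots, k\}$, and (b) the optimal online algorithm for $k$ steps selects, on each realization, either $t = 1$ or $t = k$, so that $G_k = \E[\max\{x + G_{k-1}, kx\}]$. Once (a) is known up to $k$, claim (b) for $k+1$ follows immediately: the unrestricted DP gives $G_{k+1} = \E[\max_{t \in [k+1]}\{t x + G_{k+1-t}\}]$, and since $j \mapsto G_j$ is convex on $\{0, \dots, k\}$, the discrete function $t \mapsto t x + G_{k+1-t}$ is a sum of a linear and a convex function of $t$, hence convex on $\{1, \dots, k+1\}$; its maximum is therefore attained at an endpoint, matching the choice made by \cref{alg:onlineopt}.

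The heart of the argument is the inductive step for (a), namely $G_{k+1} + G_{k-1} \geq 2 G_k$. Using (b) at the three relevant indices I would set
\begin{align*}
A(x) &\coloneqq \max\{x + G_k,\, (k+1)x\},\\
B(x) &\coloneqq \max\{x + G_{k-1},\, kx\},\\
C(x) &\coloneqq \max\{x + G_{k-2},\, (k-1)x\},
\end{align*}
so that the desired inequality reads $\E[A(x) - 2 B(x) + C(x)] \geq 0$. My plan is to verify this pointwise in $x \geq 0$ by splitting at the thresholds $\tau_{k-2} \leq \tau_{k-1} \leq \tau_k$, whose monotonicity follows from the inductive convexity of $G$ together with $G_0 = 0$ (convexity through the origin implies that $G_j/j$ is non-decreasing). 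Each of the three maxima switches branches at exactly one of these thresholds, so the four resulting ranges form an exhaustive partition.

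On $[0, \tau_{k-2}]$ all three maxima take their first branch and $A - 2B + C$ equals $G_k - 2G_{k-1} + G_{k-2}$, which is non-negative by the inductive convexity. On $(\tau_{k-2}, \tau_{k-1}]$ only $C$ has switched, leaving $(k-2) x + G_k - 2 G_{k-1}$; using $(k-2) x > G_{k-2}$ reduces this back to the previous case. On $(\tau_{k-1}, \tau_k]$ both $B$ and $C$ have switched and the expression collapses to $G_k - kx \geq 0$, which holds since $x \leq \tau_k = G_k / k$. On $(\tau_k, \infty)$ all three have switched and $A - 2B + C$ is identically $0$. Assembling the four ranges yields pointwise non-negativity and hence the desired inequality in expectation.

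The main obstacle I anticipate is organising the case analysis carefully so that the active branch of each max is correctly identified in every range; the monotonicity $\tau_{k-2} \leq \tau_{k-1} \leq \tau_k$ supplied by the inductive hypothesis is precisely what makes the four-part partition both exhaustive and consistent across $A$, $B$, and $C$. The base cases $k = 1, 2$ are immediate: for $k = 1$ the only admissible choice is $t = 1$, and for $k = 2$ both admissible choices $t \in \{1, 2\}$ already fit the restricted pattern, while the initial convexity $G_2 - G_1 \geq G_1 - G_0$ is a direct computation from the recursion.
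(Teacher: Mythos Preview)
Your argument is correct and takes a genuinely different route from the paper. The paper carries as its inductive invariant the monotonicity of the averages $\tau_j = G_j/j$, and then needs an auxiliary analytic lemma (that $\tau \mapsto \Prob[x\le\tau](\E[x\mid x\le\tau]-\tau)$ is non-increasing) to propagate this monotonicity from $j=n-1$ to $j=n$. You instead carry the stronger invariant that $j\mapsto G_j$ is convex; this immediately forces the maximum of the affine-plus-convex map $t\mapsto tx+G_{k+1-t}$ to sit at an endpoint, and your pointwise four-range case split on $x$ (using that convexity through the origin gives the needed ordering $\tau_{k-2}\le\tau_{k-1}\le\tau_k$) cleanly yields $G_{k+1}-2G_k+G_{k-1}\ge 0$ without any side lemma. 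The trade-off is that your invariant is strictly stronger than what the theorem actually requires (monotone $\tau_j$ does not imply convex $G_j$ in general), but it buys a shorter and more self-contained induction step. Both approaches identify the same switching threshold $\tau_{n-1}$ and hence the same algorithm; they differ only in how the inductive bookkeeping is organised.
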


\begin{proof}
We provide a proof by induction over $n$ that a) Algorithm \ref{alg:onlineopt} is optimal for a sequence of $n$ random variables and b) $\tau_i\geq\tau_{i-1}$ for all $1 \leq i \leq n$.\\
For $n=1$, both claims are obvious.\\
Now assume both claims to be true for all numbers of steps $1,\dots n-1$. First, we prove statement a) for $n$ steps. The best possible decision procedure for the prophet inequality over time problem can easily be characterized through a dynamic program. Clearly, $G_0 = 0$ and
\begin{align}\label{eq:recursiveonline}
    G_n = \E_{X \sim F}\left[\max_{t\in[n]}\{ t  x + G_{n-t}\} \right].
\end{align}
We distinguish two cases. First, consider $x > \tau_{n-1}$. We have to show that the maximum (\ref{eq:recursiveonline}) is attained for $t=n$. By induction hypothesis, $\tau_{n-1}\geq\dots\geq\tau_0$, so for all $t=1,\dots,n-1$,
\[ n x + G_{0} > t x + (n-t)\tau_{n-1} \geq t x + (n-t)\tau_{n-t} = t x + G_{n-t}. \]
Next, consider $x \leq \tau_{n-1}$. We have to show that the maximum (\ref{eq:recursiveonline}) is attained for $t=1$. By induction hypothesis, $\tau_{n-1}\geq\dots\geq\tau_0$, so for all $t=1,\dots,n$,
\[ x + G_{n-1} = x + (t-1)\tau_{n-1}+ (n-t)\tau_{n-1} \geq x + (t-1)x + (n-t)\tau_{n-t} = t x + G_{n-t}. \]
It remains to show $\tau_n \geq \tau_{n-1}$. From the above, we have
\[ G_n = \Prob[X > \tau_{n-1}] n \E[X\mid X > \tau_{n-1}] + \Prob[X \leq \tau_{n-1}] (\E[X\mid X \leq \tau_{n-1}] + G_{n-1}). \]
Plugging this into the definition of $\tau_n$ and using the law of total probability yields
\begin{align}\label{eq:analogoustau}
\begin{split}
    \tau_n &= \Prob[X > \tau_{n-1}] \E[X\mid X > \tau_{n-1}] + \frac{1}{n}\Prob[X \leq \tau_{n-1}] (\E[X\mid X \leq \tau_{n-1}] + G_{n-1})\\
    &= \E[X] - \frac{n-1}{n} \Prob[X \leq \tau_{n-1}] (\E[X\mid X \leq \tau_{n-1}] - \tau_{n-1})\\
    &\geq \E[X] - \frac{n-2}{n-1} \Prob[X \leq \tau_{n-1}] (\E[X\mid X \leq \tau_{n-1}] - \tau_{n-1}),
\end{split}
\end{align}
where the last inequality holds because $\E[X\mid X \leq \tau_{n-1}] - \tau_{n-1} \leq 0$. Applying Lemma \ref{lem:onlineopt} with $\tau_{n-1} \geq \tau_{n-2}$ by induction hypothesis, we finally find that
\[ \tau_n \geq \E[X] - \frac{n-2}{n-1} \Prob[X \leq \tau_{n-2}] (\E[X\mid X \leq \tau_{n-2}] - \tau_{n-2}) = \tau_{n-1},\]
where the latter equality can be derived with an analogous calculation to Equation \eqref{eq:analogoustau}.
\end{proof}

\begin{lemma}\label{lem:onlineopt}
    The function $g(\tau) \coloneqq \Prob[X \leq \tau] (\E[X\mid X \leq \tau] - \tau)$ is non-increasing.
\end{lemma}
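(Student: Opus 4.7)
The plan is to rewrite $g$ as an explicit integral in $\tau$ and then either differentiate or, more cleanly, use integration by parts to obtain a manifestly non-increasing expression.

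First, I would expand the definition using that $D$ is continuous with density $f$, cumulative distribution $F$, and support in $[0,\infty)$ so that $F(0)=0$. By the definition of conditional expectation,
\[
\Prob[x \leq \tau]\,\E[x \mid x \leq \tau] \;=\; \int_0^\tau x\,f(x)\,\diff x,
\qquad\text{and}\qquad \Prob[x \leq \tau]\;=\;F(\tau),
\]
so
\[
g(\tau) \;=\; \int_0^\tau x\,f(x)\,\diff x \;-\; \tau\,F(\tau).
\]

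Next, I would apply integration by parts to the first integral, using $u=x$ and $\diff v = f(x)\diff x$. Since $F(0)=0$, this yields
\[
\int_0^\tau x\,f(x)\,\diff x \;=\; \tau\,F(\tau) \;-\; \int_0^\tau F(x)\,\diff x,
\]
and substituting back gives the compact form
\[
g(\tau) \;=\; -\int_0^\tau F(x)\,\diff x.
\]
Because $F \geq 0$ everywhere, the right-hand side is a non-increasing function of $\tau$, which finishes the proof. (Equivalently, differentiating the original expression directly yields $g'(\tau) = \tau f(\tau) - F(\tau) - \tau f(\tau) = -F(\tau) \leq 0$, where the two $\tau f(\tau)$ terms cancel.)

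I do not anticipate a real obstacle here: the only thing to be careful about is the boundary term in the integration by parts, which vanishes thanks to the assumption $F(0)=0$ that was stated in \cref{sec:Model}. Everything else is routine.
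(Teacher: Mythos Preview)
Your proof is correct. Both you and the paper start from the same rewriting
\[
g(\tau) \;=\; \int_0^\tau x\,f(x)\,\diff x \;-\; \tau\,F(\tau),
\]
but then diverge. The paper does not differentiate or integrate by parts; instead it fixes $\tau' > \tau$, computes $g(\tau') - g(\tau)$ directly, and bounds $\int_\tau^{\tau'} x f(x)\,\diff x \leq \tau' \int_\tau^{\tau'} f(x)\,\diff x$ to obtain $g(\tau') - g(\tau) \leq (\tau - \tau')\,F(\tau) \leq 0$. Your route via integration by parts is arguably cleaner: it produces the closed form $g(\tau) = -\int_0^\tau F(x)\,\diff x$, from which monotonicity (and even the exact derivative $g'(\tau) = -F(\tau)$) is immediate. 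The paper's argument has the minor advantage of not invoking integration by parts or differentiability at all, but under the standing assumption of a density both approaches are equally valid and equally short.
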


\begin{proof}
    We verify for $\tau^\prime > \tau$ that
    \begin{align*}
        g(\tau^\prime) - g(\tau) &= \E[\mathbbm{1}_{X\leq\tau^\prime} X] - \tau^\prime \Prob[X \leq \tau^\prime] - \E[\mathbbm{1}_{X\leq\tau} X] + \tau \Prob[X \leq \tau] \\
        &= \E[(\mathbbm{1}_{X\leq\tau^\prime} - \mathbbm{1}_{X\leq\tau}) X] - \tau^\prime \Prob[X \leq \tau^\prime] + \tau \Prob[X \leq \tau] \\
        &= \E[\mathbbm{1}_{\tau \leq X \leq\tau^\prime} X] - \tau^\prime \Prob[X \leq \tau^\prime] + \tau \Prob[X \leq \tau] \\
        &\leq \tau^\prime (\Prob[X \leq \tau^\prime] - \Prob[X \leq \tau]) - \tau^\prime \Prob[X \leq \tau^\prime] + \tau \Prob[X \leq \tau]\\
        &= (\tau - \tau^\prime) \Prob[X \leq \tau]\\
        &\leq 0\;,
    \end{align*}
    which finishes the proof.
\end{proof}

Interestingly, the structure of Algorithm $\ref{alg:onlineopt}$ is very easy and its thresholds directly come from the dynamic program for $G_n$. Unfortunately, it is not clear how to use the recursive structure of the thresholds to derive a lower bound on the obtained prophet inequality. At the same time, we cannot derive a usable closed form of the thresholds to work with. Consequently, we will describe other procedures that perform well and we are able to analyze.

\section{A Simple Lower Bound on the Prophet Inequality over Time}
We derive a first lower bound on the prophet inequality over time through the analysis of a surprisingly simple online algorithm, which we call \textsc{Simple}, using only one threshold $\delta_{1-\frac{a}{n}}$ for some fixed value $a\in(0,n)$ for a given number of steps $n\in\mathbb{N}_{>0}$. Note that this procedure only uses a single threshold that only depends on the distribution and on the number of drawn random variables $n$ and stays constant throughout the whole procedure.

\begin{algorithmEnv}[ht]
  \begin{algorithm}[H]\caption{\textsc{Simple}}\label{alg:simple}
\For{$i=1,\dots,n$}{
\eIf{$x_i > \delta_{1-\frac{a}{n}}$}{
select $x_i$ for all $n-i+1$ remaining steps;\\
\textbf{break}
}{select $x_i$ only in step $i$;}}
\end{algorithm}
  \caption{A Simple Decision Procedure for Prophet Inequalities over Time with Only One Threshold}
\end{algorithmEnv}

\begin{theorem}\label{thm:simple}
    The \textsc{Simple} algorithm with threshold $\delta_{1-\frac{2}{n}}$ provides a lower bound of $\frac{1 + \e^{-2}}{3 - \e^{-2}} > 0.396$ for the prophet inequality over time for any $n \in \mathbb{N}_{>0}$.
\end{theorem}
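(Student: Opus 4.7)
My plan is to bound $\E[\textsc{Simple}_n]$ from below and $\E[\textsc{Opt}_n]$ from above in closed form in terms of $\tau := \delta_{1-2/n}$, $\mu_+ := \E[x\mid x>\tau]$, and $\mu_- := \E[x\mid x\le\tau]$, and then use $\mu_-\ge 0$ and $\tau\le\mu_+$ to reduce the ratio to a scalar function of $n$ whose infimum is exactly $(1+e^{-2})/(3-e^{-2})$.

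For the prophet, I first observe that $\textsc{Opt}$ attains $M_i := \max(x_1,\ldots,x_i)$ at step $i$ by starting a fresh commitment at every new prefix-maximum position and keeping it until the next one; this is step-wise optimal because any value active at step $i$ must be some $x_j$ with $j\le i$. Hence $\E[\textsc{Opt}_n] = \sum_{i=1}^n \E[M_i]$. Writing $\alpha := 1-2/n$ and $\beta := 2/n$ and splitting on $\{M_i\le\tau\}$ versus $\{M_i>\tau\}$, the pointwise inequality $M_i \le \tau\,\mathbf{1}\{M_i\le\tau\} + \sum_{j\le i} x_j\mathbf{1}\{x_j>\tau\}$ yields $\E[M_i] \le \tau\alpha^i + i\beta\mu_+$. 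Summing via $\sum_{i=1}^n\alpha^i = \alpha(1-\alpha^n)/\beta$ and $\beta\cdot n(n+1)/2 = n+1$ gives
\[
\E[\textsc{Opt}_n] \;\le\; \tfrac{\alpha}{\beta}(1-\alpha^n)\,\tau + (n+1)\,\mu_+.
\]

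For the algorithm I condition on the truncated-geometric stopping time $T := \min\{i : x_i > \tau\}$. On $\{T=k\le n\}$, \textsc{Simple} collects $(k-1)\mu_-+(n-k+1)\mu_+$ in expectation, and on $\{T>n\}$ it collects $n\mu_-$. Expanding
\[
\E[\textsc{Simple}_n] \;=\; n\mu_-\alpha^n + \beta\sum_{k=1}^n \alpha^{k-1}\bigl((k-1)\mu_-+(n-k+1)\mu_+\bigr)
\]
and simplifying with the standard closed forms for $\sum_{j=0}^{n-1}\alpha^j$ and $\sum_{j=0}^{n-1}j\alpha^j$, the $\mu_-$-coefficient telescopes into a single term and I expect the clean identity
\[
\E[\textsc{Simple}_n] \;=\; n\,\mu_+ - \tfrac{\alpha}{\beta}(1-\alpha^n)(\mu_+-\mu_-).
\]

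Finally, using $\mu_-\ge 0$ in the \textsc{Simple} lower bound and $\tau\le\mu_+$ in the \textsc{Opt} upper bound, and plugging in $\alpha/\beta=(n-2)/2$, strips the distribution from both sides and leaves
\[
\frac{\E[\textsc{Simple}_n]}{\E[\textsc{Opt}_n]} \;\ge\; \frac{2n - (n-2)(1-\alpha^n)}{(n-2)(1-\alpha^n) + 2(n+1)}.
\]
Passing $n\to\infty$ with $(1-2/n)^n \to e^{-2}$ converts the right-hand side into exactly $(1+e^{-2})/(3-e^{-2})$, so the inequality is tight in the limit. The main obstacle is the uniform-in-$n$ bound: cross-multiplying shows the claim is equivalent to $(n-2)\bigl(e^{-2} - (1-2/n)^n\bigr) \le (3-5e^{-2})/2$ for every $n\ge 2$. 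I would prove this by expanding $n\ln(1-2/n) = -2 - 2/n + O(1/n^2)$ so that the left-hand side equals $2e^{-2} + O(1/n)$, which lies well below $(3-5e^{-2})/2$ because $e^{-2}<1/3$; the finitely many small values of $n$ would be verified by direct substitution.
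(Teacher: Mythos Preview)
Your approach is essentially the paper's: you lower-bound \textsc{Simple} by dropping sub-threshold contributions and upper-bound \textsc{Opt} via $\E[M_i]\le \alpha^i\mu_+ + i\beta\mu_+$, obtaining exactly the same distribution-free ratio (your intermediate expressions $n-\tfrac{\alpha}{\beta}(1-\alpha^n)$ and $\tfrac{\alpha}{\beta}(1-\alpha^n)+(n+1)$ coincide with the paper's bounds before its $e^{-a}$ approximations). Your exact identity for $\E[\textsc{Simple}_n]$ is a nice touch that the paper does not state.

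The only substantive difference is how you close the uniform-in-$n$ step, and there your argument is a sketch rather than a proof. The paper immediately replaces $(1-2/n)^n\le e^{-2}$ and $(1-2/n)^{n-1}\ge e^{-2}$ to obtain bounds that are \emph{exactly linear} in $n$; the ratio then has the form $\tfrac{An+B}{Cn+D}$, and monotonicity in $n$ is a one-line sign check of $AD-BC$, so the limit is automatically the infimum. By contrast, you keep $\alpha^n$ exact and reduce to $(n-2)\bigl(e^{-2}-(1-2/n)^n\bigr)\le (3-5e^{-2})/2$. Your asymptotic is correct (the left side tends to $2e^{-2}\approx 0.27$, well below $\approx 1.16$), but ``the finitely many small values of $n$ would be verified by direct substitution'' is not yet a proof: you have not made the $O(1/n)$ explicit, so there is no specified cutoff after which the asymptotic takes over. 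This is easily fixable (e.g.\ show the left side is bounded by $2e^{-2}$ for all $n\ge 2$, or mimic the paper and linearize first), but as written it is a gap. You also do not treat $n=1$, where $\delta_{-1}=0$ and the ratio is trivially $1$; the paper handles this case separately.
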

To prove the theorem, we will show a lower bound on the expected value of the \textsc{Simple} algorithm and an upper bound on the expected value of the omniscient optimal offline algorithm \textsc{Opt} that knows all realizations in advance. We will parameterize the proof for \textsc{Simple} in $a \in (0,n)$ to optimize over $a$ afterwards.

\begin{lemma}\label{lem:simplelow}
    The expected value of the \textsc{Simple} algorithm for $n$ steps with thre\-shold $\delta_{1-\frac{a}{n}}$ is lower bounded by
    \[\E[\textsc{Simple}_n] \geq  \left( n \left( 1 - \frac{1}{a} + \frac{1}{a} \e^{-a}  \right) + 1 - (a+2)\e^{-a} \right) \E[X \mid X > \delta_{1-\frac{a}{n}}].\]
\end{lemma}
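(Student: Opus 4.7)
My plan is to decompose $\E[\textsc{Simple}_n]$ by conditioning on the first-crossing time of the threshold and then throw away the (non-negative) contributions coming from below-threshold draws, leaving a clean sum that I can evaluate in closed form. Let $p \coloneqq a/n$, $q \coloneqq 1-p$, and let $T \in \{1,\dots,n+1\}$ be the first index $i$ with $x_i > \delta_{1-p}$ (set $T = n+1$ if no such $i$ exists). Since the $x_i$ are i.i.d., $\Prob[T=t] = pq^{t-1}$ for $t \in [n]$ and $\Prob[T>n] = q^n$. Writing $E_- \coloneqq \E[x\mid x \leq \delta_{1-p}]$ and $E_+ \coloneqq \E[x\mid x > \delta_{1-p}]$, conditioning gives
\[
\E[\textsc{Simple}_n] \;=\; E_+ \sum_{t=1}^n p q^{t-1}(n-t+1) \;+\; E_-\left(nq^n + \sum_{t=1}^n p q^{t-1}(t-1)\right).
\]
Because $F(0)=0$, $x \geq 0$ a.s., and hence $E_- \geq 0$. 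Dropping the second bracket therefore yields
\[
\E[\textsc{Simple}_n] \;\geq\; E_+ \cdot p \sum_{s=0}^{n-1}(n-s)\, q^s.
\]

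Next I would evaluate the remaining sum in closed form by the standard perturbation trick (multiply by $q$ and subtract), which gives $p\sum_{s=0}^{n-1}(n-s)q^s = n - q(1-q^n)/p$. Substituting $p = a/n$ and $q = 1 - a/n$ and tidying up produces
\[
p \sum_{s=0}^{n-1}(n-s)q^s \;=\; n\left(1 - \tfrac{1}{a}\right) + 1 + \frac{n-a}{a}\bigl(1 - a/n\bigr)^{n}.
\]
The first two terms already match the first two terms of the claimed bound, so the task is to show that $\frac{n-a}{a}(1-a/n)^n$ is at least $\frac{n}{a}e^{-a} - (a+2)e^{-a}$.

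This last step is the main obstacle, because the familiar estimate $(1-a/n)^n \leq e^{-a}$ runs in the wrong direction. A clean workaround is the pointwise inequality $1 - x \geq e^{-x}(1-x^2)$ for $x \in [0,1]$, which is equivalent to the standard $1+x \leq e^{x}$ (after dividing by $1-x$ when $x<1$, with equality at $x=1$). Applying it to each factor of $(1-a/n)^n$ and then invoking Bernoulli's inequality $(1-a^2/n^2)^n \geq 1 - a^2/n$ gives
\[
(1-a/n)^n \;\geq\; e^{-a}\bigl(1 - a^2/n\bigr).
\]
Plugging this in, using $(n-a)(1 - a^2/n) \geq n - a - a^2$ (by dropping the non-negative $a^3/n$), and finally weakening the constant from $(a+1)e^{-a}$ to $(a+2)e^{-a}$, gives the additive term $1 - (a+2)e^{-a}$ of the statement and completes the proof. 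The delicate points to check carefully are the validity of $1-x \geq e^{-x}(1-x^2)$ on $[0,1]$ and that the $a^2/n$ correction in the Bernoulli step is absorbed by the slack between $(a+1)e^{-a}$ and $(a+2)e^{-a}$ for every $a \in (0,n)$.
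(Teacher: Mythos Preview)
Your proof is correct and follows essentially the same approach as the paper: drop the non-negative below-threshold contributions, evaluate the remaining geometric sum in closed form, and then replace the $(1-a/n)^k$ factor by $e^{-a}$. The only difference is cosmetic: the paper arranges the closed form so that both $(1-a/n)^n$ and $(1-a/n)^{n-1}$ appear and uses the sandwich $(1-a/n)^n \le e^{-a} \le (1-a/n)^{n-1}$, whereas you keep a single $(1-a/n)^n$ and lower-bound it via $1-x \ge e^{-x}(1-x^2)$ together with Bernoulli --- this in fact gives the slightly sharper constant $1-(a+1)e^{-a}$ before you relax it to match the stated $1-(a+2)e^{-a}$.
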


\begin{proof}
Assume the first realized value that is greater than $\delta_{1-\frac{a}{n}}$ appears in step $i\in[n]$. Then, the algorithm selects this value $x_i$ for the current and the remaining $(n-i)$ steps. The expected realized value is $\E[X \mid X > \delta_{1-\frac{a}{n}}]$. For the simplicity of this analysis, we ignore all values from steps before step $i$ here. We have,
    \begin{align*}
        &\E[\textsc{Simple}_n] \geq \sum_{i=1}^n \left(1-\frac{a}{n}\right)^{i-1} \frac{a}{n} (n-i+1) \E[X_i \mid X > \delta_{1-\frac{a}{n}}]\\
        &= \left( a\frac{n+1}{n} \sum_{i=1}^n \left(1-\frac{a}{n}\right)^{i-1} - \frac{a}{n} \sum_{i=1}^n i \left(1-\frac{a}{n}\right)^{i-1} \right) \E[X\mid X > \delta_{1-\frac{a}{n}}]\\
        &= \left( a\frac{n+1}{n} \sum_{i=0}^{n-1} \left(1-\frac{a}{n}\right)^i - \frac{a}{n} \frac{n \left(1-\frac{a}{n}\right)^{n+1} - (n+1) \left(1-\frac{a}{n}\right)^n + 1}{\left(1-\left(1-\frac{a}{n}\right)\right)^2} \right) \E[X\mid X > \delta_{1-\frac{a}{n}}]\\
        &= \left( a\frac{n+1}{n} \frac{1-\left(1-\frac{a}{n}\right)^n}{1-\left(1-\frac{a}{n}\right)} - \frac{n}{a}\left(1-\frac{a}{n}\right)^n \left( (n-a) - (n+1) \right) - \frac{n}{a} \right) \E[X\mid X > \delta_{1-\frac{a}{n}}]\\
        &= \left( (n+1) \left(1-\smash[b]{\underbrace{\left(1-\frac{a}{n}\right)^n}_{\leq \e^{-a}}}\right) + \left(\frac{n}{a}-1\right) \smash[b]{\underbrace{\left(1-\frac{a}{n}\right)^{n-1}}_{\geq \e^{-a}}} (a+1) - \frac{n}{a} \right) \E[X\mid X > \delta_{1-\frac{a}{n}}]\\[0.5cm]
        &\geq \left( n - n\e^{-a} + 1 - \e^{-a} + \frac{n(a+1)}{a}\e^{-a} -(a+1)\e^{-a} -\frac{n}{a} \right) \E[X\mid X > \delta_{1-\frac{a}{n}}]\\
        &= \left( n \left( 1 - \frac{1}{a} + \frac{1}{a} \e^{-a}  \right) + 1 - (a+2)\e^{-a}\right) \E[X\mid X > \delta_{1-\frac{a}{n}}]\;,
    \end{align*}
which finishes the proof.
\end{proof}

Next, we upper bound the value of the optimal offline algorithm.
\begin{lemma}\label{lem:simpleupp}
    For arbitrary $a\in(0,n)$, the expected value of the optimal offline algorithm \textsc{Opt} for $n$ steps is upper bounded by
    \[\E[\textsc{Opt}_n] \leq \left( n \left( \frac{a}{2} + \frac{1}{a} - \frac{1}{a}\e^{-a} \right) + \frac{a}{2} + 2\e^{-a} - 1 \right) \E[X\mid X > \delta_{1-\frac{a}{n}}].\]
\end{lemma}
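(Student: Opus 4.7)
The plan is to identify $\textsc{Opt}_n$ with the expected sum of prefix maxima, and then to bound each prefix max by a threshold-decomposition around $\delta := \delta_{1-a/n}$. Throughout, write $p := a/n$ and $\mu := \E[x \mid x > \delta]$, and note that $\delta \leq \mu$ because a conditional expectation above a quantile is at least that quantile.

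Since each slot $j$ can only be covered by some value $x_i$ with $i \leq j$, every valid offline selection yields at most $M_j := \max_{i \leq j} x_i$ in slot $j$; conversely, the prophet attains equality by committing to each prefix maximum until it is superseded. Hence $\textsc{Opt}_n = \sum_{j=1}^n M_j$. I would then use the deterministic inequality
\[
M_j \;\leq\; \delta \cdot \mathbf{1}\{M_j \leq \delta\} + \sum_{i \leq j} x_i \mathbf{1}\{x_i > \delta\},
\]
which holds because when $M_j > \delta$ the argmax is itself a threshold-exceeding sample and is bounded by the sum of all such. Summing over $j$, taking expectations via $\Prob[M_j \leq \delta] = (1-p)^j$ and $\E[x_i \mathbf{1}\{x_i > \delta\}] = p\mu$, and interchanging sums in $\sum_j \sum_{i \leq j}$ to produce the triangular factor $\sum_i (n-i+1) = n(n+1)/2$, yields
\[
\E[\textsc{Opt}_n] \;\leq\; \delta \cdot \frac{(1-p)(1-(1-p)^n)}{p} + \mu \cdot \frac{a(n+1)}{2}.
\]

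In the final stage I would replace $\delta$ by $\mu$ in the first summand, substitute $p = a/n$, and pass from the finite-$n$ factor $(1-a/n)^n$ to $\e^{-a}$ via the Taylor-based lower bound $(1-a/n)^n \geq \e^{-a}\bigl(1 - a^2/(2(n-a))\bigr)$, which follows from $\ln(1-a/n) \geq -a/n - (a/n)^2/(2(1-a/n))$ combined with $\e^{-t} \geq 1-t$. After algebraic rearrangement, the first summand becomes $\tfrac{n(1-\e^{-a})}{a} + 2\e^{-a} - 1$, matching the claimed expression.

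The main obstacle is this last step: matching the precise constants $\tfrac{a}{2}$ and $2\e^{-a}-1$ requires a Taylor correction tuned exactly right, since a coarser estimate such as $(1-a/n)^n \leq \e^{-a}$ alone would merely give the leading asymptotic term $\tfrac{n(1-\e^{-a})}{a}$ without the lower-order constants. The remaining ingredients, namely comparing $\delta$ to $\mu$, expanding geometric and triangular sums, and interchanging finite sums, are all routine.
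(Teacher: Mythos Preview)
Your overall strategy coincides with the paper's: write $\E[\textsc{Opt}_n]=\sum_{j}\E[M_j]$, bound each $\E[M_j]$ by $(1-p)^j\mu+jp\mu$ (the paper does this directly with $\mu$ rather than via $\delta\leq\mu$, but the two routes are equivalent), sum to $\bigl(\tfrac{n}{a}-1\bigr)\bigl(1-(1-\tfrac{a}{n})^n\bigr)\mu+\tfrac{a(n+1)}{2}\mu$, and finally replace the finite-$n$ factor by something in $\e^{-a}$.

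The gap is in your last step. Using your Taylor bound $(1-\tfrac{a}{n})^n\geq \e^{-a}\bigl(1-\tfrac{a^2}{2(n-a)}\bigr)$ and writing $\tfrac{n}{a}-1=\tfrac{n-a}{a}$, one gets
\[
\Bigl(\tfrac{n}{a}-1\Bigr)\Bigl(1-(1-\tfrac{a}{n})^n\Bigr)\;\leq\;\frac{n-a}{a}(1-\e^{-a})+\frac{n-a}{a}\cdot\frac{a^2\e^{-a}}{2(n-a)}\;=\;\frac{n(1-\e^{-a})}{a}+\Bigl(1+\tfrac{a}{2}\Bigr)\e^{-a}-1,
\]
not $\tfrac{n(1-\e^{-a})}{a}+2\e^{-a}-1$. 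The two expressions coincide only at $a=2$; for $a>2$ your constant is strictly larger than the lemma's, so the argument does not close for general $a\in(0,n)$. The paper obtains the $2\e^{-a}$ term by a different manipulation: it factors out one copy of $(1-\tfrac{a}{n})$ and lower-bounds $(1-\tfrac{a}{n})^{n-1}$ by $\e^{-a}$, then expands $\bigl(\tfrac{n}{a}-1\bigr)\bigl(1-(1-\tfrac{a}{n})\e^{-a}\bigr)$ and finally drops a residual $-\tfrac{a}{n}\e^{-a}$; the cross term from multiplying out $(1-\tfrac{a}{n})\e^{-a}$ is what produces the second $\e^{-a}$. Your Taylor correction is tuned to cancel against $\tfrac{n-a}{a}$ rather than against the extra $(1-\tfrac{a}{n})$ factor, which is why it yields $\tfrac{a}{2}\e^{-a}$ instead.
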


\begin{proof}
    The optimal offline algorithm has the so-far best-realized value selected in each step, that is \[\E[\textsc{Opt}_n] = \E\left[\sum_{i=1}^n \max\{X_1,\dots,X_i\}\right] = \sum_{i=1}^n \E[\max\{X_1,\dots,X_i\}]\;.\]
    We provide an upper bound for each summand separately. In case none of the first $i$ random variables realizes above $\delta_{1-\frac{a}{n}}$, we can upper bound the expected maximum by $\E[X\mid X > \delta_{1-\frac{a}{n}}]$. In case exactly $j \geq 1$ of the first $i$ random variables have a realization greater than $\delta_{1-\frac{a}{n}}$, the expected maximum is at most $j \E[X\mid X > \delta_{1-\frac{a}{n}}]$ due to the fact that $\E[\max\{X_1,\dots,X_j\}] \leq j \E[X_1]$ for i.i.d.\ random variables $X_1,\dots,X_j$. Formally, for all $i\in[n]$,
    \begin{align*}
        &\E[\max\{X_1,\dots,X_i\}] \leq \left(1 - \frac{a}{n}\right)^i \E[X\mid X > \delta_{1-\frac{a}{n}}] + \sum_{j=1}^i \binom{i}{j} \left(\frac{a}{n}\right)^j \left(1-\frac{a}{n}\right)^{i-j} j \E[X\mid X > \delta_{1-\frac{a}{n}}],
    \end{align*}
    where the latter sum equals the expectation of a $(i,a/n)$-distributed binomial variable, which is $i a/n$. Thus,
    \begin{align*}
        &\E[\textsc{Opt}_n] \leq \sum_{i=1}^n \left( \left(1 - \frac{a}{n}\right)^i + i\frac{a}{n} \right) \E[X\mid X > \delta_{1-\frac{a}{n}}]\\
        &= \left( \frac{\left(1 - \frac{a}{n}\right) (1-\left(1 - \frac{a}{n}\right)^n)}{1-\left(1 - \frac{a}{n}\right)} + \frac{a}{n}\frac{n(n+1)}{2} \right) \E[X\mid X > \delta_{1-\frac{a}{n}}]\\
        &= \left( \left(\frac{n}{a} - 1\right) \left(1-\left(1 - \frac{a}{n}\right) \smash[b]{\underbrace{\left(1 - \frac{a}{n}\right)^{n-1}}_{\geq \e^{-a}}}\right) + (n+1)\frac{a}{2} \right) \E[X\mid X > \delta_{1-\frac{a}{n}}]\\[0.5cm]
        &\leq \left( \frac{n}{a} - \frac{n}{a}\e^{-a} + \e^{-a} - 1 + \e^{-a} - \frac{a}{n}\e^{-a} + n\frac{a}{2} + \frac{a}{2} \right) \E[X\mid X > \delta_{1-\frac{a}{n}}]\\
        &= \left( n \left( \frac{a}{2} + \frac{1}{a} - \frac{1}{a}\e^{-a} \right) + \frac{a}{2} + 2\e^{-a} - 1 - \frac{a}{n}\e^{-a} \right) \E[X\mid X > \delta_{1-\frac{a}{n}}]\\
        &\leq \left( n \left( \frac{a}{2} + \frac{1}{a} - \frac{1}{a}\e^{-a} \right) + \frac{a}{2} + 2\e^{-a} - 1 \right) \E[X\mid X > \delta_{1-\frac{a}{n}}]\;
    \end{align*}
    which finishes the proof.
\end{proof}

Combining the two previous lemmas, we obtain a lower bound on the prophet inequality of \textsc{Simple}.
\[
    \frac{\E[\textsc{Simple}_n]}{\E[\textsc{Opt}_n]} \geq \frac{n \left( 1 - \frac{1}{a} + \frac{1}{a} \e^{-a}  \right) + 1 - (a+2)\e^{-a}}{n \left( \frac{a}{2} + \frac{1}{a} - \frac{1}{a}\e^{-a} \right) + \frac{a}{2} + 2\e^{-a} - 1}
\]

A numerical optimization over $a$ yields a best possible lower bound obtainable by this approach of $0.3965$ when choosing $a \approx 2.083$. For simplicity, we choose $a=2$ for Theorem \ref{thm:simple}.

\begin{proof}[Proof of Theorem \ref{thm:simple}]
    First, note that for $n=1$, every procedure has to select the single value for one step, which yields a prophet inequality of $1$.\\
    Next, for $a = 2$ and general value of $n\in\mathbb{N}_{>1}$, the corresponding fraction of the bounds from the lemmas is non-increasing in $n$:\[
    \frac{\E[\textsc{Simple}_n]}{\E[\textsc{Opt}_n]} \geq \frac{n \left(\frac{1}{2} + \frac{1}{2} \e^{-2} \right) + 1 - 4\e^{-2}}{n \left( \frac{3}{2} - \frac{1}{2}\e^{-2} \right) + 2\e^{-2}} \geq \frac{(n+1) \left(\frac{1}{2} + \frac{1}{2} \e^{-2}\right) + 1 - 4\e^{-2}}{(n+1) \left( \frac{3}{2} - \frac{1}{2}\e^{-2}\right) + 2\e^{-2}},\]
    where the latter inequality can be shown by multiplying the numerator of one fraction with the denominator of the other and comparing the terms. Consequently, the fraction is lower bounded for all $n\in\mathbb{N}_{>0}$ by 
    \[\frac{\E[\textsc{Simple}_n]}{\E[\textsc{Opt}_n]} \geq \lim_{n\to\infty} \frac{n \left(\frac{1}{2} + \frac{1}{2} \e^{-2} \right) + 1 - 4\e^{-2}}{n \left( \frac{3}{2} - \frac{1}{2}\e^{-2} \right) + 2\e^{-2}} = \frac{\frac{1}{2} + \frac{1}{2} \e^{-2}}{\frac{3}{2} - \frac{1}{2}\e^{-2}} = \frac{1 + \e^{-2}}{3 - \e^{-2}}\approx 0.3963\;,\]
    which finishes the proof.
\end{proof}


\section{Upper Bound on the Prophet Inequality over Time}\label{sec:upperbound}

In this section, we provide an upper bound on the best possible prophet inequality that is achievable by any online algorithm. We show the following theorem.

\begin{theorem}
The prophet inequality over time is at most $1/\varphi \approx 0.618$, where $\varphi = \frac{1 + \sqrt{5}}{2}$ denotes the golden ratio.
\label{thm:lower}
\end{theorem}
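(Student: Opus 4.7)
The plan is to construct an adversarial distribution and, by exploiting the structural restriction from \cref{thm:structureOptimalOnlineAlgo}, show that every online algorithm achieves at most a $1/\varphi$ fraction of the offline expected value in the limit $n\to\infty$. The starting point is to invoke \cref{thm:structureOptimalOnlineAlgo} to restrict attention to algorithms that at every free step either commit to the current value for a single step or for all remaining steps, with the threshold $\tau_k$ non-decreasing in the remaining horizon $k$. This monotonicity means that on a finitely supported distribution the algorithm is fully described by a small number of switching horizons, one per distinct positive value in the support.

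For the adversarial instance I would use a three-atom distribution with $\Prob[x=0]=1-(a+b)/n$, $\Prob[x=L]=b/n$, $\Prob[x=H]=a/n$, where $a,b>0$ and $0<L<H$ are tunable constants. In the scaling limit $n\to\infty$ the positions of the $L$'s and $H$'s become independent Poisson processes on $[0,1]$ with rates $b$ and $a$, which replaces sums by integrals and makes the analysis tractable. Under the reduction above, any candidate algorithm is then parameterized by a single switching fraction $\alpha\in[0,1]$: while the remaining horizon exceeds $\alpha n$ the threshold lies in $[L,H)$ and the algorithm commits only on $H$; once the remaining horizon falls below $\alpha n$ the threshold drops below $L$ and the algorithm commits on any non-zero arrival.

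Next, I would write $\E[\textsc{Alg}(\alpha)_n]/n$ and $\E[\textsc{Opt}_n]/n$ in closed form in this limit. For the algorithm, I would decompose its value into the contributions from one-step pickups of $L$ before any commitment (of vanishing order $O(1/n)$), commitment on the first $H$ arriving before time $1-\alpha$, and commitment on the first non-zero arrival after time $1-\alpha$ conditional on no earlier commitment. For the offline, I would observe that for this distribution the upper bound $\sum_i \E[\max(x_1,\ldots,x_i)]$ from the proof of \cref{lem:simpleupp} is in fact tight---an optimal offline schedule locks each newly observed running maximum for just long enough to hand off to the next one, and locks on $H$ for all remaining steps as soon as it appears---so that $\E[\textsc{Opt}_n]=\sum_i \E[\max(x_1,\ldots,x_i)]$ can be computed exactly. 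The final step is to optimize the adversarial choice of $(a,b,L,H)$ to minimize $\max_\alpha \E[\textsc{Alg}(\alpha)_n]/\E[\textsc{Opt}_n]$; at the worst case, balancing the algorithm's two extreme strategies ($\alpha=0$ and $\alpha=1$) against the offline expression should collapse, after simplification, into a quadratic whose relevant root is $1/\varphi$, with the golden ratio entering through its defining identity $\varphi^2=\varphi+1$.

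The hard part will be this final optimization: the closed-form expressions carry multiple exponentials $e^{-a}$, $e^{-(a+b)}$ and mixed terms like $e^{-a(1-\alpha)}$, so cleanly isolating the golden-ratio polynomial requires the right parameterization, potentially a judicious limit along a curve in $(a,b,L,H)$-space along which several of these exponentials simplify simultaneously. Verifying that no intermediate switching fraction $\alpha$ can beat $1/\varphi$ at the worst-case distribution is equally delicate, and if the three-atom distribution does not yield the tight bound on its own, a richer family (additional atoms or a continuous tail on top of the three atoms) would be needed so that all first-order conditions of the adversary collapse onto the same golden-ratio quadratic.
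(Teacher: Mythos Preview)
Your three-atom plan is essentially correct and close in spirit to the paper's proof, but takes a somewhat different route. The paper bypasses your optimization step entirely by writing down from the start the limiting distribution your family would converge to: it takes $H=\varphi n$, $\Prob[x=H]=1/n^2$, $L=1$, $\Prob[x=L]=1/\sqrt{n}$. In your parametrization this is precisely the degenerate boundary $a\to 0$, $b\to\infty$, $aH\to\varphi$, $L=1$; the paper then derives exact recursive formulas for $\E[\textsc{Alg}_k]$ (\cref{lem:lowersmallk}), locates the switching horizon $k'$, and pushes through several technical limit computations (\cref{lem:lower_onl}, \cref{lem:lower_off}). Your route --- Poisson limit first, then optimize over $(a,b,L,H)$ --- is cleaner conceptually and replaces those limit lemmas by elementary integrals, but you do have to discover that no interior choice of $(a,b,L,H)$ attains the bound and that the ``judicious limit along a curve'' you mention is mandatory, not optional. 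Since for each $\epsilon>0$ some fixed $(a,b,L,H)$ in that family gives a limiting ratio below $1/\varphi+\epsilon$, this still suffices for the theorem.

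One point to correct: the golden ratio does \emph{not} emerge from balancing the extreme strategies $\alpha=0$ and $\alpha=1$. In the degenerate limit $b\to\infty$, $a\to 0$ with $aH=c$ fixed and $L=1$, one obtains
\[
\frac{\E[\textsc{Alg}(\alpha)_n]}{n}\ \longrightarrow\ c\Bigl[(1-\alpha)-\tfrac{(1-\alpha)^2}{2}\Bigr]+\alpha,
\qquad
\frac{\E[\textsc{Opt}_n]}{n}\ \longrightarrow\ 1+\tfrac{c}{2},
\]
and the algorithm's best response is the interior point $\alpha=1/c$, with maximal value $(c^2+1)/(2c)$. The adversary then minimizes the ratio $\dfrac{c^2+1}{c(c+2)}$ over $c$, and its first-order condition is exactly $c^2-c-1=0$, i.e.\ $c=\varphi$; the paper's switching horizon $k'\approx n/\varphi$ is nothing but $\alpha^*=1/c=1/\varphi$. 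Balancing $\alpha=0$ against $\alpha=1$ instead would give $c=2$, where the interior optimum $\alpha=1/2$ still achieves ratio $5/8>1/\varphi$, so that heuristic leads you to a suboptimal adversary.
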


To prove the statement of the theorem, we construct a specific example instance, i.e., a distribution, where the behavior of a best possible procedure is easy to analyze. We assume the input length $n$ to be large and choose the distribution such that the random variable $X$ is given by
\[X = \begin{cases}\varphi n & \text{with prob.\ }\frac{1}{n^2}\;,\\
1 & \text{with prob.\ }\frac{1}{\sqrt{n}}\;,\\
0 & \text{with prob.\ }1 - \frac{1}{\sqrt{n}} - \frac{1}{n^2}\;.\end{cases}\]

Note that this is a discrete probability distribution. However, a slight perturbation of the drawn values yields a continuous probability distribution. Let $F$ denote its cumulative distribution function.

 We derive a closed form for the expected profit of a best possible online algorithm. Let $\E[\textsc{Alg}_n]$ denote this expectation for an instance of $n$ steps with random variables distributed according to $F$. Obviously, $\E[\textsc{Alg}_1]$ is given by
\[\E[\textsc{Alg}_1] = \E[X] = \varphi n\cdot \frac{1}{n^2} + 1 \cdot \frac{1}{\sqrt{n}} = \frac{\varphi}{n} + \frac{1}{\sqrt{n}}\;.\]

In Section \ref{structureOnline}, we have shown that an optimal algorithm can be described by a dynamic program. The optimal decision in step $n-k-1$ is to select the current realization for all remaining steps if and only if the value is at least $\E[\textsc{Alg}_k]/k$. Otherwise, the value is only selected for the current step. 

We observe that an optimal algorithm has the following behavior on the constructed instance: If a random variable realizes at $\varphi n$, we select the value for all remaining steps and stop. If it realizes at $0$, we reject it. If it realizes at $1$, we select it for all remaining steps if and only if $1> \E[\textsc{Alg}_k] / k$ when $k$ steps follow.

Recall that $\E[\textsc{Alg}_k] / k$ is monotonically increasing by Theorem \ref{thm:structureOptimalOnlineAlgo}. Thus for sufficiently large $n$, there is a step $k^\prime$ for which $\E[\textsc{Alg}_{k^\prime}] / k^\prime \leq 1$ and $\E[\textsc{Alg}_k] / k > 1$ for all $k>k^\prime$. For our bound, we derive $k^\prime$ for the given distribution $F$. To do so, we use the fact that we know the exact behavior of the algorithm and start by showing a closed form for $\E[\textsc{Alg}_k]$ for all $k \leq k^\prime$ by induction over $k$.

\begin{restatable}{lemma}{lowersmallk}
Let $\E[\textsc{Alg}_k]$ denote the expected value of an optimal algorithm for the instance with $k$ steps and cumulative distribution function $F$ defined as above. For $k \leq k^\prime$, we can write
    \[\E[\textsc{Alg}_k] = \frac{\varphi n + n \sqrt{n}}{1 + n\sqrt{n}}{\left(k + \left(\left(1- \frac{1}{n^2} - \frac{1}{\sqrt{n}}\right)^k -1 \right) \frac{n^2 - n\sqrt{n} - 1}{1 + n\sqrt{n}}\right)}\,.\]
    \label{lem:lowersmallk}
\end{restatable}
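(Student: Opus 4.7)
The plan is to read off the behavior of the optimal algorithm on the three-point distribution $D$, turn this into a first-order linear recurrence for $\E[\textsc{Alg}_k]$, and then solve the recurrence in closed form.

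First I would argue that for every $k\le k^\prime$ the optimal algorithm applied to a fresh $k$-step instance handles the first realization $x_1$ as follows: it selects $x_1$ for all $k$ steps if $x_1\in\{1,\varphi n\}$ and rejects $x_1$ if $x_1=0$. The case $x_1=\varphi n$ is immediate from \cref{thm:structureOptimalOnlineAlgo}. For $x_1=1$ that theorem prescribes taking $x_1$ for all remaining $k-1$ steps precisely when $1\ge \E[\textsc{Alg}_{k-1}]/(k-1)$; the monotonicity of $\E[\textsc{Alg}_\cdot]/\cdot$ asserted in the same theorem, combined with the defining property $\E[\textsc{Alg}_{k^\prime}]/k^\prime\le 1$ of $k^\prime$, gives $\E[\textsc{Alg}_{k-1}]/(k-1)\le 1$ for every $k\le k^\prime$. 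It is exactly here that the hypothesis $k\le k^\prime$ enters; for larger $k$ the optimal action on value $1$ flips to a single-step commitment and the recurrence changes. Writing $p_0=1-\tfrac{1}{n^2}-\tfrac{1}{\sqrt n}$ and $c=\tfrac{\varphi}{n}+\tfrac{1}{\sqrt n}$, summing over the three possible values of $x_1$ gives
\begin{equation*}
    \E[\textsc{Alg}_k] \;=\; \tfrac{1}{n^2}\cdot k\varphi n + \tfrac{1}{\sqrt n}\cdot k + p_0\,\E[\textsc{Alg}_{k-1}] \;=\; c\,k + p_0\,\E[\textsc{Alg}_{k-1}],
\end{equation*}
with initial value $\E[\textsc{Alg}_0]=0$.

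Next I would solve this first-order linear recurrence in the standard way. A particular solution of the form $\alpha k + \beta$ is given by $\alpha = c/(1-p_0)$ and $\beta = -p_0\alpha/(1-p_0)$, and adding a homogeneous term $A\,p_0^k$ with $A$ fixed by $\E[\textsc{Alg}_0]=0$ yields
\begin{equation*}
    \E[\textsc{Alg}_k] \;=\; \frac{c}{1-p_0}\left(k + \frac{p_0}{1-p_0}\bigl(p_0^k-1\bigr)\right).
\end{equation*}
To cast this in the form stated in the lemma I would substitute the identities
\begin{equation*}
    1-p_0 = \frac{1+n\sqrt n}{n^2}, \qquad \frac{p_0}{1-p_0} = \frac{n^2-n\sqrt n-1}{1+n\sqrt n}, \qquad \frac{c}{1-p_0} = \frac{\varphi n+n\sqrt n}{1+n\sqrt n},
\end{equation*}
each of which follows by clearing denominators. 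The first makes the prefactor match, the second rewrites the coefficient of $p_0^k-1$, and the third pulls $p_0$ back into the form $1-1/n^2-1/\sqrt n$ inside the exponential factor; together they turn the displayed expression into the exact closed form claimed.

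The only genuine obstacle in the proof is the behavioral justification of the first paragraph, namely that over the whole range $k\le k^\prime$ a realized value of $1$ is consistently taken for all remaining steps. This rests entirely on the monotonicity of $\E[\textsc{Alg}_\cdot]/\cdot$ from \cref{thm:structureOptimalOnlineAlgo} together with the defining inequality for $k^\prime$. Once that step is secured, the remainder is bookkeeping for a linear recurrence and three routine substitutions.
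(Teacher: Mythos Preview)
Your argument is correct and follows the same underlying idea as the paper: both proofs establish the linear recurrence
\[
\E[\textsc{Alg}_k]=\tfrac{1}{n^2}\,\varphi n\,k+\tfrac{1}{\sqrt n}\,k+\Bigl(1-\tfrac{1}{n^2}-\tfrac{1}{\sqrt n}\Bigr)\E[\textsc{Alg}_{k-1}]
\]
from the observed behavior of the optimal rule on the three-point distribution, and then match the claimed closed form. The paper verifies the formula by a direct induction (plug the formula for $k-1$ into the recurrence and simplify to the formula for $k$), whereas you solve the recurrence from scratch via a particular-plus-homogeneous decomposition and then identify the constants. Your route is arguably cleaner in that it explains \emph{where} the closed form comes from rather than merely confirming it, while the paper's induction avoids having to recall the general solution method; the algebraic content is the same.

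Two tiny wording points, neither affecting correctness: the algorithm never literally ``rejects'' a value (it selects $x_1=0$ for one step, contributing zero and leaving the recurrence unchanged), and when $x_1=1$ the commitment is for all $k$ remaining steps, with the threshold comparison against $\tau_{k-1}=\E[\textsc{Alg}_{k-1}]/(k-1)$. Also, \cref{thm:structureOptimalOnlineAlgo} uses a strict inequality $x>\tau$, so at the boundary $1=\tau_{k-1}$ either action is optimal; this does not disturb the recurrence or the definition of $k'$.
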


\begin{proof}
    As $k \leq k^\prime$, we know the exact behavior of an optimal online algorithm. A random variable with value 0 will be selected for a single step and a random variable with strictly positive value will be selected for all remaining steps. We prove the claimed formula by an induction over $k$.\\
    
    For the base case, let $k=1$. We observe
    \[\E[\textsc{Alg}_1] = \frac{1}{n^2} \cdot \varphi n + \frac{1}{\sqrt{n}} = \frac{\varphi}{n} + \frac{1}{\sqrt{n}}.\]
    \revised{Checking the formula in the lemma for $k=1$ yields}
    \begin{align*}
  & \frac{\varphi n + n \sqrt{n}}{1 + n\sqrt{n}}{\left(1 + \left(\left(1- \frac{1}{n^2} - \frac{1}{\sqrt{n}}\right) -1 \right) \frac{n^2 - n\sqrt{n} - 1}{1 + n\sqrt{n}}\right)}\\
        &= \frac{\varphi n + n \sqrt{n}}{1 + n\sqrt{n}} + \left(\frac{\varphi n + n \sqrt{n}}{1 + n\sqrt{n}}\right) \left(\frac{-1-n\sqrt{n}}{n^2}\right) \left(\frac{n^2 - n\sqrt{n} - 1}{1 + n\sqrt{n}}\right)\\
        &= \frac{\varphi n + n \sqrt{n}}{1 + n\sqrt{n}} - \frac{(\varphi n + n \sqrt{n})(n^2 - n\sqrt{n}-1)}{n^2 (1 + n\sqrt{n})}\\
        &= \frac{\varphi n + n \sqrt{n}-\varphi n + \varphi \sqrt{n} + \frac{\varphi}{n} - n\sqrt{n} + n + \frac{1}{\sqrt{n}}}{1 + n\sqrt{n}}\\
        &= \frac{\varphi}{n} \left(\frac{n \sqrt{n} +1}{1 + n \sqrt{n}}\right) + \frac{1}{\sqrt{n}} \left(\frac{n \sqrt{n} +1}{1 + n \sqrt{n}}\right) \revised{= \E[\textsc{Alg}_1]}
\end{align*}
    which finishes the base case. We continue by assuming the formula for $k-1$ and proving it for $k\leq\min\{n,k^\prime\}$. We get
    \begin{align*}
        &\E[\textsc{Alg}_k] = \frac{1}{n^2} \cdot \varphi n \cdot k + \frac{1}{\sqrt{n}} k + \left(1-\frac{1}{\sqrt{n}} - \frac{1}{n^2}\right) \E[\textsc{Alg}_{k-1}]\\
        &= \frac{1}{n^2} \cdot \varphi n \cdot k + \frac{1}{\sqrt{n}} k\\
        & + \left(1-\frac{1}{\sqrt{n}} - \frac{1}{n^2}\right) \frac{\varphi n + n \sqrt{n}}{1 + n\sqrt{n}}{\left(k-1 + \left(\left(1- \frac{1}{n^2} - \frac{1}{\sqrt{n}}\right)^{k-1} -1 \right) \frac{n^2 - n\sqrt{n} - 1}{1 + n\sqrt{n}}\right)}  \\
        &= k\left(\frac{\varphi n + n \sqrt{n}}{n^2} + \frac{n^2-n\sqrt{n}-1}{n^2}\frac{\varphi n + n\sqrt{n}}{1 + n\sqrt{n}}\right)\\
        & + \frac{\varphi n + n\sqrt{n}}{1 + n\sqrt{n}} \left(- \left(1-\frac{1}{\sqrt{n}} - \frac{1}{n^2}\right) + \left(1-\frac{1}{\sqrt{n}}-\frac{1}{n^2}\right) \bigg(\bigg(1-\frac{1}{n^2}-\frac{1}{\sqrt{n}}\bigg)^{k-1}-1\bigg)\frac{n^2 - n\sqrt{n} - 1}{1 + n\sqrt{n}}\right)\\
        &= k \frac{\varphi n + n\sqrt{n}}{1 + n\sqrt{n}} \left(\frac{1+n\sqrt{n}}{n^2} + \frac{n^2-n\sqrt{n}-1}{n^2}\right)\\
        & + \frac{\varphi n + n\sqrt{n}}{1 + n\sqrt{n}}\; \frac{n^2 - n\sqrt{n} - 1}{1 + n\sqrt{n}} \left(- \frac{1+n\sqrt{n}}{n^2} + \left(1-\frac{1}{n^2}-\frac{1}{\sqrt{n}}\right)^{k} - \frac{n^2-n\sqrt{n} - 1}{n^2}\right)\\
        &= k \frac{\varphi n + n\sqrt{n}}{1 + n\sqrt{n}} +  \frac{\varphi n + n\sqrt{n}}{1 + n\sqrt{n}}\; \frac{n^2 - n\sqrt{n} - 1}{1 + n\sqrt{n}} \left(\left(1- \frac{1}{n^2} - \frac{1}{\sqrt{n}}\right)^k -1 \right)
    \end{align*}
    which finishes the proof.
\end{proof}

In order to calculate $k^\prime$, we aim to find the largest $k$ such that $\E[\textsc{Alg}_k] \leq k$. For a fixed $n$, we denote $k = a n$ for some $a \in [0,1]$. By Lemma~\ref{lem:lowersmallk}, we obtain
\begin{align*}
   \E[\textsc{Alg}_{an}] &= \frac{\varphi n + n \sqrt{n}}{1 + n\sqrt{n}}{\left(an + \left(\left(1- \frac{1}{n^2} - \frac{1}{\sqrt{n}}\right)^{an} -1 \right) \frac{n^2 - n\sqrt{n} - 1}{1 + n\sqrt{n}}\right)}\\
   &= a n\frac{\varphi n + n \sqrt{n}}{1 + n\sqrt{n}} + \left(\left(1- \frac{1}{n^2} - \frac{1}{\sqrt{n}}\right)^{an} -1 \right) \frac{\varphi n + n \sqrt{n}}{1 + n\sqrt{n}} \; \frac{n^2 - n\sqrt{n} - 1}{1 + n\sqrt{n}}\\
   &= \sqrt{n} \left( a \sqrt{n} \frac{\varphi n + n \sqrt{n}}{1 + n\sqrt{n}} + \left(\left(1- \frac{1}{n^2} - \frac{1}{\sqrt{n}}\right)^{an} -1 \right) \frac{\varphi n + n \sqrt{n}}{1 + n\sqrt{n}} \; \frac{n^2 - n\sqrt{n} - 1}{\sqrt{n} + n^2}\right)\;.   
\end{align*}
Thus, the condition $\E[\textsc{Alg}_k] \leq k$ can be written as
\begin{align*}
& \sqrt{n} \left( a \sqrt{n} \frac{\varphi n + n \sqrt{n}}{1 + n\sqrt{n}} + \left(\left(1- \frac{1}{n^2} - \frac{1}{\sqrt{n}}\right)^{an} -1 \right) \frac{\varphi n + n \sqrt{n}}{1 + n\sqrt{n}} \; \frac{n^2 - n\sqrt{n} - 1}{\sqrt{n} + n^2}\right) \leq an\\
\Leftrightarrow & a \sqrt{n} \frac{\varphi n + n \sqrt{n}}{1 + n\sqrt{n}} + \left(\left(1- \frac{1}{n^2} - \frac{1}{\sqrt{n}}\right)^{an} -1 \right) \frac{\varphi n + n \sqrt{n}}{1 + n\sqrt{n}} \; \frac{n^2 - n\sqrt{n} - 1}{\sqrt{n} + n^2} \leq a\sqrt{n}\\
\Leftrightarrow & \left(\left(1- \frac{1}{n^2} - \frac{1}{\sqrt{n}}\right)^{an} -1 \right) \frac{\varphi n + n \sqrt{n}}{1 + n\sqrt{n}} \; \frac{n^2 - n\sqrt{n} - 1}{\sqrt{n} + n^2} \leq a\sqrt{n} \left(1-\frac{\varphi n + n \sqrt{n}}{1 + n\sqrt{n}} \right)\\
\Leftrightarrow & \left(\left(1- \frac{1}{n^2} - \frac{1}{\sqrt{n}}\right)^{an} -1 \right) \frac{\varphi n + n \sqrt{n}}{1 + n\sqrt{n}} \; \frac{n^2 - n\sqrt{n} - 1}{\sqrt{n} + n^2} \leq a \left(\frac{\sqrt{n}-\varphi n\sqrt{n}}{1 + n\sqrt{n}} \right)\;.
\end{align*}
For $n \to \infty$, we get
\begin{align*}
 (0 -1) \cdot 1 \cdot 1 \leq -a \varphi \Leftrightarrow a \leq \frac{1}{\varphi}\;.   
\end{align*}
Thus, for each $\epsilon>0$, there is an $n'$ such that for all $n \geq n'$ the condition is fulfilled for all $k$ with $k \leq (\frac{1}{\varphi} - \epsilon) n$ and not fulfilled for all $k$ with $k \geq (\frac{1}{\varphi} + \epsilon) n$. Thus, $k^\prime \in [(\frac{1}{\varphi} - \epsilon)n, (\frac{1}{\varphi} + \epsilon)n]$.

Now we are ready to derive a formula for the expected value of $\textsc{Alg}$ as follows.
\begin{restatable}{lemma}{loweronl}
\label{lem:lower_onl}
Let $\E[\textsc{Alg}_n]$ denote the expected value of an optimal online algorithm for the instance with $n$ steps and cumulative distribution function $F$ defined above. Let $k^\prime$ be defined as described above. We have
    \begin{align*}\E[\textsc{Alg}_n] &= \sum_{i=1}^{n-k^\prime-1}\left(\left(1-\frac{1}{n^2}\right)^{i-1} \left(\frac{1}{n^2} n \varphi (n-i+1) + \frac{1}{\sqrt{n}}\right)\right) + \left(1-\frac{1}{n^2}\right)^{n-k^\prime-1} \E[\textsc{Alg}_{k^\prime}]\;.
    \end{align*}
Furthermore, it holds for arbitrarily small $\varepsilon > 0$ that
\[\revised{\lim_{n\to \infty} \frac{1}{n} \E[\textsc{Alg}_n] \leq  \frac{1}{\varphi} + \frac{\varphi - 1/\varphi}{2} + \epsilon \left(2-\epsilon \frac{\varphi}{2}\right)\;.}\]
\end{restatable}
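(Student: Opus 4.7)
The plan is to establish the formula for $\E[\textsc{Alg}_n]$ by direct case analysis on the algorithm's behavior, and then to evaluate each piece asymptotically using the bounds on $k'/n$ obtained just before the lemma statement.

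First, I would derive the closed-form sum. For $i \in \{1,\ldots,n-k'-1\}$ there are $n-i \geq k'+1 > k'$ steps after step $i$, so by the characterization derived via \cref{thm:structureOptimalOnlineAlgo} the optimal algorithm, upon seeing value $1$, takes it for only the current step, and it halts (committing for all remaining steps) exactly when it sees $\varphi n$. Hence ``still running at step $i$'' is equivalent to having never seen value $\varphi n$ in the first $i-1$ steps, which happens with probability $(1-1/n^2)^{i-1}$. Conditional on being live, the expected gain at step $i$ is $\tfrac{1}{n^2}\cdot n\varphi(n-i+1) + \tfrac{1}{\sqrt n}\cdot 1$. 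Once we reach step $n-k'$ (probability $(1-1/n^2)^{n-k'-1}$ of arriving there without having stopped), we enter the regime covered by \cref{lem:lowersmallk}, contributing $\E[\textsc{Alg}_{k'}]$ in expectation. Summing yields the stated identity.

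For the limit, I would divide the formula by $n$ and handle the two pieces separately. In the main sum, $(1-1/n^2)^{i-1}\le 1$ lets me upper-bound by dropping this factor. The $1/\sqrt n$ part contributes at most $(n-k'-1)/(n\sqrt n)=O(1/\sqrt n)=o(1)$. The remaining part equals $\frac{\varphi}{n^2}\sum_{i=1}^{n-k'-1}(n-i+1) = \frac{\varphi}{2}\!\left(1-(k'/n)^2\right)+O(1/n)$. Inserting the lower bound $k'/n \ge 1/\varphi-\epsilon$ together with the golden-ratio identity $1-1/\varphi^2 = 1/\varphi$ gives
\[
\frac{\varphi}{2}\bigl(1-(k'/n)^2\bigr) \;\le\; \tfrac{1}{2}+\epsilon-\tfrac{\varphi}{2}\epsilon^2+o(1).
\]
For the tail, $(1-1/n^2)^{n-k'-1}\to 1$, and from \cref{lem:lowersmallk} the prefactor $\frac{\varphi n+n\sqrt n}{1+n\sqrt n}\to 1$, while the additive correction $\bigl((1-1/n^2-1/\sqrt n)^{k'}-1\bigr)\cdot\frac{n^2-n\sqrt n-1}{1+n\sqrt n}$ is $\Theta(\sqrt n)$ in magnitude (it tends to $-\sqrt n(1+o(1))$ because the power goes to zero when $k'\sim n/\varphi$), and thus its contribution to $\E[\textsc{Alg}_{k'}]/n$ is $O(1/\sqrt n)$. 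Hence $\E[\textsc{Alg}_{k'}]/n \to k'/n$, which is upper bounded by $1/\varphi+\epsilon$. Adding the two contributions yields the claimed bound $1/\varphi+1/2+\epsilon(2-\epsilon\varphi/2)$.

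The main obstacle will be the careful bookkeeping of the vanishing error terms: in particular verifying that $(1-1/n^2-1/\sqrt n)^{k'}\to 0$ at a rate fast enough that, when multiplied by the $\Theta(\sqrt n)$ fraction from \cref{lem:lowersmallk}, the contribution to $\E[\textsc{Alg}_{k'}]/n$ really is $o(1)$, and checking that the uniform $(1-1/n^2)^{i-1}\le 1$ used in the sum is tight enough to give the correct leading constant. Once this is done, the algebraic manipulation relying on $1-1/\varphi^2=1/\varphi$ is what produces the clean form of the $\epsilon$ correction.
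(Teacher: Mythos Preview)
Your proposal is correct and reaches the same bound as the paper, but by a cleaner route for the main sum. The paper retains the weights $(1-1/n^2)^{i-1}$, evaluates the resulting geometric-type sums in closed form, and then invokes a separate technical lemma (three L'Hospital computations) to extract the limits $\lim n\bigl((1-1/n^2)^{bn}-1\bigr)=-b$ and $\lim n\bigl(n((1-1/n^2)^{bn}-1)+b\bigr)=b^2/2$; this eventually collapses to $\varphi b(1-b/2)=\tfrac{\varphi}{2}(1-(1-b)^2)$ with $b=1-1/\varphi+\epsilon$. Your shortcut $(1-1/n^2)^{i-1}\le 1$ jumps straight to the arithmetic sum $\tfrac{\varphi}{2}(1-(k'/n)^2)$, which is the same expression in the other parametrization, and the golden-ratio identity $1-1/\varphi^2=1/\varphi$ finishes it. This is legitimate because the dropped weights are $1-O(1/n)$ uniformly, so the loss is $o(1)$ after dividing by $n$; and since you only need an upper bound, you do not even have to quantify that loss. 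For the tail term the two arguments are essentially the same, though note that you can shorten yours further: the very definition of $k'$ gives $\E[\textsc{Alg}_{k'}]\le k'$ directly, so $\E[\textsc{Alg}_{k'}]/n\le k'/n\le 1/\varphi+\epsilon$ without ever unpacking \cref{lem:lowersmallk}. The only cosmetic point to tidy is that $k'/n$ need not converge, so phrase the argument with $\limsup$ (using $\liminf k'/n\ge 1/\varphi-\epsilon$ for the sum and $\limsup k'/n\le 1/\varphi+\epsilon$ for the tail); the paper sidesteps this by fixing the auxiliary constants $a$ and $b$ first.
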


\begin{proof}
In the proof of Lemma \ref{lem:lower_onl}, we will need the values of some limits, that we prove afterwards in Lemma \ref{lem:technicalLimits} for a better readability.

For the first statement, we rewrite
\[\E[\textsc{Alg}_n] = \E\left[\sum_{i=1}^n X_i t_i\right] = \sum_{i=1}^n \E[X_i t_i] = \sum_{i=1}^{n-k^\prime-1} \E[X_i t_i] + \sum_{i=n-k^\prime}^{n} \E[X_i t_i] \,. \]

For the first of these two sums, note that as long as the number of remaining steps is larger than $k^\prime$, we have already observed that an optimal online algorithm will select a random variable with value $\varphi n$ for all steps and select all other random variables only for a single step. Thus, for some step $i \in \{1, \dots, n-k^\prime-1\}$, there is a probability of $(1-1/n^2)^{i-1}$ that none of the previous random variables was selected for the whole period. In this case, we see a random variable with value $\varphi n$ with probability $1/n^2$ and select it for $n-i+1$ steps. With probability $1/\sqrt{n}$, we see a random variable with value $1$ and select it for one step.

Finally, the second sum equals zero, in case the algorithm finds a realization with value $\varphi n$ in the first $n-k^\prime-1$ steps. Otherwise, with probability $\left(1-1/n^2\right)^{n-k^\prime-1}$, we get the expected revenue of the algorithm with $k^\prime$ remaining steps, $\E[\textsc{Alg}_{k^\prime}]$, which has been characterized in Lemma \ref{lem:lowersmallk} already.

For the second statement of the lemma, we set $a = 1- \frac{1}{\varphi} - \epsilon$ for some small $\varepsilon > 0$, i.e., for $n$ large enough we have $k^\prime \leq (1-a)n$. We get
\begin{align*}
\lim_{n \to \infty} &\frac{1}{n} \left(1-\frac{1}{n^2}\right)^{n-k^\prime-1}  \E[\textsc{Alg}_{k^\prime}] = \lim_{n \to \infty} \frac{1}{n} \left(1-\frac{1}{n^2}\right)^{n-k^\prime}  \E[\textsc{Alg}_{k^\prime}]\\
&\leq \lim_{n \to \infty} \frac{1}{n} \left(1-\frac{1}{n^2}\right)^{an}  \E[\textsc{Alg}_{(1-a)n}]\\
&= \lim_{n \to \infty}\frac{1}{n} \left(1-\frac{1}{n^2}\right)^{an} \frac{\varphi n + n \sqrt{n}}{1 + n\sqrt{n}}{\left((1-a)n + \left(\left(1- \frac{1}{n^2} - \frac{1}{\sqrt{n}}\right)^{(1-a)n} -1 \right) \frac{n^2 - n\sqrt{n} - 1}{1 + n\sqrt{n}}\right)}\\
&= \lim_{n \to \infty} (1-a) \underbrace{\left(1-\frac{1}{n^2}\right)^{an}}_{\to 1 \text{ by Lemma \ref{lem:technicalLimits} (i)}} \underbrace{\frac{\varphi n + n \sqrt{n}}{1 + n\sqrt{n}}}_{\to 1}\\
&\qquad + \underbrace{\frac{1}{n} \left(1-\frac{1}{n^2}\right)^{an}  \frac{\left(\varphi n + n \sqrt{n}\right)\left(n^2 - n\sqrt{n} - 1\right)}{\left(1 + n\sqrt{n}\right)\left(1 + n\sqrt{n}\right)}}_{\to 0, \text{ as in } \Theta\left(\frac{\sqrt{n}}{n}\right)}\underbrace{\left(\left(1- \frac{1}{n^2} - \frac{1}{\sqrt{n}}\right)^{(1-a)n} -1 \right)}_{\text{in } \mathcal{O}(1), \text{ as } 1-\frac{1}{n^2} - \frac{1}{\sqrt{n}} \in [0,1] \text{ for } n \text{ large enough}}\\
&= 1-a = \frac{1}{\varphi} + \epsilon\;.
\end{align*}
Additionally, for $b=(1-\frac{1}{\varphi} + \epsilon)$ we have $k^\prime + 1 \geq (1-b)n$ for $n$ large enough. Thus,
\begin{align*}
\lim_{n \to \infty} &\frac{1}{n} \sum_{i=1}^{n-k^\prime-1}\left(\left(1-\frac{1}{n^2}\right)^{i-1} \left(\frac{1}{n^2} n \varphi (n-i+1) + \frac{1}{\sqrt{n}}\right)\right)\\
&\leq \lim_{n \to \infty} \frac{1}{n} \sum_{i=1}^{n-(1-b)n}\left(\left(1-\frac{1}{n^2}\right)^{i-1} \left(\frac{1}{n^2} n \varphi (n-i+1) + \frac{1}{\sqrt{n}}\right)\right)\\
&= \lim_{n \to \infty} \left(\frac{1}{n} \sum_{i=1}^{bn}\left(1-\frac{1}{n^2}\right)^{i-1} \left(\frac{1}{\sqrt{n}} + \varphi \frac{n+1}{n}\right) - \varphi \frac{1}{n^2} \sum_{i=1}^{bn} \left(\left(1-\frac{1}{n^2}\right)^{i-1} i\right)\right)\\
&= \lim_{n \to \infty} \left(\left(\frac{1}{n\sqrt{n}} + \varphi \frac{n+1}{n^2}\right) \left(\sum_{i=1}^{bn} \left(1-\frac{1}{n^2}\right)^{i-1}\right) - \varphi \frac{1}{n^2} \sum_{i=1}^{bn} \left(\left(1-\frac{1}{n^2}\right)^{i-1} i\right)\right)\\
&= \lim_{n \to \infty} \left(\left(\frac{1}{n\sqrt{n}} + \varphi \frac{n+1}{n^2}\right) \left(n^2\left( 1-\left(1-\frac{1}{n^2}\right)^{bn}\right)\right) - \varphi \left(n^2 - \left(1-\frac{1}{n^2}\right)^{bn} \left(n^2+nb \right)\right)\right)\\
&= \lim_{n \to \infty} \Bigg(\left(\frac{1}{\sqrt{n}} + \varphi \frac{n+1}{n}\right) \left(n\left(1-\left(1-\frac{1}{n^2}\right)^{bn}\right)\right)\\
& \quad + \varphi n \left( n \left(\left(1-\frac{1}{n^2}\right)^{bn} -1 \right) +b \right) + \varphi b n \left(\left(1-\frac{1}{n^2}\right)^{bn} -1\right) + n^2 \varphi - \varphi b n + \varphi b n - n^2 \varphi \Bigg)\\
&= \lim_{n \to \infty} \Bigg(\underbrace{\left(\frac{1}{\sqrt{n}} + \varphi \frac{n+1}{n}\right)}_{\to \varphi} \underbrace{\left(n\left(1-\left(1-\frac{1}{n^2}\right)^{bn}\right)\right)}_{\to b \text{ by Lemma \ref{lem:technicalLimits} (ii)}}\\
& \quad + \underbrace{\varphi n \left( n \left(\left(1-\frac{1}{n^2}\right)^{bn} -1 \right) +b \right)}_{\to \varphi \frac{b^2}{2} \text{ by Lemma \ref{lem:technicalLimits} (iii)}} + \underbrace{\varphi b n \left(\left(1-\frac{1}{n^2}\right)^{bn} -1\right)}_{\to \varphi b (-b) \text{ by Lemma \ref{lem:technicalLimits} (ii)}}\Bigg)\\
&= \varphi b + \varphi \frac{b^2}{2} - \varphi b^2 = \varphi b \left(1-\frac{b}{2}\right) = \varphi \left(1-\frac{1}{\varphi} + \epsilon\right) \left(\frac{1+ \frac{1}{\varphi} - \epsilon}{2}\right) = \frac{\varphi}{2} \left(1-\left(\frac{1}{\varphi} - \epsilon\right)^2\right)\\
&= \frac{\varphi}{2} - \frac{\varphi}{2} \left(\frac{1}{\varphi^2} - 2 \epsilon \frac{1}{\varphi} + \epsilon^2\right) = \frac{\varphi}{2} - \frac{1}{2\varphi} + \epsilon \left(1-\epsilon \frac{\varphi}{2}\right) = \frac{\varphi - 1/\varphi}{2} + \epsilon \left(1-\epsilon \frac{\varphi}{2}\right)\;.
\end{align*}

Together with the other part of the limit we obtain
\[\revised{\lim_{n\to \infty} \frac{1}{n} \E[\textsc{Alg}_n] \leq \left(\frac{1}{\varphi} + \epsilon\right) + \left(\frac{\varphi - 1/\varphi}{2} + \epsilon \left(1-\epsilon \frac{\varphi}{2}\right)\right) = \frac{1}{\varphi} + \frac{\varphi - 1/\varphi}{2} + \epsilon \left(2-\epsilon \frac{\varphi}{2}\right)\;,}\]
which finishes the proof.
\end{proof}

Next, we deliver a proof of the technical results referenced in the preceding proof.

\begin{restatable}{lemma}{technicalLimits}
For constant $a \geq 0$, the following three statements are true:
\begin{enumerate}
\item[(i)] $\lim_{n \to \infty} \left(1-\frac{1}{n^2}\right)^{an} = 1$,
\item[(ii)] $\lim_{n \to \infty} n \left(\left(1-\frac{1}{n^2}\right)^{an} -1 \right) = -a$,
\item[(iii)] $\lim_{n \to \infty} n \left( n \left(\left(1-\frac{1}{n^2}\right)^{a n} -1 \right) + a\right) = \frac{a^2}{2}$.
\end{enumerate}
\label{lem:technicalLimits}
\end{restatable}

\begin{proof}
In order to prove $(i)$, observe that
\begin{align*}
 \lim_{n\to \infty} \left(1-\frac{1}{n^2}\right)^{a n} &= \lim_{n\to \infty} \exp \left( \ln \left(\left(1-\frac{1}{n^2}\right)^{a n}\right)\right) =  \lim_{n\to \infty} \exp \left(a n \ln \left(\left(1-\frac{1}{n^2}\right)\right)\right)\\
 &= \exp \left( \lim_{n\to \infty} a n \ln \left(\left(1-\frac{1}{n^2}\right)\right)\right) = \exp \left( a \lim_{n\to \infty} \frac{\ln \left(\left(1-\frac{1}{n^2}\right)\right)}{\frac{1}{n}}\right)\\
 &\overset{\text{l'H\^opital}}{=} \exp \left( a \lim_{n\to \infty} \frac{\frac{2}{n^3-n}}{-\frac{1}{n^2}}\right) = \exp \left( a \lim_{n\to \infty} \frac{-2 n^2}{n^3-n}\right) = 1\;.
 \end{align*}
 
Before showing $(ii)$, we calculate
\[\frac{\partial}{\partial n} \left(1-\frac{1}{n^2}\right)^{a n} = \frac{a(1 - \frac{1}{n^2})^{a n} (2 + (n^2-1) \ln(1 - \frac{1}{n^2}))}{(n^2-1)}\;,\]
and
\begin{equation}\label{eq:lnlimit}
    \lim_{n \to \infty} {(n^2-1) \ln \left(1-\frac{1}{n^2}\right)} =  \ln \lim_{n \to \infty} {\left(1-\frac{1}{n^2}\right)^{n^2-1}} = \ln \frac{1}{e} = -1\;.
\end{equation}
Together with $(i)$, we get
\begin{align*}
 \lim_{n\to \infty} n\left(\left(1-\frac{1}{n^2}\right)^{a n} - 1\right) &= \lim_{n\to \infty} \frac{\left(\left(1-\frac{1}{n^2}\right)^{a n} - 1\right)}{\frac{1}{n}} \overset{\text{l'H\^opital}}{=} \lim_{n\to \infty} \frac{\frac{a(1 - \frac{1}{n^2})^{a n} (2 + (n^2-1) \ln(1 - \frac{1}{n^2}))}{(n^2-1)}}{-\frac{1}{n^2}}\\
 &=  \lim_{n\to \infty} -\frac{n^2}{n^2-1} a\left(1 - \frac{1}{n^2}\right)^{a n} \left(2 + (n^2-1) \ln\left(1 - \frac{1}{n^2}\right)\right)\\
 &= -1(a \cdot 1 (2+(-1))) = -a\;,
 \end{align*}
which finishes the proof of $(ii)$.

In order to show $(iii)$, we need the following limit.

\begin{align*}
\lim_{n\to \infty}\left( 2n + 2n^3 \ln\left(1-\frac{1}{n^2}\right)\right) &= 2 \lim_{n\to \infty} \frac{1 + n^2 \ln\left(1-\frac{1}{n^2}\right)}{\frac{1}{n}}
\end{align*}
Analogously to Equation \ref{eq:lnlimit}, $\lim_{n \to \infty}n^2 \ln\left(1-\frac{1}{n^2}\right) = -1$, so we get by l'H\^opital's rule
\begin{align*}
&= 2 \lim_{n\to \infty} \frac{2n\left(\frac{1}{n^2-1} + \ln\left(1-\frac{1}{n^2}\right)\right)}{-\frac{1}{n^2}} = 2 \lim_{n\to \infty}\left( -\frac{2n^3}{n^2-1} - 2n^3\ln\left(1-\frac{1}{n^2}\right)\right)\\
&= -4 \lim_{n\to \infty}\left( \frac{n^3}{n^2-1} + n^3\ln\left(1-\frac{1}{n^2}\right)\right) = -4 \lim_{n\to \infty}\left( n + n^3\ln\left(1-\frac{1}{n^2}\right) + \frac{n}{n^2-1}\right)\;.
\end{align*}
Suppose, $\lim_{n\to \infty} n + n^3 \ln\left(1-\frac{1}{n^2}\right)=c$ for some $c \in \mathbb{R} \cup \{-\infty, \infty\}$. Then, by limit rules we get
\[2c = -4 (c + 0)\;,\]
i.e., we conclude
\begin{equation}\label{eq:helplimit1}
    \lim_{n\to \infty} 2n + 2n^3 \ln\left(1-\frac{1}{n^2}\right) = 0\;.
\end{equation}
Additionally, we will first show that
\begin{equation}\label{eq:helplimit2}
    \lim_{n\to \infty}n\left(1 + \left(1-\frac{1}{n^2}\right)^{a n} (n^2-1) \ln\left(1-\frac{1}{n^2}\right)\right) = a\;.
\end{equation}
In fact,
\begin{align*}
\lim_{n\to \infty}&n\left(1 + \left(1-\frac{1}{n^2}\right)^{a n} (n^2-1) \ln\left(1-\frac{1}{n^2}\right)\right) = \lim_{n\to \infty}\left(\frac{1 + \left(1-\frac{1}{n^2}\right)^{a n} (n^2-1) \ln\left(1-\frac{1}{n^2}\right)}{\frac{1}{n}}\right)\;.
\end{align*}
Since $\lim_{n \to \infty} \left(1-\frac{1}{n^2}\right)^{a n} = 1$ by (i) and $\lim_{n \to \infty} {(n^2-1) \ln \left(1-\frac{1}{n^2}\right)} = -1$ due to Equation \eqref{eq:lnlimit}, we can apply l'H\^opital's rule and obtain
\begin{align*}
&= \lim_{n\to \infty} \frac{\frac{1}{n}\left(1-\frac{1}{n^2}\right)^{a n} \left(2+2n(a+n)\ln\left(1-\frac{1}{n^2}\right) + a n \left(n^2-1\right)\left(\ln\left(1-\frac{1}{n^2}\right)\right)^2\right)}{-\frac{1}{n^2}}\\
&= \lim_{n\to \infty} - \left(1-\frac{1}{n^2}\right)^{a n}\left(2n+2n^2(a+n)\ln\left(1-\frac{1}{n^2}\right) + a n^2 \ln \left(1-\frac{1}{n^2}\right) (n^2-1) \ln \left(1-\frac{1}{n^2}\right)\right)\\
&= \lim_{n\to \infty} \underbrace{- \left(1-\frac{1}{n^2}\right)^{a n}}_{\to -1 \text{ by (i)}}\\
&\qquad \cdot \left(\underbrace{2n+2n^3\ln\left(1+\frac{1}{n^2}\right)}_{\to 0 \text{ by Equation \eqref{eq:helplimit1}}} + \underbrace{2a n^2 \ln\left(1-\frac{1}{n^2}\right)}_{\to -2a} + \underbrace{a n^2 \ln \left(1-\frac{1}{n^2}\right)}_{\to -a} \underbrace{(n^2-1) \ln \left(1-\frac{1}{n^2}\right)}_{\to -1}\right)\\
&=a\;.
\end{align*}
We proceed by showing $(iii)$.
\begin{align*}
  &\lim_{n \to \infty} n \left( n \left(\left(1-\frac{1}{n^2}\right)^{a n} -1 \right) + a\right) = \lim_{n\to \infty} n^2 \left(\left(1-\frac{1}{n^2}\right)^{a n} - 1 + \frac{a}{n}\right)\\
  &= \lim_{n\to \infty} \frac{\left(1-\frac{1}{n^2}\right)^{a n} - 1 + \frac{a}{n}}{\frac{1}{n^2}} \overset{\text{l'H\^opital}}{=} \lim_{n\to \infty} \frac{\frac{\left(1 - \frac{1}{n^2}\right)^{a n} \left(2a + \left(n^2-1\right) a\ln\left(1 - \frac{1}{n^2}\right)\right)}{(n^2-1)} - \frac{a}{n^2}}{-\frac{2}{n^3}}\\
  &=  \frac{1}{2} \lim_{n\to \infty} -\frac{n^3}{n^2-1} \left(1 - \frac{1}{n^2}\right)^{a n} \left(2a + \left(n^2-1\right) a\ln\left(1 - \frac{1}{n^2}\right)\right) + \frac{a n^3}{n^2}\\
  &= \frac{1}{2} \lim_{n \to \infty} \frac{n^2}{n^2-1} \left(-2 a n\left(1-\frac{1}{n^2}\right)^{a n} - a n\left(1-\frac{1}{n^2}\right)^{a n} \left(n^2-1 \right) \ln \left(1-\frac{1}{n^2}\right) + a\left(n-\frac{1}{n}\right)\right)\\
  &= \frac{1}{2} \lim_{n\to \infty} \frac{n^2}{n^2-1} \left(-\frac{a}{n} - 2 a n\left(\left(1-\frac{1}{n^2}\right)^{a n} -1\right) - a n -a n\left(1-\frac{1}{n^2}\right)^{a n} \left(n^2-1\right) \ln \left(1-\frac{1}{n^2}\right)\right)\\
  &= \frac{1}{2} \lim_{n\to \infty} \underbrace{\frac{n^2}{n^2-1}}_{\to 1} \left(\underbrace{-\frac{a}{n}}_{\to 0} - \underbrace{2 a n\left(\left(1-\frac{1}{n^2}\right)^{an} -1\right)}_{\to -2a^2 \text{ by $(ii)$}} - \underbrace{an\left(1 + \left(1-\frac{1}{n^2}\right)^{a n} (n^2-1) \ln\left(1-\frac{1}{n^2}\right)
  \right)}_{\to a^2 \text{ by Equation \eqref{eq:helplimit2}}}\right)\\
  &=\frac{a^2}{2}\;
  \end{align*}
  which finishes the proof of (3) and thus Lemma \ref{lem:technicalLimits}.
\end{proof}

Next, we turn our attention to the performance of the optimal offline procedure. The following lemma provides an exact formula for the expected profit for the omniscient prophet for the given example instance.
\begin{restatable}{lemma}{loweroff}
\label{lem:lower_off}
Let $\E[\textsc{Opt}_n]$ denote the expected profit of the offline algorithm for the instance with $n$ steps and cumulative distribution function $F$ defined above. We have
    \[\E[\textsc{Opt}_n] = \sum_{i=1}^{n}\left(\varphi n \left(1-\left(1-\frac{1}{n^2}\right)^i\right) + \left(1-\frac{1}{n^2}\right)^i - \left(1-\frac{1}{\sqrt{n}} - \frac{1}{n^2}\right)^i \right)\;. \]
Furthermore, it holds that
\[\lim_{n\to \infty}{\frac{1}{n} \E[\textsc{Opt}_n]} = \frac{\varphi}{2} + 1\;.\]
\end{restatable}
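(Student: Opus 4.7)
The plan is to treat the closed form and the limit separately. For the closed form, I would use the fact noted in the proof of \cref{lem:simpleupp} that
\[ \E[\textsc{Opt}_n] = \sum_{i=1}^n \E[\max\{x_1,\dots,x_i\}], \]
the offline algorithm achieving this by starting a fresh commitment whenever a new running maximum appears, and conversely never being able to hold a value not yet realized. Since $D$ is supported on $\{0, 1, \varphi n\}$, a three-case analysis on the value of the running maximum yields, by independence,
\begin{align*}
\Prob[\max\{x_1,\dots,x_i\} \geq \varphi n] &= 1 - (1-1/n^2)^i, \\
\Prob[\max\{x_1,\dots,x_i\} = 1] &= (1-1/n^2)^i - (1-1/\sqrt{n}-1/n^2)^i.
\end{align*}
Plugging these into $\E[\max\{x_1,\dots,x_i\}] = \varphi n \cdot \Prob[\max \geq \varphi n] + 1 \cdot \Prob[\max = 1]$ and summing over $i$ produces the claimed formula.

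For the limit, I would split $\tfrac{1}{n}\E[\textsc{Opt}_n]$ into three pieces corresponding to the three summands and evaluate each via the geometric-series closed form with ratios $r = 1 - 1/n^2$ and $s = 1 - 1/\sqrt{n} - 1/n^2$. The $r$-piece requires a second-order expansion $r^n = \exp(n\ln(1-1/n^2)) = 1 - 1/n + O(1/n^2)$, from which $\sum_{i=1}^n (1 - r^i) = n - n^2(1-1/n^2)(1-r^n) = 1/2 + O(1/n)$; multiplying by $\varphi$ gives the contribution $\varphi/2$. The same expansion gives $\tfrac{1}{n}\sum_i r^i \to 1$. For the $s$-piece, a crude bound suffices: $1-s = \Theta(1/\sqrt{n})$ and $s^n$ decays exponentially in $\sqrt{n}$, so $\sum_i s^i \leq s/(1-s) = O(\sqrt{n})$ and hence $\tfrac{1}{n}\sum_i s^i \to 0$. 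Combining the three pieces gives $\varphi/2 + 1 - 0 = \varphi/2 + 1$.

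The main subtlety is the second-order expansion of $r^n$. The naive first-order estimate $(1-1/n^2)^n \to 1$ would make $\sum_i (1-r^i)$ appear to tend to $0$, and the $\varphi n$ prefactor would then give an erroneous zero contribution to the limit. One must retain the $-1/n$ correction in $r^n$ to expose the constant $1/2$, which is the sole source of the $\varphi/2$ term in the final answer. Once this Taylor expansion is pinned down to the correct order, everything else reduces to routine manipulations of geometric series.
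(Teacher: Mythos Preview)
Your proposal is correct. The closed-form argument is identical to the paper's: both identify $\E[\textsc{Opt}_n]=\sum_i \E[\max\{x_1,\dots,x_i\}]$ and compute the running maximum via the three-point support of $D$.

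For the limit, your route is a mild streamlining of the paper's. The paper groups the $\varphi n(1-r^i)$ and $r^i$ terms together, rewrites everything through the geometric-series closed form, and then invokes its technical \cref{lem:technicalLimits} (parts (ii) and (iii), proved via l'Hospital) to evaluate $\lim n\big((1-1/n^2)^n-1\big)=-1$ and $\lim n\big(n((1-1/n^2)^n-1)+1\big)=1/2$. You instead keep the three summands separate and obtain the same second-order information directly from the Taylor expansion $r^n=\exp(n\ln(1-1/n^2))=1-1/n+O(1/n^2)$, which yields $\sum_{i=1}^n(1-r^i)=1/2+O(1/n)$ without appealing to an auxiliary lemma. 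Both approaches hinge on exactly the point you flag as the main subtlety---the second-order behavior of $(1-1/n^2)^n$---and your treatment of the $s$-sum via $\sum s^i\le s/(1-s)=O(\sqrt{n})$ matches the paper's bound. Your version is slightly more self-contained; the paper's reuses machinery already needed for \cref{lem:lower_onl}.
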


\begin{proof}
Note that in all steps the selected random variable is the one that has the best value seen so far. Thus, we get $\varphi n$ in step $i$ unless we have not seen a random variable with value $\varphi n$ in all steps $1$ to $i$, which happens with probability $(1-1/n^2)^i$. Similarly, with probability $(1-1/n^2)^i - (1-1/\sqrt{n} - 1/n^2)^i$ we have not seen a random variable with value $\varphi n$ and not seen any 0-items, which implies that the largest value seen so far is equal to 1.

It remains to show the second statement. In order to do so, we first observe
\begin{align*}
\lim_{n \to \infty} {\frac{1}{n}\sum_{i=1}^n \left(1-\frac{1}{\sqrt{n}} - \frac{1}{n^2}\right)^i }&= \lim_{n \to \infty} \frac{1}{n} \frac{\left(1-\left(1-\frac{1}{\sqrt{n}}-\frac{1}{n^2}\right)^n\right) \left(n^2-n^{3/2}-1\right)}{n^{3/2}+1}\\
& \leq \lim_{n \to \infty} \frac{1}{n} \frac{n^2-n^{3/2}-1}{n^{3/2}+1} = 0\;.
\end{align*}
Note that $1-\frac{1}{\sqrt{n}} - \frac{1}{n^2}$ is positive for $n \geq 2$, and thus $\lim_{n \to \infty} {\frac{1}{n}\sum_{i=1}^n \left(1-\frac{1}{\sqrt{n}} - \frac{1}{n^2}\right)^i } \geq 0$. Next, we observe that $\lim_{n \to \infty} {\frac{1}{n}\sum_{i=1}^n \left(1-\frac{1}{\sqrt{n}} - \frac{1}{n^2}\right)^i } =0$. Now, we are ready to prove the second statement of the lemma. By the following technical reformulation we obtain
\begin{align*}
\lim_{n \to \infty} \frac{1}{n} \E[\textsc{Opt}_n] &= \sum_{i=1}^{n}\left(\varphi n \left(1-\left(1-\frac{1}{n^2}\right)^i\right) + \left(1-\frac{1}{n^2}\right)^i - \left(1-\frac{1}{\sqrt{n}} - \frac{1}{n^2}\right)^i \right)\\
&=\lim_{n \to \infty} \frac{1}{n}\sum_{i=1}^{n}\left(\varphi n \left(1-\left(1-\frac{1}{n^2}\right)^i\right) + \left(1-\frac{1}{n^2}\right)^i - \left(1-\frac{1}{\sqrt{n}} - \frac{1}{n^2}\right)^i \right)\\
&=\lim_{n \to \infty} \varphi n - \varphi \sum_{i=1}^{n}\left(1-\frac{1}{n^2}\right)^i + \frac{1}{n} \sum_{i=1}^n\left(1-\frac{1}{n^2}\right)^i - \frac{1}{n}\sum_{i=1}^n \left(1-\frac{1}{\sqrt{n}} - \frac{1}{n^2}\right)^i \\
&=\lim_{n \to \infty} \varphi n + \left(\frac{1}{n} - \varphi\right)  \sum_{i=1}^{n}\left(1-\frac{1}{n^2}\right)^i - \frac{1}{n}\sum_{i=1}^n \left(1-\frac{1}{\sqrt{n}} - \frac{1}{n^2}\right)^i \\
&=\lim_{n \to \infty} \varphi n + \left(\frac{1}{n} - \varphi\right) \left(1 - \left(1-\frac{1}{n^2}\right)^n\right) \left(n^2-1\right)\\
&=\lim_{n \to \infty} \varphi n + \left(1-\left(1-\frac{1}{n^2}\right)^n\right) \left(n - \frac{1}{n} - n^2\varphi + \varphi\right)\\
&=\lim_{n \to \infty} \varphi n + n - \frac{1}{n} - n^2 \varphi + \varphi + \left(1-\frac{1}{n^2}\right)^n \left(n^2 \varphi - n - \varphi + \frac{1}{n}\right)\\
&= \lim_{n \to \infty} \underbrace{n\varphi \left(n \left(\left(1-\frac{1}{n^2}\right)^n - 1\right) + 1\right)}_{\to \frac{\varphi}{2} \text{by Lemma \ref{lem:technicalLimits} (iii) for }a=1} \underbrace{- n\left(\left(1-\frac{1}{n^2}\right)^n -1 \right)}_{\to 1 \text{ by Lemma \ref{lem:technicalLimits} (ii) for }a=1}\\
&\qquad \underbrace{- \varphi \left( \left( 1-\frac{1}{n^2}\right)^n - 1\right)}_{\to 0} + \underbrace{\frac{1}{n} \left(\left(1-\frac{1}{n^2}\right)^n -1 \right)}_{\to 0}\\
&= \frac{\varphi}{2} + 1\;,
\end{align*}
finishing the proof of the lemma.
\end{proof}

We combine the two lemmas to derive the desired result.

\begin{proof}[Proof of Theorem \ref{thm:lower}]
By Lemma \ref{lem:lower_onl} and Lemma \ref{lem:lower_off}, it holds for arbitrarily small $\varepsilon > 0$ that
\begin{align*}
\revised{\lim_{n \to \infty} \frac{\E[\textsc{Alg}_n]}{\E[\textsc{Opt}_n]} \leq \frac{\frac{1}{\varphi} + \frac{\varphi - 1/\varphi}{2} + \epsilon \left(2-\epsilon \frac{\varphi}{2}\right)}{\frac{\varphi}{2} + 1}\;.}
\end{align*}
For $\epsilon \to 0$ this ratio is arbitrarily close to
\begin{align*}
\revised{\frac{\frac{1}{\varphi} + \frac{\varphi - 1/\varphi}{2}}{\frac{\varphi}{2} + 1} \;,}
\end{align*}
\revised{which, assuming the parameter $\varphi$ is positive, is minimized for the choice of $\varphi$ as the golden ratio. Due to its property $\varphi - \frac{1}{\varphi} = 1$, we finally get the bound}
\begin{align*}
\revised{\lim_{n \to \infty} \frac{\E[\textsc{Alg}_n]}{\E[\textsc{Opt}_n]} \leq \frac{\frac{1}{\varphi} + \frac{1}{2}}{\frac{\varphi}{2} + 1} = \frac{2\left(2 + \varphi \right)}{2 \varphi \left(\varphi + 2\right)} = \frac{1}{\varphi}\;.}
\end{align*}
\end{proof}

In the classical (non-over time) setting, a single-threshold procedure suffices to achieve the optimal prophet inequality~\cite{samuel-cahn83}. Interestingly, in our setting, we can separate simple single-threshold strategies from our multi-threshold online procedure shown in Section \ref{sec:betterlower}.\vspace{6pt plus 3pt minus 2pt}

\begin{remark}
\revised{\textit{Any deterministic online procedure that bases its decision in the current step only on the drawn value and not on the number of remaining steps achieves a strictly worse prophet inequality than $0.570$ for $n\to \infty$. It will turn out that this is strictly worse than the prophet inequality of the procedure shown in Section \ref{sec:betterlower}.}}\vspace{6pt plus 3pt minus 2pt}
\end{remark}

To see this, consider a best-possible online procedure that bases all decisions only on the drawn value $x$ from the distribution $F^\prime$ given by the random variable
\[X = \begin{cases}2 n & \text{with prob.\ }\frac{1}{n^2}\;,\\
1 & \text{with prob.\ }\frac{1}{\sqrt{n}}\;,\\
0 & \text{with prob.\ }1 - \frac{1}{\sqrt{n}} - \frac{1}{n^2}\;.\end{cases}\]

Note that $F^\prime$ differs from the distribution $F$ used for Theorem \ref{thm:lower} only in the highest possible realization, which is now $2n$ instead of $\varphi n$. In a best-possible online procedure, a random variable with value $2 n$ will be selected for all remaining steps, and a random variable with value $0$ will never be selected. Consequently, the only freedom for optimization is in the duration for which a random variable with value 1 is selected.

By arguments similar to the ones in Section \ref{structureOnline}, we observe that this is either one single step or all remaining steps. However, since we only allow continuous distributions in our model, a continuous version of $F'$ is presented to the algorithm, as discussed at the end of Section \ref{sec:Model}. This additional randomness is equivalent to the online procedure being enabled to perform a $p$-coinflip on each arrival of a realization $x$. Notably, the probability $p$ cannot depend on the number of remaining steps, but it could depend on the number of total steps $n$.

It is straightforward to adapt the formulas derived in this section to calculate the prophet inequalities for the instance $F^\prime$ with a given parameter $p\in[0,1]$. It turns out that the prophet inequality is maximized for the choice of ${p\approx 2.688/\sqrt{n}}$ and its maximum is upper bounded by $0.570$. Together with the lower bound of $0.598$ for the multi-threshold algorithm shown in the subsequent section for all instances, we can separate single- from multi-threshold algorithms in our setting.


\section{A Better Lower Bound on the Prophet Inequality over Time}\label{sec:betterlower}

In this section, we introduce a more sophisticated algorithm, which has the exact same structure as the optimal Algorithm $\ref{alg:onlineopt}$, but uses different thresholds that permit analysis.

\begin{algorithmEnv}[ht]
\begin{algorithm}[H]\caption{\textsc{Onl}}\label{alg:onl}
\For{$i=1,\dots,n$}{
\eIf{$x_i > \delta_{p(i)}$}{
select random variable for all $n-i+1$ remaining steps;\\
\textbf{break}
}{select random variable for one step;}}
 \end{algorithm}
  \caption{An Advanced Decision Procedure for Prophet Inequalities over Time}
\end{algorithmEnv}

The algorithm's threshold in step $i=1,\dots,n$ is the quantile $\delta_{p(i)}$, where $p(i)$ denotes some probability. For technical reasons, let $p(0)=1$, $p(n+1)=p(n+2)=0$. A large part of our analysis is independent of $p(i)$ and thus can be used as a building block for other decision rules.

\begin{restatable}{theorem}{onlbound}\label{thm:onlbound}
   The algorithm \textsc{Onl} with threshold $\delta_{p(i)}$ where $p(i) = \e^{-\frac{c i}{n^2}}$ and $c = 9.71$ gives a lower bound of $0.598$ for the prophet inequality over time for $n\rightarrow\infty$.
\end{restatable}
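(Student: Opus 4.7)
The plan is to reduce the distribution-dependent ratio $\E[\textsc{Onl}_n]/\E[\textsc{Opt}_n]$ to a distribution-free quantity via the weighted mediant inequality announced in the abstract, and then evaluate the resulting bound in the limit $n\to\infty$ for the specific schedule $p(i)=\e^{-ci/n^2}$.

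First I would cast everything into quantile space. Letting $P_i=\prod_{j=1}^{i}p(j)$ and conditioning on the first index at which $x_i>\delta_{p(i)}$ (or on the event that this never happens), one obtains
$$\E[\textsc{Onl}_n] = \sum_{i=1}^{n} P_{i-1}(1-p(i))(n-i+1)\,\E[x\mid x>\delta_{p(i)}] + \sum_{j=1}^{n} P_j\,\E[x\mid x\le\delta_{p(j)}].$$
Using $\E[x\mid x>\delta_q]=\frac{1}{1-q}\int_q^1\delta_u\,du$ and $\E[x\mid x\le\delta_q]=\frac{1}{q}\int_0^q\delta_u\,du$ and swapping the finite sums with the integrals, this rewrites as $\E[\textsc{Onl}_n]=\int_0^1 \delta_u\,\Phi_n(u)\,du$ for the explicit non-negative weight
$$\Phi_n(u) = \sum_{i=1}^{n} P_{i-1}(n-i+1)\mathbf{1}[p(i)<u] + \sum_{j=1}^{n} P_{j-1}\mathbf{1}[p(j)>u],$$
which depends only on $p(\cdot)$ and $n$, not on $D$. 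In parallel, the order-statistics identity $\E[\max(x_1,\dots,x_i)]=i\int_0^1 u^{i-1}\delta_u\,du$ gives $\E[\textsc{Opt}_n] = \int_0^1 \delta_u\,\Psi_n(u)\,du$ with $\Psi_n(u)=\sum_{i=1}^{n} i u^{i-1}$, again a non-negative, distribution-free weight.

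Next I would apply the weighted mediant lemma of the abstract. Since $u\mapsto\delta_u$ is non-decreasing and non-negative and $\Phi_n,\Psi_n\ge 0$, its continuous version yields
$$\frac{\E[\textsc{Onl}_n]}{\E[\textsc{Opt}_n]} \;=\; \frac{\int_0^1 \delta_u\,\Phi_n(u)\,du}{\int_0^1 \delta_u\,\Psi_n(u)\,du} \;\geq\; \inf_{s\in[0,1]}\frac{\int_s^1\Phi_n(u)\,du}{\int_s^1\Psi_n(u)\,du}.$$
This is exactly the ``part of the analysis independent of $p(i)$'' referenced just before the theorem: the distribution is gone and the remaining quantity is a one-variable analytic object in $s$, parameterised by the schedule.

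Finally, for $p(i)=\e^{-ci/n^2}$ one has $P_i=\e^{-ci(i+1)/(2n^2)}\to \e^{-c\sigma^2/2}$ at scale $i=\sigma n$, and $1-p(i)\sim c\sigma/n$, $(n-i+1)\sim n(1-\sigma)$. After the change of variables $u=\e^{-c\tau/n}$ that identifies the relevant quantile region, the indicator constraints $p(i)\lessgtr u$ become constraints on $\sigma$ vs.\ $\tau$, and the Riemann sums in $\Phi_n,\Psi_n$ converge to explicit integrals in $\sigma,\tau,c$ (with Gaussian-type factors $\e^{-c\sigma^2/2}$). The infimum over $s$ then reduces to the minimum of a smooth one-variable function of a rescaled cutoff, and a numerical optimisation over this cutoff together with a one-dimensional optimisation over $c$ locates the best value at $c\approx 9.71$ with bound strictly above $0.598$.

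The main obstacle will be the tightness of the \textsc{Opt} bound after the mediant step. The order-statistics identity is exact but the weight $\Psi_n(u)=\sum_{i=1}^{n}iu^{i-1}$ concentrates heavily near $u=1$, and if the infimum over $s$ is attained very close to $1$ the mediant step can cost too much to still deliver $0.598$. If that happens I would truncate $\Psi_n$ at a suitable level and use the subadditive tail bound $\E[(\max_i x_i - t)^+]\le n\E[(x-t)^+]$ only above the cutoff, which shifts mass from $\Psi_n$ into a form commensurate with $\Phi_n$'s $(n-i+1)$-factor and keeps the mediant application close to equality in the relevant range. The asymptotic evaluation of $\int_s^1\Phi_n/\int_s^1\Psi_n$ is otherwise a direct Laplace-type computation, and the final constant $c\approx 9.71$ is obtained by standard root finding on the resulting smooth one-parameter family.
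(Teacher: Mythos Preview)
Your plan is essentially the paper's: quantile-space decomposition, then the weighted mediant inequality to eliminate the distribution, then asymptotic analysis of the resulting ratio as a function of the cutoff. Your continuous formulation is equivalent to the paper's discrete one: since $\Phi_n$ is piecewise constant on the intervals $(p(k+1),p(k)]$, the paper's coefficients $\alpha_k$ are exactly $\int_{p(k+1)}^{p(k)}\Phi_n(u)\,du$, and the paper's $\alpha_k^*$ are an upper bound on $\int_{p(k+1)}^{p(k)}\Psi_n(u)\,du$ that is asymptotically tight. So the two reductions coincide in the limit.

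Two remarks. First, your anticipated obstacle is misdirected: with the exact $\Psi_n(u)=\sum_{i}i u^{i-1}$ the ratio $\int_s^1\Phi_n\big/\int_s^1\Psi_n$ does \emph{not} collapse near $s=1$. For $s\in[p(1),1]$ one gets, uniformly, the limit
\[
\frac{2}{c}\left(\sqrt{\tfrac{c\pi}{2}}\,\erf{\sqrt{\tfrac{c}{2}}}+\e^{-c/2}-1\right)\approx 0.5986,
\]
which is exactly the paper's constant $2b_1/c$; no truncation or tail surgery on $\Psi_n$ is needed. Second, the real work you have skipped is proving that the infimum over $s$ is never below $0.598$. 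The paper does this via a six-way case split on $s$ (Lemmas~\ref{lem:offlinesumtos1}--\ref{lem:onlinesumtos2} and the final proof), bounding $\sum_{k\le s}\alpha_k$ from below and $\sum_{k\le s}\alpha_k^*$ from above by explicit low-degree expressions in $s/n$ on each subinterval, and then checking the resulting rational functions analytically. Your ``Laplace-type computation plus numerical optimisation'' is not a substitute for this: the minimum over $s$ is a global statement, and the limit shapes must be controlled uniformly, not just pointwise. That uniform control is where almost all of the effort in the paper's proof actually lies.
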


For the proof, we separately bound the expected value of the algorithm $\E[\textsc{Onl}_n]$ and the offline optimal solution $\E[\textsc{Opt}_n]$. We express both expectations as a weighted sum over the expected value that the algorithm extracts from realized random variables in the range of the distribution between two adjacent quantiles  $\delta_{p(k+1)} < x \leq \delta_{p(k)}$. This results in the following two lemmas for monotonically decreasing probabilities $p(k)$. As the offline optimal algorithm selects the maximum value seen so far in each step, we bound this value using a fragmentation into possible events.

\begin{restatable}{lemma}{alphaoff}\label{lem:alphaoff}
Let $n\in\mathbb{N}_{>0}$. For quantiles $\delta_{p(k)}$, given a monotonically decreasing function $p(k)$ for $k\in[n]$ with $p(0)=1$ and $p(n+1)=p(n+2)=0$, it holds that
    \[\E[\textsc{Opt}_n] \leq \sum_{k=0}^n \alpha_k^* \E[X \mid \delta_{p(k+1)} < X \leq \delta_{p(k)}]\;,\]
    where \[\alpha_0^*=(1-p(1))\frac{n(n+1)}{2} + \sum_{i=2}^n \left(p(1)^i-p(2)^i-i \cdot p(2)^{i-1} (p(1)-p(2))\right),\] and for all $k\in\{1, \dots, n\}$,\\
    \begin{align*}
        \alpha_k^*&=\sum_{i=1}^n i \cdot p(k+1)^{i-1} (p(k)-p(k+1))\\ &\qquad+ \sum_{i=2}^n \left(p(k+1)^i - p(k+2)^i - i \cdot p(k+2)^{i-1} (p(k+1)-p(k+2))\right).
    \end{align*}
\end{restatable}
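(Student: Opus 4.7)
My plan is to start from $\E[\textsc{Opt}_n] = \sum_{i=1}^n \E[M_i]$ with $M_i := \max\{x_1,\dots,x_i\}$, and, for each step $i$, to decompose $\E[M_i]$ according to which of the intervals $I_k := (\delta_{p(k+1)}, \delta_{p(k)}]$ the running maximum lands in. Since these intervals partition the support of $D$ (using the conventions $p(0)=1$ and $p(n+1)=p(n+2)=0$), this yields $\E[M_i] = \sum_{k=0}^n \E[M_i\,\mathbf{1}\{M_i \in I_k\}]$, and the goal is to bound each summand so that, after summing over $i$ and regrouping, the total reads $\sum_{k=0}^n \alpha_k^*\,\E[x \mid x \in I_k]$.

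For each interior interval $k \geq 1$, I would split $\{M_i \in I_k\}$ further by how many of $x_1,\dots,x_i$ fall inside $I_k$. In \emph{case (A)} exactly one of them lies in $I_k$ and the remaining $i-1$ lie in $(-\infty,\delta_{p(k+1)}]$; this has probability $i(p(k)-p(k+1))p(k+1)^{i-1}$, and since $M_i$ then equals the lone variable in $I_k$, its contribution to $\E[M_i\,\mathbf{1}\{M_i \in I_k\}]$ is exactly $i(p(k)-p(k+1))p(k+1)^{i-1}\,\E[x \mid x \in I_k]$. In \emph{case (B)} at least two variables lie in $I_k$ while the rest are in $(-\infty,\delta_{p(k+1)}]$, which requires $i \geq 2$ and has probability $p(k)^i - p(k+1)^i - i(p(k)-p(k+1))p(k+1)^{i-1}$ (the ``all below $\delta_{p(k)}$'' probability with the $j=0$ and $j=1$ terms of its binomial expansion subtracted). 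Here I use the coarse bound $M_i \leq \delta_{p(k)}$ together with the immediate fact $\delta_{p(k)} \leq \E[x \mid x \in I_{k-1}]$ (because every $x \in I_{k-1}$ exceeds $\delta_{p(k)}$), so case (B) at $I_k$ ends up contributing to the coefficient of $\E[x \mid x \in I_{k-1}]$ rather than to that of $\E[x \mid x \in I_k]$. For the top interval $k=0$, there is no such ``higher'' interval to borrow from in a case-(B)-style argument, so I would instead apply the blanket bound $M_i\,\mathbf{1}\{M_i \in I_0\} \leq \sum_{j=1}^i x_j\,\mathbf{1}\{x_j \in I_0\}$ whose expectation equals $i(1-p(1))\,\E[x \mid x \in I_0]$; summing this over $i=1,\dots,n$ produces the $(1-p(1))\tfrac{n(n+1)}{2}$ piece of $\alpha_0^*$.

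To finish, I would collect, for each fixed $k$, all contributions to the coefficient of $\E[x \mid x \in I_k]$: case (A) applied at $I_k$ contributes $\sum_{i=1}^n i(p(k)-p(k+1))p(k+1)^{i-1}$ (respectively the blanket term when $k=0$), and case (B) applied at $I_{k+1}$, upper-bounded via $\delta_{p(k+1)} \leq \E[x \mid x \in I_k]$, contributes $\sum_{i=2}^n\bigl(p(k+1)^i - p(k+2)^i - i\,p(k+2)^{i-1}(p(k+1)-p(k+2))\bigr)$. Together these match the stated $\alpha_k^*$ and $\alpha_0^*$. The main obstacle I expect is clean bookkeeping at the boundaries: verifying that cases (A) and (B) partition $\{M_i \in I_k\}$ without overlap and without gap, that $k=0$ is handled by the blanket bound rather than by a case (B) ``from above,'' and that the $k=n$ contributions vanish correctly under the convention $p(n+1)=p(n+2)=0$ so that the regrouping is consistent on both ends.
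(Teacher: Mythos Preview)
Your proposal is correct and follows essentially the same approach as the paper: decompose $\E[M_i]$ by which interval $I_k$ the maximum lands in, split $k\geq 1$ into the ``exactly one hit'' and ``at least two hits'' sub-events (the paper's $\mathcal{E}_{i,k,=1}$ and $\mathcal{E}_{i,k,>1}$), push the latter's bound $M_i\le \delta_{p(k)}$ up to the coefficient of $\E[x\mid x\in I_{k-1}]$, and treat $k=0$ via the crude sum bound yielding $i(1-p(1))\E[x\mid x\in I_0]$. The only cosmetic difference is that for $k=0$ you use the indicator inequality $M_i\mathbf{1}\{M_i\in I_0\}\le \sum_j x_j\mathbf{1}\{x_j\in I_0\}$ directly, whereas the paper conditions on the binomial count of draws above $\delta_{p(1)}$ and invokes $\E[\max]\le j\E[x_1]$; both arrive at the same contribution.
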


\begin{proof}
    Recall that \[\E[\textsc{Opt}_n] = \E\left[\sum_{i=1}^n \max\{X_1,\dots,X_i\}\right] = \sum_{i=1}^n \E[\max\{X_1,\dots,X_i\}].\]
    Again, we provide an upper bound for each summand separately. In order to do so, for all $i\in[n]$ and $k\in[n]$, let
    \[\mathcal{E}_{i,k,=1}=[|\{x_1,\dots,x_i\} \cap (\delta_{p(k+1)}, \delta_{p(k)}]| = 1 \text{ and } |\{x_1,\dots,x_i\} \cap (\delta_{0}, \delta_{p(k)}]| = i|]\]
    be the event where the maximum of $i$ draws is in the interval $(\delta_{p(k+1)}, \delta_{p(k)}]$ but all other $i-1$ draws are at most $\delta_{p(k+1)}$. Analogously, we define
    \[\mathcal{E}_{i,k,>1}=[|\{x_1,\dots,x_i\} \cap (\delta_{p(k+1)}, \delta_{p(k)}]| > 1 \text{ and } |\{x_1,\dots,x_i\} \cap (\delta_{0}, \delta_{p(k)}]| = i|]\]
    to be the event where the maximum of $i$ draws is again in the interval $(\delta_{p(k+1)}, \delta_{p(k)}]$, but at least one more of the other $i-1$ draws is in the same interval.
    Lastly, for $k=0$, we define
    \[\mathcal{E}_{i,0,\geq1}=[|\{x_1,\dots,x_i\} \cap (\delta_{p(1)}, \delta_{p(0)}]| \geq 1]\]
    as the event that the maximum of $i$ draws is larger than the largest threshold $\delta_{p(1)}$. Since for fixed $i\in[n]$,
    \[\Prob[\mathcal{E}_{i,0,\geq1}] + \sum_{k=1}^{n} (\Prob[\mathcal{E}_{i,k,=1}] + \Prob[\mathcal{E}_{i,k,>1}]) = 1,\]
    we can use the law of total probability in order to derive that
    \begin{align}\label{eq:lot}
        \begin{split}
        \E[\max\{X_1,\dots,X_i\}] = & \Prob[\mathcal{E}_{i,0,\geq1}] \cdot \E[\max\{X_1,\dots,X_i\} \mid \mathcal{E}_{i,0,\geq1}]\\
        &+\sum_{k=1}^{n} (\Prob[\mathcal{E}_{i,k,=1}] \cdot \E[\max\{X_1,\dots,X_i\} \mid \mathcal{E}_{i,k,=1}]\\
        &+ \Prob[\mathcal{E}_{i,k,>1}] \cdot \E[\max\{X_1,\dots,X_i\} \mid \mathcal{E}_{i,k,>1}]).
        \end{split}
    \end{align}
    Next, we determine upper bounds on the occurring conditional expected values. For $k\in[n]$, we have that
    \begin{itemize}
        \item $\E[\max\{X_1,\dots,X_i\} \mid \mathcal{E}_{i,k,=1}]=\E[X \mid \delta_{p(k+1)} < X \leq \delta_{p(k)}]$, and
        \item $\E[\max\{X_1,\dots,X_i\} \mid \mathcal{E}_{i,k,>1}] \leq \delta_{p(k)} < \E[X \mid \delta_{p(k)} < X \leq \delta_{p(k-1)}]$,
    \end{itemize}
    and for $k=0$, we can count the number $j$ of draws in the interval $(\delta_{p(1)},\delta_{p(0)}]$ through analogously defined events $\mathcal{E}_{i,0,=j}$. Using the inequality $\E[\max\{X_1,\dots,X_j\}] \leq j \E[X_1]$ yields
    \begin{itemize}
        \item $
        \begin{aligned}[t]
         \E[\max\{X_1,\dots,X_i\} \mid &\mathcal{E}_{i,0,\geq1}]
            = \sum_{j=1}^i \frac{\Prob[\mathcal{E}_{i,0,=j}]}{\Prob[\mathcal{E}_{i,0,\geq1}]} \E[\max\{X_1,\dots,X_i\} \mid \mathcal{E}_{i,0,=j}]\\
            &\leq \frac{1}{\Prob[\mathcal{E}_{i,0,\geq1}]} \sum_{j=1}^i \binom{i}{j} (1-p(1))^j p(1)^{i-j} j \E[X \mid \delta_{p(1)} < X \leq \delta_{p(0)}]\\
            &= \frac{1}{\Prob[\mathcal{E}_{i,0,\geq1}]} i(1-p(1)) \E[X \mid \delta_{p(1)} < X \leq \delta_{p(0)}],
        \end{aligned}
        $
    \end{itemize}
    where the latter sum equals the expected value of a $(i,1-p(1))$-distributed binomial variable, which is $i(1-p(1))$. Applying the three listed bounds on \autoref{eq:lot} yields that
    \begin{align*}
        \E[\max\{X_1,\dots,X_i\}] \leq &i(1-p(1)) \E[X \mid \delta_{p(1)} < X \leq \delta_{p(0)}]\\
        &+\Prob[\mathcal{E}_{i,1,>1}] \cdot E[X \mid \delta_{p(1)} < X \leq \delta_{p(0)}]\\
        &+\sum_{k=1}^{n-1} (\Prob[\mathcal{E}_{i,k,=1}] + \Prob[\mathcal{E}_{i,k+1,>1}]) \cdot \E[X \mid \delta_{p(k+1)} < X \leq \delta_{p(k)}]\\
        &+\Prob[\mathcal{E}_{i,n,=1}] \cdot E[X \mid \delta_{p(n+1)} < X \leq \delta_{p(n)}].
    \end{align*}
    Next, we determine the occurring probabilities, which are
    \begin{itemize}
        \item $\Prob[\mathcal{E}_{i,k,=1}] =
        \begin{cases}
            p(n), &\text{if } k=n,i=1\\
            0, &\text{if } k=n,i>1\\
            i \cdot (p(k)-p(k+1)) p(k+1)^{i-1}, &\text{if } k\in[n-1]
        \end{cases}$\\
        \item For all $k=0\dots,n-1$ and $i>1$,
        \begin{align*}
            \Prob[\mathcal{E}_{i,k+1,>1}] &= \Prob[\forall j\in[i] : x_j \leq \delta_{p(k+1)}] - \Prob[\forall j\in[i] : x_j \leq \delta_{p(k+2)}] - \Prob[\mathcal{E}_{i,k+1,=1}]\\
            &= p(k+1)^i - p(k+2)^i - i \cdot (p(k+1)-p(k+2)) p(k+2)^{i-1},
        \end{align*}
        \item[$~~$] otherwise $\Prob[\mathcal{E}_{i,k+1,>1}] = 0$.
    \end{itemize}
    Plugging the probabilities into the previous inequality and summing over all steps $i\in[n]$, we overall obtain
    \begin{align*}
        \E[\textsc{Opt}_n] \leq & \left(\sum_{i=1}^n i (1-p(1)) + \sum_{i=2}^n p(1)^i - p(2)^i - i (p(1)-p(2)) p(2)^{i-1} \right) \E[X \mid \delta_{p(1)} < X \leq \delta_{p(0)}]\\
        &+ \sum_{k=1}^{n-1}\bigg(\sum_{i=2}^n p(k+1)^i - p(k+2)^i - i \cdot (p(k+1)-p(k+2)) p(k+2)^{i-1}\\
        &\qquad+ \sum_{i=1}^n i \cdot (p(k)-p(k+1)) p(k+1)^{i-1}\bigg) \E[X \mid \delta_{p(k+1)} < X \leq \delta_{p(k)}]\\
        &+ p(n) \E[X \mid \delta_{p(n+1)} < X \leq \delta_{p(n)}].
    \end{align*}
    The statement of the lemma follows immediately.
\end{proof}

For the analysis of \textsc{Onl}, we bound in each step the probability that the current value falls into each range, that it is selected, and the number of steps for which it is selected. 
\begin{lemma}\label{lem:alphaon}
Let $n\in\mathbb{N}_{>0}$. For quantiles $\delta_{p(k)}$, given a non-increasing function $p(k)$ for $k\in[n]$ with $p(0)=1$ and $p(n+1)=0$, it holds that
    \[\E[\textsc{Onl}_n] = \sum_{k=0}^n \alpha_k \E[X \mid \delta_{p(k+1)} < X \leq \delta_{p(k)}]\;,\]
    where for all $k\in\{0,\dots,n\}$, \[\alpha_k = (p(k)-p(k+1)) \left(\sum_{i=1}^k \prod_{j=1}^{i-1} p(j) + \sum_{i=k+1}^n (n-i+1) \prod_{j=1}^{i-1} p(j)\right).\]
\end{lemma}

\begin{proof}
	We rewrite
	\[	\E[\textsc{Onl}_n] = \E\left[\sum_{i=1}^n X_i t_i\right] = \sum_{i=1}^n \E[X_i t_i] \,.\]
	In case step $i\in[n]$ is the first step in which some realization exceeds the threshold of the corresponding step, which occurs with probability $(\prod_{j=1}^{i-1} p(j))(1-p(i))$, an expected revenue of $\E[X \mid X > \delta_{p(i)}]$ is collected for all remaining $(n-i+1)$ steps. With probability $\prod_{j=1}^{i} p(j)$, no realization in the first $i$ steps exceeded a threshold of the algorithm, leading to an expected value of $\E[X \mid X \leq \delta_{p(i)}]$ being collected for one step. Otherwise, $X_i t_i$ realizes at $0$, as the algorithm is already blocked in step $i$. We can reformulate
	\begin{align*}
	\E[\textsc{Onl}_n]=\sum_{i=1}^n \left(\prod_{j=1}^{i-1} p(j)\right) (1-p(i)) (n-i+1) \E[X \mid X > \delta_{p(i)}] + \sum_{i=1}^n \left(\prod_{j=1}^{i} p(j)\right) \E[X \mid X \leq \delta_{p(i)}] .
	\end{align*}
    Now observe that:
    \begin{itemize}
        \item $
        \begin{aligned}[t]
         \E[X \mid X > \delta_{p(i)}] = \frac{1}{1-p(i)} \sum_{k=0}^{i-1} (p(k)-p(k+1)) \E[X \mid \delta_{p(k+1)} < X \leq \delta_{p(k)}]
        \end{aligned}
        $, and
        \item $
        \begin{aligned}[t]
         \E[X \mid X \leq \delta_{p(i)}] = \frac{1}{p(i)} \sum_{k=i}^{n} (p(k)-p(k+1)) \E[X \mid \delta_{p(k+1)} < X \leq \delta_{p(k)}].
        \end{aligned}
        $
    \end{itemize}
    Plugged into the formula from above, this results in
    \begin{align*}
        \E[\text{ONL}_n] &= \sum_{i=1}^n \left(\prod_{j=1}^{i-1} p(j)\right) (n-i+1) \sum_{k=0}^{i-1} (p(k)-p(k+1)) \E[X \mid \delta_{p(k+1)} < X \leq \delta_{p(k)}]\\
        &+ \sum_{i=1}^n \left(\prod_{j=1}^{i-1} p(j)\right) \sum_{k=i}^n (p(k)-p(k+1)) \E[X \mid \delta_{p(k+1)} < X \leq \delta_{p(k)}].
    \end{align*}
    The statement of the lemma follows immediately.
\end{proof}

Combining the two lemmas we obtain a lower bound on the prophet inequality of
\[
\frac{\E[\textsc{Onl}_n]}{\E[\textsc{Opt}_n]} \geq \frac{\sum_{k=0}^n \alpha_k \E[X \mid \delta_{p(k+1)} < X \leq  \delta_{p(k)}]}{\sum_{k=0}^n \alpha_k^* \E[X \mid \delta_{p(k+1)} < X \leq \delta_{p(k)}]}\;.
\]
The expectation values in this bound heavily depend on the input distribution and cannot easily be calculated. Therefore, We require an advanced bound on the weighted mediant. We show

\begin{restatable}[Advanced lower bound on the weighted mediant]{lemma}{minbound}\label{lem:minbound}
    Let $n\in\mathbb{N}_{>0}$. For non-negative values $a_i$ and positive values $b_i$ and $w_i$ for $i\in[n]$, where the values $w_i$ are non-increasing, it holds that
\[\frac{\sum_{i=1}^n w_i a_i}{\sum_{i=1}^n w_i b_i} \geq \min_{s\in [n]} \frac{\sum_{i=1}^{s} a_i}{\sum_{i=1}^{s} b_i}.\]
\end{restatable}
\begin{proof}
Since the weights $w_i$ are positive and non-increasing in $i\in[n]$, we can find appropriate $u_s \geq 0$ for $s\in[n]$ to rewrite $w_i= \sum_{s=i}^n u_s$ with $u_n = w_n > 0$. Then
\[
\frac{\sum_{i=1}^n w_i a_i}{\sum_{i=1}^n w_i b_i} = \frac{\sum_{i=1}^n a_i\sum_{s=i}^n u_s}{\sum_{i=1}^n b_i\sum_{s=i}^n u_s} = \frac{\sum_{s=1}^n u_s \sum_{i=1}^s a_i}{\sum_{s=1}^n u_s \sum_{i=1}^s b_i} \geq \min_{s=1,...,n} \frac{\sum_{i=1}^s a_i}{\sum_{i=1}^s b_i}.
\]
The last inequality is the standard bound for weighted mediants. That is, for non-negative $a_i$ and positive $b_i$ and $w_i$, it is known that $\frac{\sum_{i=1}^n w_i a_i}{\sum_{i=1}^n w_i b_i}$ is greater or equal to the smallest of the fractions among the $a_i / b_i$.
\end{proof}

It is easy to verify that for a strictly decreasing function $p(k)$, each $\alpha_k^*$ and each expectation value in the denominator are in fact strictly positive. Consequently, the prophet inequality is bounded as follows.

\begin{equation}\label{eq:mainbound}
    \frac{\E[\textsc{Onl}_n]}{\E[\textsc{Opt}_n]} \geq \frac{\sum_{k=0}^n \alpha_k \E[X \mid \delta_{p(k+1)} < X \leq  \delta_{p(k)}]}{\sum_{k=0}^n \alpha_k^* \E[X \mid \delta_{p(k+1)} < X \leq \delta_{p(k)}]} \geq \min_{s\in \{0,\dots,n\}} \frac{\sum_{k=0}^{s} \alpha_k}{\sum_{k=0}^{s} \alpha^*_k},
\end{equation}

where the definitions of $\alpha_k^*$ and $\alpha_k$ were made in Lemma \ref{lem:alphaoff} and Lemma \ref{lem:alphaon}. Clearly, the right-hand side heavily depends on the choice of $p(k)$ that occurs in the formulas. As already stated in Theorem \ref{thm:onlbound}, we choose $p(k)=\e^{-\frac{c k}{n^2}}$ with $c=9.71$. For $n=500$, the resulting function $\frac{\sum_{k=0}^{s} \alpha_k}{\sum_{k=0}^{s} \alpha^*_k}$ is shown in Figure \ref{fig:plot1}.

\begin{figure}[tb]
    \centering
    \resizebox{0.5\textwidth}{!}{\input{images/plot1.pgf}}
    \caption{$\frac{\sum_{k=0}^{s} \alpha_k}{\sum_{k=0}^{s} \alpha^*_k}$ for $p(k)=\e^{-\frac{c k}{n^2}}$ with $c=9.71$ for $s=0,\dots,n$ with $n=500$}
    \label{fig:plot1}
\end{figure}

We are interested in the minimum of the function in Figure \ref{fig:plot1} for $n\to\infty$, since its value gives a lower bound on the prophet inequality for $n\to\infty$ through Equation (\ref{eq:mainbound}). We guess from the figure that the minimum will be attained at $s=0$ or $s$ close to $n$ at around 0.6, but for a formal proof, we give a simplified lower bound on $\frac{\sum_{k=0}^{s} \alpha_k}{\sum_{k=0}^{s} \alpha^*_k}$ in order to be able to calculate a tight bound on the minimum of the function.

Unfortunately, the expressions for $\alpha_k$ and $\alpha_k^*$ cannot be easily simplified without a loss in the resulting bound. Therefore, we calculate more involved bounds on the sums of $\alpha_k^*$ and $\alpha_k$, inducing a loss on the competitive ratio that tends to zero for $n\to\infty$, in several steps. The first bounds are presented in Lemma \ref{lem:offlinesumtos1} and Lemma \ref{lem:onlinesumtos1} in the following. In Figure \ref{fig:plot23} on the left, we see the same function as in Figure \ref{fig:plot1} and the lower bound resulting from the two lemmas for $n=500$. Although the bound does not look tight for $n=500$, we can see in Figure \ref{fig:plot23} on the right that for $n=10^5$, the lower bound from Lemma \ref{lem:offlinesumtos1} and Lemma \ref{lem:onlinesumtos1} has its minima again at around 0.6.

\begin{figure}[t]
\begin{minipage}[t]{.5\textwidth}
\resizebox{\textwidth}{!}{\input{images/plot2.pgf}}
\end{minipage}\begin{minipage}[t]{.5\textwidth}
\resizebox{\textwidth}{!}{\input{images/plot3.pgf}}
\end{minipage}
\caption{\small \begin{minipage}[t]{\textwidth}Left: $\frac{\sum_{k=0}^{s} \alpha_k}{\sum_{k=0}^{s} \alpha^*_k}$ in blue and bound from Lemmas \ref{lem:offlinesumtos1} and \ref{lem:onlinesumtos1} in orange for $s=0,\dots,n$, $n=500$\\Right: Bound from Lemmas \ref{lem:offlinesumtos1} and \ref{lem:onlinesumtos1} in orange for $s=0,\dots,n$ with $n=10^5$\end{minipage}}
\label{fig:test}
\label{fig:plot23}
\end{figure}

As stated above, we start by bounding the sum of $\alpha_k^*$ values from Lemma \ref{lem:alphaoff}, corresponding to the offline optimum.
\begin{lemma}\label{lem:offlinesumtos1}
    For all $s\in[n-2]$ and large $n$, it holds that
    \[\sum_{k=0}^s \alpha_k^* \leq n + 1 - \frac{n^2}{c(s+2)} + \left( \frac{n^2}{c(s+2)} - \frac{c}{2} - 1 \right) \e^{-\frac{c(s+2)}{n}} \pm o(1).\]
    For $s\in\{n-1,n\}$,
    \[\sum_{k=0}^s \alpha_k^* \leq n \pm o(1).\]
\end{lemma}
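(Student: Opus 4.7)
My approach is to first split the sum as $\sum_{k=0}^s \alpha_k^* = \alpha_0^* + \sum_{k=1}^s \alpha_k^*$, handling the boundary term $\alpha_0^*$ separately, and then to exploit a telescoping structure in the regular terms. For $k\geq 1$, introduce the shorthand $A_k := (p(k)-p(k+1))\sum_{i=1}^n i\, p(k+1)^{i-1}$, which is exactly the first summand in the definition of $\alpha_k^*$. Re-indexing $\sum_{i=2}^n i\, p(k+2)^{i-1}$ to start at $i=1$ (and correcting for the missing term) shows that the bracketed second summand equals $\sum_{i=2}^n(p(k+1)^i-p(k+2)^i) + (p(k+1)-p(k+2)) - A_{k+1}$. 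This yields the key identity
\[ \alpha_k^* \;=\; A_k - A_{k+1} + \sum_{i=2}^n\bigl(p(k+1)^i - p(k+2)^i\bigr) + \bigl(p(k+1)-p(k+2)\bigr). \]
Both the $A_k-A_{k+1}$ differences and the inner geometric differences telescope under the sum $k=1,\dots,s$, leaving
\[ \sum_{k=1}^s \alpha_k^* \;=\; A_1 - A_{s+1} + \sum_{i=2}^n p(2)^i - \sum_{i=2}^n p(s+2)^i + \bigl(p(2)-p(s+2)\bigr). \]

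Next, I would plug in $p(k)=\e^{-ck/n^2}$, so that every remaining quantity is either a geometric sum, the derivative-type sum $\sum_{i=1}^n i\,r^{i-1}=\frac{1-(n+1)r^n+nr^{n+1}}{(1-r)^2}$, or a trivial difference. Writing $\beta := c(s+2)/n$, which lies in $(0,c]$ for $s\in[1,n-2]$, and using $1-\e^{-x}=x-x^2/2+O(x^3)$ on the factors $1-p(\cdot)$ of order $O(1/n^2)$, a routine (but careful) expansion in $1/n$ gives $\alpha_0^*,\,A_1 \to c/2$, $\sum_{i=2}^n p(2)^i = n-c-1+o(1)$, $\sum_{i=2}^n p(s+2)^i = \frac{n^2(1-\e^{-\beta})}{c(s+2)} + \e^{-\beta} - 2 + o(1)$, and
\[ A_{s+1} \;=\; \frac{c}{\beta^2}\bigl(1-\e^{-\beta}(1+\beta)\bigr)+o(1). \]
Substituting all four into the telescoped expression and collecting the $O(1)$ constants yields the exact asymptotic equality
\[ \sum_{k=0}^s \alpha_k^* \;=\; n+1 - A_{s+1} - \frac{n^2(1-\e^{-\beta})}{c(s+2)} - \e^{-\beta} + o(1). \]

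To convert this equality into the claimed upper bound, I would invoke the elementary Taylor inequality $\e^\beta \geq 1+\beta+\beta^2/2$, valid for all $\beta\geq 0$. After rearrangement, this is equivalent to $\frac{c}{\beta^2}(1-\e^{-\beta}(1+\beta)) \geq \tfrac{c}{2}\e^{-\beta}$, i.e.\ $A_{s+1}\geq \tfrac{c}{2}\e^{-\beta}+o(1)$. Replacing $-A_{s+1}$ by its upper bound $-\tfrac{c}{2}\e^{-\beta}$ and regrouping the two $\e^{-\beta}$-terms together with the $\frac{n^2}{c(s+2)}$-term reproduces the stated bound verbatim.

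The main obstacle I anticipate is keeping the $o(1)$ errors uniform as $s$ ranges over the full interval $[1,n-2]$: when $s=\Theta(n)$ the parameter $\beta$ is $\Theta(1)$ so the exponentials must be kept intact, whereas for smaller $s$ one is effectively expanding in the small parameter $\beta$ and the dominant contributions come from different terms of each expansion. In both regimes the estimates reduce to standard manipulations of $1-r$ and $r^n$ for $r=\e^{-O(1)/n^2}$, but the bookkeeping must be done with care; this is also why the lemma restricts to $s\leq n-2$, which keeps $\beta$ bounded. Once this uniform control is in place, the telescoping identity and the Taylor inequality $\e^\beta \geq 1+\beta+\beta^2/2$ do the rest.
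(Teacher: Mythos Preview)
Your approach is correct and follows essentially the same route as the paper: both telescope $\sum_{k=0}^s\alpha_k^*$ (your $A_k$ notation makes this clean; the paper writes the cancellations out longhand but arrives at the identical telescoped expression), then expand the surviving geometric pieces in $1/n$. The one substantive difference is in how the term $A_{s+1}$ is bounded below by $\tfrac{c}{2}\,\e^{-\beta}$: the paper does this by the crude termwise estimate $p(s+2)^{i-1}\ge p(s+2)^n$ together with the explicit lower bound $p(s+1)-p(s+2)\ge(1-c/n)\,c/n^2$, whereas you first compute $A_{s+1}=\tfrac{c}{\beta^2}\bigl(1-\e^{-\beta}(1+\beta)\bigr)+o(1)$ via the closed form for $\sum i r^{i-1}$ and then invoke $\e^{\beta}\ge 1+\beta+\beta^2/2$. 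Your route is tidier and sidesteps the auxiliary difference inequality the paper proves separately; the paper's route is cruder but makes the uniformity in $s$ (which you rightly flag as the delicate point) more transparent, since no asymptotic for $A_{s+1}$ is needed.

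One small correction: what you call an ``exact asymptotic equality'' is actually only an upper bound. A careful expansion of $\sum_{i=2}^n p(s+2)^i$ produces an additional $+\tfrac{1}{2}(1-\e^{-\beta})$ term beyond your stated $\tfrac{n^2(1-\e^{-\beta})}{c(s+2)}+\e^{-\beta}-2$, and this is $O(1)$, not $o(1)$, when $\beta$ is bounded away from zero. Since that term enters $\sum_{k=0}^s\alpha_k^*$ with a minus sign, your displayed ``equality'' is in fact $\ge\sum_{k=0}^s\alpha_k^*$, so the final inequality goes through unchanged; just relabel the step as an inequality.
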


\begin{proof}
    We begin by summing the terms in Lemma \ref{lem:alphaoff} for arbitrary $s\in[n]$.
    \begin{align*}
        &\sum_{k=0}^s \alpha_k^* \leq (1-p(1))\frac{n(n+1)}{2} + \sum_{i=2}^n p(1)^i-p(2)^i-i \cdot p(2)^{i-1} (p(1)-p(2))\\
        &~~~ + \sum_{i=1}^n i \cdot p(2)^{i-1} (p(1)-p(2)) + \sum_{i=2}^n p(2)^i - p(3)^i - i \cdot p(3)^{i-1} (p(2)-p(3))\\
        &~~~ + \sum_{i=1}^n i \cdot p(3)^{i-1} (p(2)-p(3)) + \sum_{i=2}^n p(3)^i - p(4)^i - i \cdot p(4)^{i-1} (p(3)-p(4))\\
        &~~~ + \dots\\
        &~~~ + \sum_{i=1}^n i \cdot p(s+1)^{i-1} (p(s)-p(s+1))\\
        &~~~~~~~~~~~~~ + \sum_{i=2}^n p(s+1)^i - p(s+2)^i - i \cdot p(s+2)^{i-1} (p(s+1)-p(s+2))\\
        &= (1-p(1))\frac{n(n+1)}{2} + \sum_{i=2}^n p(1)^i + (p(1)-p(2)) + (p(2)-p(3)) + \dots + (p(s)-p(s+1))\\
        &~~~ + \sum_{i=2}^n -p(s+2)^i - i\cdot p(s+2)^{i-1}(p(s+1)-p(s+2))\\
        &= (1-p(1)) \frac{n(n+1)}{2} + \sum_{i=1}^n p(1)^i -p(s+1) - \sum_{i=2}^n p(s+2)^i + i\cdot p(s+2)^{i-1}(p(s+1)-p(s+2)).
    \end{align*}
    Replacing $i-1$ in the exponent in the latter sum by $n$ in the formula and rewriting the sums in closed form yields
    \begin{align}\label{eq:nodistinct}
    \begin{split}
         \sum_{k=0}^s \alpha_k^* &\leq (1-p(1))\frac{n(n+1)}{2} + \frac{p(1)(1-p(1)^n)}{1-p(1)} - p(s+1) - \frac{p(s+2)(p(s+2)-p(s+2)^n)}{1-p(s+2)}\\
         &~~~ - \frac{n^2+n-2}{2} p(s+2)^n (p(s+1)-p(s+2)).
    \end{split}
    \end{align}
    From here on, we do a case distinction and first assume that $s\in\{n-1,n\}$. In this case, we apply $p(s+2)=0$ and omit $-p(s+1)\leq0$. Further applying $p(1)=\e^{-\frac{c}{n^2}}$ and then $\e^{-x} \geq 1-x$ yields
    \begin{align*}
         \sum_{k=0}^s \alpha_k^* &\leq \left(1-\e^{-\frac{c}{n^2}}\right) \frac{n(n+1)}{2} + \frac{\e^{-\frac{c}{n^2}}(1-\e^{-\frac{c}{n}})}{1-\e^{-\frac{c}{n^2}}} \leq \frac{c}{n}\frac{n+1}{2} + \frac{\e^{-\frac{c}{n^2}}(1-\e^{-\frac{c}{n}})}{1-\e^{-\frac{c}{n^2}}}.
    \end{align*}
    Further using that $\e^{-\frac{c}{n^2}} \leq 1-\frac{c}{n^2}+\frac{c^2}{2n^4}$ and $\e^{-\frac{c}{n}} \geq 1-\frac{c}{n}+\frac{c^2}{2n^2}-O\left(\frac{1}{n^3}\right)$ implies
    \begin{align*}
         \sum_{k=0}^s \alpha_k^* &\leq \frac{c}{n}\frac{n+1}{2} + \frac{(1-\frac{c}{n^2}+\frac{c^2}{2n^4})(\frac{c}{n}-\frac{c^2}{2n^2}+O\left(\frac{1}{n^3}\right))}{\frac{c}{n^2}-\frac{c^2}{2n^4}}.
    \end{align*}
    Note that $\left(1-\frac{c}{n^2}+\frac{c^2}{2n^4}\right) \left(\frac{c}{n}-\frac{c^2}{2n^2}+O\left(\frac{1}{n^3}\right)\right) = \frac{c}{n} - \frac{c^2}{2n^2} \pm O\left(\frac{1}{n^3}\right)$ and $1 / (\frac{c}{n^2}-\frac{c^2}{2n^4}) = \frac{2n^4}{2n^2c-c^2} = \frac{n^2}{c} + \frac{c}{4n^2-2c} + \frac{1}{2} = \frac{n^2}{c} + \O(1)$, such that the corresponding fraction from the earlier formula equals $n-\frac{c}{2} \pm O\left(\frac{1}{n}\right)$, such that the overall sum can be bounded as
    \begin{align*}
         \sum_{k=0}^s \alpha_k^* &\leq \frac{c}{2} + n - \frac{c}{2} \pm o(1) = n \pm o(1),
    \end{align*}
finishing the proof for the case $s\in\{n-1,n\}$. For $s\in[n-2]$, we can use Equation \eqref{eq:nodistinct} again and simplify the first terms as in the previous case to derive
	\begin{align*}
         \sum_{k=0}^s \alpha_k^* &\leq n \pm o(1) - p(s+1) - \frac{p(s+2)(p(s+2)-p(s+2)^n)}{1-p(s+2)}\\
         &~~~ - \frac{n^2+n-2}{2} p(s+2)^n (p(s+1)-p(s+2)).
    \end{align*}
    Next, we claim that it holds for all $s\leq n-2$ that
    \begin{equation}\label{eq:pdif}
        p(s+1)-p(s+2) \geq \left(1 - \frac{c}{n}\right) \frac{c}{n^2}.
    \end{equation}
    For a proof of the claim, observe that the function $f(k) := p(k)-p(k+1)$ is decreasing in $k$ for $k\leq n-1$ for the choice $p(k)=\e^{-\frac{c k}{n^2}}$ with $c=9.71$: Its derivative is $-\frac{c}{n^2}f(k)$, and since $p(k)$ is obviously strictly decreasing in $k$, this is negative. It follows that $f(k) \geq f(n-1)$ for all $k \leq n-1$. Applying appropriate Taylor-expansions yields that for all $k \leq n-1$,
    \[f(k) \geq f(n-1) = p(n-1) - p(n) = \e^{-\frac{c}{n}} \left(\e^{\frac{c}{n^2}} - 1 \right) \geq \left(1 - \frac{c}{n}\right) \left(1+\frac{c}{n^2}-1\right) = \left(1 - \frac{c}{n}\right) \frac{c}{n^2}.\]
    Consequently, we use Equation \eqref{eq:pdif} to deduce that
    \begin{align*}
         \sum_{k=0}^s \alpha_k^* &\leq n  \pm o(1) - p(s+1) - \frac{p(s+2)(p(s+2)-p(s+2)^n)}{1-p(s+2)} - \frac{n^2+n-2}{2} p(s+2)^n \left(1 - \frac{c}{n}\right) \frac{c}{n^2}\\
         &= n - p(s+1) - \frac{p(s+2)(p(s+2)-p(s+2)^n)}{1-p(s+2)} - \frac{c}{2} p(s+2)^n \pm o(1),
    \end{align*}
    where applying $p(k)=\e^{-\frac{c k}{n^2}}$ yields
    \begin{align*}
         \sum_{k=0}^s \alpha_k^* &\leq n - \e^{-\frac{c(s+1)}{n^2}} - \frac{\e^{-\frac{c(s+2)}{n^2}}}{1-\e^{-\frac{c(s+2)}{n^2}}}\left(\e^{-\frac{c(s+2)}{n^2}}-\e^{-\frac{c(s+2)}{n}}\right) - \frac{c}{2} \e^{-\frac{c(s+2)}{n}} \pm o(1),
    \end{align*}
    where applying $\e^{-x} \geq 1-x$ at some points yields
    \begin{align*}
         \sum_{k=0}^s \alpha_k^* &\leq n - 1 +\frac{c(s+1)}{n^2} - \frac{n^2}{c(s+2)} \left(1-\frac{c(s+2)}{n^2}\right) \left(1-\frac{c(s+2)}{n^2}-\e^{-\frac{c(s+2)}{n}}\right) - \frac{c}{2} \e^{-\frac{c(s+2)}{n}} \pm o(1)\\
         &= n - 1 - \frac{n^2}{c(s+2)} \left(1-\frac{c(s+2)}{n^2}\right) \left(1-\frac{c(s+2)}{n^2}-\e^{-\frac{c(s+2)}{n}}\right) - \frac{c}{2} \e^{-\frac{c(s+2)}{n}} \pm o(1)\\
         &= n - 1 + \left(1-\frac{n^2}{c(s+2)}\right) \left(1-\frac{c(s+2)}{n^2}-\e^{-\frac{c(s+2)}{n}}\right) - \frac{c}{2} \e^{-\frac{c(s+2)}{n}} \pm o(1)\\
         &= n - 1 + 1-\frac{n^2}{c(s+2)} - \frac{c(s+2)}{n^2} + 1 - \left(1-\frac{n^2}{c(s+2)} + \frac{c}{2}\right) \e^{-\frac{c(s+2)}{n}} \pm o(1)\\
         &\leq n + 1 - \frac{n^2}{c(s+2)} + \left( \frac{n^2}{c(s+2)} - \frac{c}{2} - 1 \right) \e^{-\frac{c(s+2)}{n}} \pm o(1),
    \end{align*}
    finishing the proof for the case $s\in[n-2]$.
\end{proof}

Analogous to the previous lemma, we give a bound on the sum of $\alpha_k$ values from Lemma \ref{lem:alphaon}, corresponding to the online optimum.

\begin{lemma}\label{lem:onlinesumtos1}
For all $s=0,\dots,n$ and $n\geq\sqrt{c}$,
\begin{align*}
    \sum_{k=0}^s \alpha_k &\geq \left(1 - \frac{c}{n}\right) \left( (s+1) \bigg( \sqrt{\frac{c\pi}{2}} \erf{\sqrt{\frac{c}{2}}} - \frac{c}{n} + \e^{-\frac{c}{2}} \right)\\
    &\quad\quad\quad\quad\quad\quad + n - \sqrt{\frac{\pi}{2c}} (n + c (s+1)) \erf{\sqrt{\frac{c}{2}}\frac{s}{n}} - (n+1) \e^{-\frac{c s^2}{2 n^2}} \bigg).
\end{align*}
\end{lemma}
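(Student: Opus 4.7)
The plan is to isolate a uniform $(1-c/n)$ prefactor from the differences $p(k)-p(k+1)$, and then reduce the remaining double sum to three integrals of a Gaussian-like kernel that evaluate to precisely the $\erf{\cdot}$ and $e^{-c/2}$ expressions appearing on the right-hand side.

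Recall from \cref{lem:alphaon} that $\alpha_k=(p(k)-p(k+1))S_k$ with $S_k=\sum_{i=1}^k P_i+\sum_{i=k+1}^n(n-i+1)P_i$, where $P_i=\prod_{j=1}^{i-1}p(j)=e^{-ci(i-1)/(2n^2)}$. Writing $p(k)-p(k+1)=p(k+1)(e^{c/n^2}-1)\geq (c/n^2)\,p(k+1)$ (from $e^x\geq 1+x$) and using $p(k+1)\geq e^{-c/n}\geq 1-c/n$ for $k\leq n-1$ gives the uniform bound $p(k)-p(k+1)\geq(1-c/n)(c/n^2)$. For $k=n$ one has $p(n)-p(n+1)=p(n)\geq 1-c/n\geq(1-c/n)(c/n^2)$ as soon as $n\geq\sqrt{c}$, which is the hypothesis of the lemma. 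Therefore
\[\sum_{k=0}^s\alpha_k\;\geq\;\left(1-\frac{c}{n}\right)\frac{c}{n^2}\sum_{k=0}^s S_k.\]

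Next I would swap the order of summation in $\sum_{k=0}^s S_k$. Fixing $i$ and counting the values of $k\in\{0,\dots,s\}$ for which $i$ contributes to each of the two inner sums gives coefficient $(s-i+1)+i(n-i+1)=(s+1)+i(n-i)$ when $i\leq s$ and $(s+1)(n-i+1)$ when $i\geq s+1$, so that
\[\sum_{k=0}^s S_k \;=\; (s+1)\sum_{i=1}^s P_i \;+\; \sum_{i=1}^s i(n-i)P_i \;+\; (s+1)\sum_{i=s+1}^n (n-i+1)P_i.\]
Each of these is a Riemann sum of a decreasing integrand. Using the slack inequality $P_i\geq e^{-ci^2/(2n^2)}$ and the monotone comparison $\sum_{i=a}^b f(i)\geq \int_a^{b+1}\! f(t)\,\diff t$, and substituting $u=t/n$, the three sums are lower bounded (after multiplication by $c/n^2$) by the closed-form integrals
\[\int_0^{s/n}\! e^{-cu^2/2}\diff u,\qquad \int_0^{s/n}\! u(1-u)e^{-cu^2/2}\diff u,\qquad \int_{s/n}^{1}(1-u)e^{-cu^2/2}\diff u,\]
each of which evaluates explicitly by integration by parts via $\int u\,e^{-cu^2/2}\diff u = -e^{-cu^2/2}/c$. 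The first yields $\sqrt{\pi/(2c)}\erf{\sqrt{c/2}\,s/n}$; the second and third produce exactly the combination $\sqrt{c\pi/2}\erf{\sqrt{c/2}}$, $e^{-c/2}$, $\sqrt{\pi/(2c)}\erf{\sqrt{c/2}\,s/n}$ and $e^{-cs^2/(2n^2)}$ that appears in the bracket of the statement.

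The main obstacle is the bookkeeping that produces the three coefficients $(s+1)$, $(n+1)$ and $n+c(s+1)$ in exactly the right form, and in particular the $-c/n$ inside the first bracket of the statement. That $-c/n$ arises when one writes $c(s+1)(n+1)/n=c(s+1)+c(s+1)/n$ after evaluating the $T_3$-integral against the weight $(s+1)$; the $(n+1)$ in front of $e^{-cs^2/(2n^2)}$ stems from combining the boundary terms of the $T_2$-integral (which already contributes $(s-n)\,e^{-cs^2/(2n^2)}$) with the $(s+1)$-weighted boundary contribution of $T_3$. The delicate point is to make sure that every Riemann-to-integral discretization error, together with the slack in $P_i\geq e^{-ci^2/(2n^2)}$, points in the direction compatible with a lower bound, so that all sub-leading corrections can be absorbed uniformly into the outer $(1-c/n)$ prefactor.
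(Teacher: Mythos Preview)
Your overall strategy matches the paper's: extract a uniform lower bound $(1-c/n)\,c/n^2$ on $p(k)-p(k+1)$, swap the order of summation, replace the resulting sums by integrals, and evaluate. However, there is a genuine gap in the monotonicity step.

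You assert that ``each of these is a Riemann sum of a decreasing integrand'', but this is false for the middle term $\sum_{i=1}^{s} i(n-i)\,e^{-ci^{2}/(2n^{2})}$. For small $i$ the factor $i(n-i)$ grows linearly while the Gaussian factor is essentially $1$, so the summand \emph{increases} initially (e.g.\ from $i=1$ to $i=2$ it roughly doubles). Hence the comparison $\sum f(i)\ge\int f$ with the limits you wrote is not available for this sum, and the direction of the error is wrong for a lower bound. Your final paragraph flags the bookkeeping as ``delicate'', but it is not merely bookkeeping: with the three-term split the required monotonicity simply does not hold.

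The paper circumvents this by a different algebraic rearrangement. After the swap it writes
\[
\sum_{k=0}^{s} S_k \;=\; (s+1)\sum_{k=1}^{n}(n-k+1)\,e^{-ck^{2}/(2n^{2})} \;-\; \sum_{k=1}^{s}(n-k)(s-k+1)\,e^{-ck^{2}/(2n^{2})},
\]
which is algebraically equivalent to your three-term expression (check that $(s+1)+i(n-i)=(s+1)(n-i+1)-(n-i)(s-i+1)$). The point of this regrouping is that \emph{both} summands are now products of nonincreasing factors and hence monotonically decreasing in $k$. The first sum carries a positive sign, so it is bounded below by $\int_{1}^{n}$; the second carries a negative sign, so bounding it above by $\int_{0}^{s}$ still yields a lower bound on the whole. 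This is exactly the missing idea in your sketch, and once you adopt it the remaining integral evaluation and the emergence of the $-c/n$, $(n+1)$, and $n+c(s+1)$ coefficients go through as you outline.
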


\begin{proof}
    First, recall from Equation \eqref{eq:pdif} that it holds for all $k\leq n-1$ that
    \[p(k)-p(k+1) \geq \left(1 - \frac{c}{n}\right) \frac{c}{n^2}.\]
    Also note that $p(n)-p(n+1) = p(n) - 0 = \e^{-\frac{c}{n}} \geq 1-\frac{c}{n} \geq \left(1 - \frac{c}{n}\right) \frac{c}{n^2}$ for $n\geq\sqrt{c}$. The imprecision of this bound does not hurt our overall result. Applying this to Lemma \ref{lem:alphaon} yields for large values of $n$ and all $s=0,\dots,n$ that
    \begin{align*}
        \alpha_k &\geq \left(1 - \frac{c}{n}\right) \frac{c}{n^2}\left(\sum_{i=1}^k \prod_{j=1}^{i-1} p(j) + \sum_{i=k+1}^n (n-i+1) \prod_{j=1}^{i-1} p(j)\right)\\
        &\geq \left(1 - \frac{c}{n}\right) \frac{c}{n^2}\left(\sum_{i=1}^k \e^{-\frac{c i^2}{2n^2}} + \sum_{i=k+1}^n (n-i+1) \e^{-\frac{c i^2}{2n^2}}\right).
    \end{align*}
    Consequently, for all $s=0,\dots,n$,
    \begin{align*}
        \sum_{k=0}^s \alpha_k &\geq \left(1 - \frac{c}{n}\right) \frac{c}{n^2} \sum_{k=0}^s \left(\sum_{i=1}^k \e^{-\frac{c i^2}{2n^2}} + \sum_{i=k+1}^n (n-i+1) \e^{-\frac{c i^2}{2n^2}}\right)\\
        &= \left(1 - \frac{c}{n}\right) \frac{c}{n^2} \left(\sum_{k=1}^s ((s+1-k)+k(n-k+1)) \e^{-\frac{c k^2}{2n^2}} + \sum_{k=s+1}^n (s+1)(n-k+1) \e^{-\frac{c k^2}{2n^2}}\right)\\
        &= \left(1 - \frac{c}{n}\right) \frac{c}{n^2}(s+1) \left(\sum_{k=1}^s \left(\frac{k(n-k)}{s+1}+1\right) \e^{-\frac{c k^2}{2n^2}} + \sum_{k=s+1}^n (n-k+1) \e^{-\frac{c k^2}{2n^2}}\right)\\
        &= \left(1 - \frac{c}{n}\right) \frac{c}{n^2}(s+1) \left(\sum_{k=1}^s \left(\frac{k(n-k)}{s+1}-(n-k)\right) \e^{-\frac{c k^2}{2n^2}} + \sum_{k=1}^n (n-k+1) \e^{-\frac{c k^2}{2n^2}}\right)\\
        &= \left(1 - \frac{c}{n}\right) \frac{c}{n^2}(s+1) \left(\sum_{k=1}^n (n-k+1) \e^{-\frac{c k^2}{2n^2}} - \sum_{k=1}^s (n-k)\left(\frac{s-k+1}{s+1}\right) \e^{-\frac{c k^2}{2n^2}}\right)\\
        &= \left(1 - \frac{c}{n}\right) \frac{c}{n^2} \left((s+1)\sum_{k=1}^n (n-k+1) \e^{-\frac{c k^2}{2n^2}} - \sum_{k=1}^s (n-k)(s-k+1) \e^{-\frac{c k^2}{2n^2}}\right).
    \end{align*}
    Here, the summands in both sums are monotonically decreasing in $k$, so that we can approximate them by using integrals.
    \begin{align*}
        \sum_{k=0}^s \alpha_k &\geq \left(1 - \frac{c}{n}\right) \frac{c}{n^2} \left((s+1)\int_{1}^{n} (n-k+1) \e^{-\frac{c k^2}{2n^2}}\,dk - \int_{0}^s (n-k)(s-k+1) \e^{-\frac{c k^2}{2n^2}}\,dk\right)\\
        &= \left(1 - \frac{c}{n}\right) \frac{c}{n^2} \bigg((s+1) \frac{n^2}{c} \left(\frac{n+1}{n} \sqrt{\frac{\pi c}{2}} \left( \erf{\sqrt{\frac{c}{2}}} - \erf{\sqrt{\frac{c}{2}}\frac{1}{n}} \right) + \left(\e^{-\frac{c}{2}} - \e^{-\frac{c}{2n^2}}\right)  \right)\\
        &~~~~~~~~~ - \frac{n^2}{c} \left( \sqrt{\frac{\pi}{2c}} (c(s+1)+n) \erf{\sqrt{\frac{c}{2}}\frac{s}{n}} + (n+1)\e^{-\frac{c s^2}{2n^2}} -n-(s+1) \right) \bigg)\\
        &= \left(1 - \frac{c}{n}\right) \bigg( (s+1) \left( \frac{n+1}{n} \sqrt{\frac{\pi c}{2}} \left( \erf{\sqrt{\frac{c}{2}}} - \erf{\sqrt{\frac{c}{2}}\frac{1}{n}} \right) + \e^{-\frac{c}{2}} - \e^{-\frac{c}{2n^2}} + 1 \right)\\
        &~~~ + n - \sqrt{\frac{\pi}{2c}} (c(s+1)+n) \erf{\sqrt{\frac{c}{2}}\frac{s}{n}} - (n+1)\e^{-\frac{c s^2}{2n^2}} \bigg).
    \end{align*}
    Note that $n \leq n+1$ and that the part multiplied with it is greater than 0. Also, $\e^{-\frac{c}{2n^2}} \leq 1$. Further, Taylor-expansion of $\erf{x}$ in $x=0$ yields that $\erf{x} \leq (2x)/\sqrt{\pi}$ for all $x\geq0$. Applying this expansion at $x = \sqrt{\frac{c}{2}}\frac{1}{n}$ as well as the other bounds implies the result.
\end{proof}

What makes the further analysis complicated is that, depending on the value of $s$, different terms of the respective sums in our lower bound dominate their value. Also, for this simplified bound it is still not possible to easily calculate the minima. Thus, in Lemma \ref{lem:offlinesumtos2} and Lemma \ref{lem:onlinesumtos2}, following below, we calculate lower bounds of simple functions, e.g., small-degree polynomials in $s$. For a sufficiently small constant $\varepsilon$, we consider the cases where $s$ is in any of the following intervals $[0, \varepsilon n]$, $[\varepsilon n, \frac{n}{4}]$, $[\frac{n}{4}, \frac{n}{2}]$, $[\frac{n}{2},\frac{3n}{4}]$, $[\frac{3n}{4}, n-2]$, or if it is $n-1$ or $n$. Then the prophet inequality is at least the minimum over all cases. Plots of the formulas of the bounds for the different ranges of $s$ given $n=10^5$ can be seen in Figure \ref{fig:plot4}.

\begin{figure}[t]
    \centering
    \resizebox{0.5\textwidth}{!}{\input{images/plot4.pgf}}
    \caption{Plots of the bound from Lemmas \ref{lem:offlinesumtos1} and \ref{lem:onlinesumtos1} in orange and bound from Lemmas \ref{lem:offlinesumtos2} and \ref{lem:onlinesumtos2} in green for $s=0,\dots,n$ with $n=10^5$}
    \label{fig:plot4}
\end{figure}

We begin by providing bounds on the sum over the $\alpha^*_k$ values from Lemma \ref{lem:alphaoff}, corresponding to the offline optimum, on restricted ranges of $s$ using simple functions.

\begin{lemma}\label{lem:offlinesumtos2}
\begin{align*}
&\text{Let $\varepsilon>0$ be a small constant. Then, }\alpha_0^* \leq \frac{c}{2} + o(1).\\
&\text{For $0 < s < \varepsilon n$, define $\varepsilon^{\prime\prime} = \varepsilon \frac{c(c+2)}{2}$, then }\sum_{k=0}^s \alpha_k^* \leq \frac{c(s+1)}{2} + \varepsilon^{\prime\prime} \pm o(1).\\
&\text{For all $\varepsilon n \leq s \leq \frac{1}{4}n$, }\sum_{k=0}^s \alpha_k^* \leq \frac{c}{2}s - \frac{64-16c+2c^2-64\e^{-\frac{c}{4}}}{n c}s^2 \pm \O(1).\\
&\text{For all $\frac{1}{4}n \leq s \leq n-2$, }\sum_{k=0}^s \alpha_k^* \leq n - \frac{n^2}{c s} \left(1-\e^{-\frac{c}{4}}\right) \pm \O(1).\\
&\text{For $s\in\{n-1,n\}$, }\sum_{k=0}^{s} \alpha_k^* \leq n \pm \O(1).\\
\end{align*}
\end{lemma}

\begin{proof}
    First, consider $s=0$, where we know from Lemma \ref{lem:alphaoff} that
    \begin{align*}
        \alpha_0^* &= \left(1-\e^{-\frac{c}{n^2}}\right)\frac{n(n+1)}{2} + \sum_{i=2}^n \e^{-\frac{c i}{n^2}}-\e^{-\frac{2c i}{n^2}}-i \cdot \e^{-\frac{2c(i-1)}{n^2}} \left(\e^{-\frac{c}{n^2}}-\e^{-\frac{2c}{n^2}}\right).
    \end{align*}
    Replacing the $i-1$ by $n$ and applying appropriate Taylor-approximations on $e^{-x}$ in $x=0$ yields that
    \begin{align*}
        \alpha_0^* &\leq \frac{c}{n^2}\frac{n(n+1)}{2} + \sum_{i=2}^n 1-\frac{ci}{n^2}+\frac{c^2i^2}{2n^4}-1+\frac{2ci}{n^2}-i \cdot \left(1-\frac{2c}{n}\right)\left(1 - \frac{c}{n^2} - 1 + \frac{2c}{n^2} - \frac{c^2}{n^4}\right)\\
        &= \frac{c}{2}\frac{n+1}{n} + \sum_{i=2}^n \frac{ci}{n^2}+\frac{c^2i^2}{2n^4} - i \cdot \left(1-\frac{2c}{n}\right)\left(\frac{c}{n^2} - \frac{c^2}{n^4}\right)\\
        &= \frac{c}{2}\frac{n+1}{n} + \sum_{i=2}^n \frac{c^2i^2}{2n^4} - \frac{c^2i}{n^4} + \frac{2c^2i}{n^3} - \frac{2c^3i}{n^5} \leq \frac{c}{2}\frac{n+1}{n} + \sum_{i=2}^n \frac{c^2i^2}{2n^4} + \frac{2c^2i}{n^3}\\
        &\leq \frac{c}{2}\frac{n+1}{n} + \sum_{i=2}^n \frac{c^2}{2n^2} + \frac{2c^2}{n^2} = \frac{c}{2}\frac{n+1}{n} + (n-1) \frac{5c^2}{2n^2} = \frac{c}{2} + o(1).
    \end{align*}
    Second, consider $1\leq s < \varepsilon n$ for small $\epsilon>0$. In this case, we use the inequality $\e^{-\frac{c(s+2)}{n}} \leq 1-\frac{c(s+2)}{n}+\frac{c^2(s+2)^2}{2n^2}$ on Lemma \ref{lem:offlinesumtos1} in order to see that
    \begin{align*}
         \sum_{k=0}^s \alpha_k^* &\leq n + 1 - \frac{n^2}{c(s+2)} + \left( \frac{n^2}{c(s+2)} - \frac{c}{2} - 1 \right) \left(1-\frac{c(s+2)}{n}+\frac{c^2(s+2)^2}{2n^2}\right) + o(1),
    \end{align*}
    where
    \begin{align*}
         &\left( \frac{n^2}{c(s+2)} - \frac{c}{2} - 1 \right) \left(1-\frac{c(s+2)}{n}+\frac{c^2(s+2)^2}{2n^2}\right)\\
         &~~= \frac{n^2}{c(s+2)} - \frac{c}{2} - 1 - n + \frac{c^2(s+2)}{2n} +\frac{c(s+2)}{n} + \frac{c(s+2)}{2} - \frac{c^3(s+2)^2}{4n^2} - \frac{c^2(s+2)^2}{2n^2}\\
         &~~= \frac{n^2}{c(s+2)} + \frac{c(s+1)}{2} - 1 - n + \frac{c^2s}{2n} +\frac{c s}{n} - \frac{c^3 s^2}{4n^2} - \frac{c^2 s^2}{2n^2} \pm o(1)\\
         &~~\leq \frac{n^2}{c(s+2)} + \frac{c(s+1)}{2} - 1 - n + \varepsilon \frac{c(c+2)}{2} \pm o(1).
    \end{align*}
    Plugging this into the earlier formula yields the result.\\
    Third, consider $\varepsilon n \leq s \leq \frac{1}{4}n$. Recall from Lemma \ref{lem:offlinesumtos1}, that
    \begin{align*}
        \sum_{k=0}^s \alpha_k^* &\leq n + 1 - \frac{n^2}{c(s+2)} + \left( \frac{n^2}{c(s+2)} - \frac{c}{2} - 1 \right) \e^{-\frac{c(s+2)}{n}} \pm o(1)\\
        &= n + 1 - \frac{n^2}{c(s+2)} \left(1 - \e^{-\frac{c(s+2)}{n}}\right) - \left(\frac{c}{2} + 1\right) \e^{-\frac{c(s+2)}{n}} \pm o(1)\\
        &\leq n + 1 - \frac{n^2}{c(s+2)} \left(1 - \e^{-\frac{c(s+2)}{n}} \right) \pm o(1)\\
        &= n + 1 - \left( \frac{n^2}{c s} + \frac{n^2}{c(s+2)} - \frac{n^2}{c s}\right)\left(1 - \e^{-\frac{c(s+2)}{n}}\right) \pm o(1)\\
        &= n + 1 - \left( \frac{n^2}{c s} - \frac{2n^2}{c s(s+2)} \right)\left(1 - \e^{-\frac{c(s+2)}{n}}\right) \pm o(1)\\
        &= n - \frac{n^2}{c s} \left(1 - \e^{-\frac{c(s+2)}{n}}\right) \pm \O(1) \leq n - \frac{n^2}{c s} \left(1 - \e^{-\frac{c s}{n}}\right) \pm \O(1).
    \end{align*}
    Next, we define $g(s)$ to be a cubic function fulfilling to cross $\e^{-\frac{c s}{n}}$ at $s=0$ and $s=\frac{1}{4}n$ and to have the same slope and curvature at $s=0$, obtaining $g(s) = 1 - \frac{c}{n}s + \frac{c^2}{2n^2}s^2 - \frac{64-16c+2c^2-64\e^{-\frac{c}{4}}}{n^3}s^3$. We show that $g(s) \geq \e^{-\frac{c s}{n}}$ in the interval of interest by defining $\hat{g}(s) = g(s) - \e^{-\frac{c s}{n}}$ and showing $\hat{g}(s) \geq 0$ for all $s \in [0,\frac{1}{4}n]$. By construction, it holds that $\hat{g}(0) = \hat{g}^\prime(0) = \hat{g}^{\prime\prime}(0) = 0$. Further, one can easily calculate $\hat{g}^{\prime\prime\prime}(s) = \frac{1}{n^3} \left( c^3 \e^{-\frac{c s}{n}} - z \right)$, where $z = 6\left( 64-16c+2c^2-64\e^{-\frac{1}{4}c} \right)$. Due to the monotonicity of the exponential function, $\hat{g}^{\prime\prime\prime}$ takes its lowest value at $s=\frac{1}{4}n$, but one can easily check that this is still positive for $c=9.71$. Consequently, the change of the curvature of $g$ is always positive over $[0,\frac{1}{4}n]$. Since $\hat{g}^{\prime\prime}(0) = 0$, this means that $g$ is convex in the whole interval, and due to $g(0) = 0$, this shows the claim. Plugging it into the earlier formula, we get that
    \begin{align*}
        \sum_{k=0}^s \alpha_k^* &\leq n - \frac{n^2}{c s} \left(1 - 1 + \frac{c}{n}s - \frac{c^2}{2n^2}s^2 + \frac{64-16c+2c^2-64\e^{-\frac{c}{4}}}{n^3}s^3\right) \pm \O(1)\\
        &= \frac{c}{2}s - \frac{64-16c+2c^2-64\e^{-\frac{c}{4}}}{n c}s^2 \pm \O(1).
    \end{align*}
    Fourth, consider $\frac{1}{4}n \leq s \leq n-2$. Analogous calculations to the previous case and observing that $\e^{-\frac{c s}{n}} \leq \e^{-\frac{c}{4}}$ directly implies the result.\\
    Fifth, consider $s\in\{n-1,n\}.$ The result follows directly from Lemma \ref{lem:offlinesumtos1}.
\end{proof}

We now proceed analogously for the sum over the $\alpha_k$ values from Lemma \ref{lem:alphaon}, corresponding to the online optimum, deriving simple bounds on restricted ranges of $s$.

\begin{lemma}\label{lem:onlinesumtos2}
Let $\varepsilon >0$ be a small constant and define
\begin{align*}
&\textstyle{b_1 = \sqrt{\frac{c\pi}{2}} \erf{\sqrt{\frac{c}{2}}} + \e^{-\frac{c}{2}} - 1 \approx 2.906,}\\
&\textstyle{b_2 = \sqrt{\frac{c\pi}{2}} \erf{\sqrt{\frac{c}{2}}} + \e^{-\frac{c}{2}} + 4 \e^{-\frac{c}{8}} \left(\e^{\frac{3c}{32}} - 1\right) - \e^{-0.85^2} - c \sqrt{\frac{\pi}{2c}} \left(\erf{0.85} - \frac{1.7 \e^{-0.85^2}}{\sqrt{\pi}}\right) \approx 3.994,}\\
&\textstyle{b_3 = 1 - \e^{-\frac{c}{8}} \left(2\e^{\frac{3c}{32}} - 1\right) - \sqrt{\frac{\pi}{2c}} \left(\erf{0.85} - \frac{1.7 \e^{-0.85^2}}{\sqrt{\pi}}\right) \approx -0.302,}\\
&\textstyle{b_4 = \sqrt{\frac{c\pi}{2}} \erf{\sqrt{\frac{c}{2}}} + \e^{-\frac{c}{2}} + 4 \e^{-\frac{9c}{32}} \left(\e^{\frac{5c}{32}} - 1\right) - \e^{-1.35^2} - c \sqrt{\frac{\pi}{2c}} \left(\erf{1.35} - \frac{2.7 \e^{-1.35^2}}{\sqrt{\pi}}\right) \approx 1.948,}\\
&\textstyle{b_5 = 1 - \e^{-\frac{9c}{32}} \left(3\e^{\frac{5c}{32}} - 2\right) - \sqrt{\frac{\pi}{2c}} \left(\erf{1.35} - \frac{2.7 \e^{-1.35^2}}{\sqrt{\pi}}\right) \approx -0.041,}\\
&\textstyle{b_6 = \sqrt{\frac{c\pi}{2}} \erf{\sqrt{\frac{c}{2}}} + \e^{-\frac{c}{2}} + 4 \e^{-\frac{c}{2}} \left(\e^{\frac{7c}{32}} - 1\right) - c \sqrt{\frac{\pi}{2c}} \erf{\sqrt{\frac{c}{2}}} \approx 0.237,}\\
&\textstyle{b_7 = 1 - \e^{-\frac{c}{2}} \left(4\e^{\frac{7c}{32}} - 3\right) - \sqrt{\frac{\pi}{2c}} \erf{\sqrt{\frac{c}{2}}} \approx 0.361 \text{ and}}\\
&\textstyle{\varepsilon^\prime = c \varepsilon \left( \sqrt{\frac{c\pi}{2}} \erf{\sqrt{\frac{c}{2}}} + \e^{-\frac{c}{2}} \right) \approx 37.928\varepsilon.}\\
&\text{Then, for $s < \varepsilon n$, }\sum_{k=0}^s \alpha_k \geq (s+1)  b_1 - \varepsilon^\prime - o(1).\\
&\text{For $\varepsilon n \leq s \leq \frac{1}{4}n$, }\sum_{k=0}^s \alpha_k \geq s \left( b_1 - \frac{c s}{n} \frac{64+c}{128} \right) \pm \O(1).\\
&\text{For $\frac{1}{4}n < s \leq \frac{1}{2}n$, }\sum_{k=0}^s \alpha_k \geq s \left(b_2 - \frac{c s}{n} \e^{-0.85^2} \right) + n b_3 \pm \O(1).\\
&\text{For $\frac{1}{2}n < s \leq \frac{3}{4}n$, }\sum_{k=0}^s \alpha_k \geq s \left( b_4 - \frac{c s}{n} \e^{-1.35^2} \right) + n b_5 \pm \O(1).\\
&\text{For $\frac{3}{4}n < s \leq n$, }\sum_{k=0}^s \alpha_k \geq s b_6 + n b_7 \pm \O(1).
\end{align*}	    
\end{lemma}

\begin{proof}
    The structure of the proof is as follows. In all cases, we will use Lemma \ref{lem:onlinesumtos1}, which says for all $s=0,\dots,n$ and $n\geq\sqrt{c}$ that
    \begin{align*}
    \sum_{k=0}^s \alpha_k &\geq \left(1 - \frac{c}{n}\right) \left( (s+1) \bigg( \sqrt{\frac{c\pi}{2}} \erf{\sqrt{\frac{c}{2}}} - \frac{c}{n} + \e^{-\frac{c}{2}} \right)\\
    &\quad\quad\quad\quad\quad\quad + n - \sqrt{\frac{\pi}{2c}} (n + c (s+1)) \erf{\sqrt{\frac{c}{2}}\frac{s}{n}} - (n+1) \e^{-\frac{c s^2}{2 n^2}} \bigg).
\end{align*}
    Then, in all five cases, we will
    \begin{enumerate}
        \item upper bound $\erf{x}$ by some linear function $f\in\{f_1,\dots,f_5\}$ and then apply it for $x=\sqrt{\frac{c}{2}}\frac{s}{n}$,
        \item upper bound $\e^{-\frac{c x^2}{2 n^2}}$ by some polynomial function $g\in\{g_1,\dots,g_5\}$ and then apply it for $x=s$,
        \item simplify.
    \end{enumerate}
    For an overview, we will next specify all of the functions described above. For approximating the error function, generally note that a linear function $a x + b$ that crosses $\erf{x}$ in $x=d$ and has the same slope in $x=d$ fulfills $a=\frac{2\e^{-d^2}}{\sqrt{\pi}}$ and $b=\erf{d}-\frac{2d\e^{-d^2}}{\sqrt{\pi}}$. Since $\erf{x}$ is concave for $x \geq 0$, such linear functions are upper bounds on $\erf{x}$ by construction. For the first two cases, we choose $d=0$, for the third case $d=0.85$ and for the fourth case $d=1.35$, so we upper bound $\erf{x}$ by
    \begin{align*}
        f_1(x) &= f_2(x) = \frac{2}{\sqrt{\pi}}x,\\
        f_3(x) &= \frac{2\e^{-0.85^2}}{\sqrt{\pi}} x + \erf{0.85} - \frac{1.7 \e^{-0.85^2}}{\sqrt{\pi}},\\
         f_4(x) &= \frac{2\e^{-1.35^2}}{\sqrt{\pi}} x + \erf{1.35} - \frac{2.7 \e^{-1.35^2}}{\sqrt{\pi}}.
    \end{align*}
    Lastly, since $\erf{x}$ is monotonically increasing and the greatest $x$ we ever apply it for is $\sqrt{\frac{c}{2}}$, we choose
    \[f_5(x) = \erf{\sqrt{\frac{c}{2}}}.\]
    For approximating $\e^{-\frac{c x^2}{2 n^2}}$ in the first case, we choose the trivial upper bound $g_1(x) = 1$.
    Second, we choose the Taylor-expansion in $x=0$ of the next higher degree that is an upper bound,
    \[g_2(x) = 1 - \frac{c x^2}{2n^2} + \frac{c^2x^4}{8n^4}.\]
    In the third case, we choose the linear function which crosses $\e^{-\frac{c x^2}{2 n^2}}$ in $x=\frac{1}{4}n$ and $x=\frac{1}{2}n$,
    \[g_3(x) = -\frac{4 \e^{-\frac{c}{8}} \left(\e^{\frac{3c}{32}} - 1\right)}{n} x + \e^{-\frac{c}{8}} \left(2\e^{\frac{3c}{32}} - 1\right).\]
    In order to see that this is in fact an upper bound between $x=\frac{1}{4}n$ and $x=\frac{1}{2}n$, note that $\e^{-\frac{c x^2}{2 n^2}}$ has exactly one inflection point for $x \geq 0$, which is at $x = n/\sqrt{c} \in (\frac{1}{4}n, \frac{1}{2}n)$. Before this point, the function is concave and afterwards it is convex. Further, the slope of $\e^{-\frac{c x^2}{2 n^2}}$ in $x=\frac{1}{4}n$ is steeper than the slope of $g_3(x)$. Consequently, it has exactly two intersections with $g_3(x)$ which we know by construction and we can conclude our claim.\\
    The last two cases will have two linear functions as upper bounds on $\e^{-\frac{c x^2}{2 n^2}}$ that are constructed analogously to the previous case. They are upper bounds by construction because we have already seen that $\e^{-\frac{c x^2}{2 n^2}}$ is convex in the interval starting from $x=\frac{1}{2}n$. The remaining functions are
    \begin{align*}
        g_4(x) &= -\frac{4 \e^{-\frac{9c}{32}} \left(\e^{\frac{5c}{32}} - 1\right)}{n} x + \e^{-\frac{9c}{32}} \left(3\e^{\frac{5c}{32}} - 2\right),\\
        g_5(x) &= -\frac{4 \e^{-\frac{c}{2}} \left(\e^{\frac{7c}{32}} - 1\right)}{n} x + \e^{-\frac{c}{2}} \left(4\e^{\frac{7c}{32}} - 3\right).
    \end{align*}
    Now, for the first case, $s < \varepsilon n$, we use $\erf{\sqrt{\frac{c}{2}}\frac{s}{n}} \leq f_1(\sqrt{\frac{c}{2}}\frac{s}{n}) = \frac{2}{\sqrt{\pi}} \sqrt{\frac{c}{2}}\frac{s}{n}$ and $\e^{-\frac{c s^2}{2n^2}} \leq g_1(s) = 1$, so
    \begin{align*}
        \sum_{k=0}^s \alpha_k &\geq \left(1 - \frac{c}{n}\right) \bigg( (s+1) \left( \sqrt{\frac{c\pi}{2}} \erf{\sqrt{\frac{c}{2}}} - \frac{c}{n} + \e^{-\frac{c}{2}} \right)\\
        &\quad\quad\quad\quad\quad\quad + n - \sqrt{\frac{\pi}{2c}} (n + c (s+1)) \frac{2}{\sqrt{\pi}} \sqrt{\frac{c}{2}}\frac{s}{n} - (n+1) \cdot 1 \bigg)\\
        &= \left(1 - \frac{c}{n}\right) \left( (s+1) \left( \sqrt{\frac{c\pi}{2}} \erf{\sqrt{\frac{c}{2}}} - \frac{c}{n}  + \e^{-\frac{c}{2}} \right) - 1 - (n + c (s+1)) \frac{s}{n} \right)\\
        &= \left(1 - \frac{c}{n}\right) (s+1) \left( \sqrt{\frac{c\pi}{2}} \erf{\sqrt{\frac{c}{2}}} - \frac{c(s+1)}{n} + \e^{-\frac{c}{2}} - 1\right)\\
        &= (s+1) \left( \sqrt{\frac{c\pi}{2}} \erf{\sqrt{\frac{c}{2}}} + \e^{-\frac{c}{2}} - 1\right) - \frac{c}{n} (s+1) \left( \sqrt{\frac{c\pi}{2}} \erf{\sqrt{\frac{c}{2}}} + \e^{-\frac{c}{2}} - \frac{c(s+1)}{n} \right)\\
        &\geq (s+1) \left( \sqrt{\frac{c\pi}{2}} \erf{\sqrt{\frac{c}{2}}} + \e^{-\frac{c}{2}} - 1\right) - c \varepsilon \left( \sqrt{\frac{c\pi}{2}} \erf{\sqrt{\frac{c}{2}}} + \e^{-\frac{c}{2}} \right) - o(1).
    \end{align*}
    For the second case, $\varepsilon n \leq s \leq \frac{1}{4}n$, we use $\erf{\sqrt{\frac{c}{2}}\frac{s}{n}} \leq f_2(\sqrt{\frac{c}{2}}\frac{s}{n}) = \frac{2}{\sqrt{\pi}} \sqrt{\frac{c}{2}}\frac{s}{n}$ and $\e^{-\frac{c s^2}{2n^2}} \leq g_2(s) = 1 - \frac{c s^2}{2n^2} + \frac{c^2s^4}{8n^4}$. Before we plug that in Lemma \ref{lem:offlinesumtos1}, we combine all up to constant terms into $\O(1)$.
    \begin{align*}
        \sum_{k=0}^s \alpha_k &\geq \left(1 - \frac{c}{n}\right) \left( (s+1) \bigg( \sqrt{\frac{c\pi}{2}} \erf{\sqrt{\frac{c}{2}}} - \frac{c}{n} + \e^{-\frac{c}{2}} \right)\\
        &\quad\quad\quad\quad\quad\quad + n - \sqrt{\frac{\pi}{2c}} (n + c (s+1)) \erf{\sqrt{\frac{c}{2}}\frac{s}{n}} - (n+1) \e^{-\frac{c s^2}{2 n^2}} \bigg)\\
        &= s \left( \sqrt{\frac{c\pi}{2}} \erf{\sqrt{\frac{c}{2}}} + \e^{-\frac{c}{2}} \right) + n - \sqrt{\frac{\pi}{2c}} (n + c s) \erf{\sqrt{\frac{c}{2}}\frac{s}{n}} - n \cdot \e^{-\frac{c s^2}{2 n^2}} \pm \O(1)\\
        &\geq s \left( \sqrt{\frac{c\pi}{2}} \erf{\sqrt{\frac{c}{2}}} + \e^{-\frac{c}{2}} \right) + n - \sqrt{\frac{\pi}{2c}} (n + c s) \frac{2}{\sqrt{\pi}}\frac{\sqrt{c}}{\sqrt{2}}\frac{s}{n} - n \left(1 - \frac{c s^2}{2n^2} + \frac{c^2s^4}{8n^4}\right) \pm \O(1)\\
        &=  s \left( \sqrt{\frac{c\pi}{2}} \erf{\sqrt{\frac{c}{2}}} + \e^{-\frac{c}{2}} \right) - (n + c s) \frac{s}{n}  + n \left(\frac{c s^2}{2n^2} - \frac{c^2s^4}{8n^4}\right) \pm \O(1)\\
        &=  s \left( \sqrt{\frac{c\pi}{2}} \erf{\sqrt{\frac{c}{2}}} + \e^{-\frac{c}{2}} - 1 - \frac{c s}{n}\right) + \frac{c s^2}{2n} - \frac{c^2s^4}{8n^3} \pm \O(1).
    \end{align*}
    To get rid of the $s^4$, we apply $s \leq \frac{1}{4}n$ twice, yielding
    \begin{align*}
        \sum_{k=0}^s \alpha_k &\geq s \left( \sqrt{\frac{c\pi}{2}} \erf{\sqrt{\frac{c}{2}}} + \e^{-\frac{c}{2}} - 1 - \frac{c s}{n}\right) + \frac{c s^2}{2n} - \frac{c^2s^2}{128n} \pm \O(1)\\
        &= s \left( \sqrt{\frac{c\pi}{2}} \erf{\sqrt{\frac{c}{2}}} + \e^{-\frac{c}{2}} - 1 - \frac{c s}{n}\right) + s\frac{c s}{n} \left(\frac{64-c}{128}\right) \pm \O(1)\\
        &= s \left( \sqrt{\frac{c\pi}{2}} \erf{\sqrt{\frac{c}{2}}} + \e^{-\frac{c}{2}} - 1 - \frac{c s}{n} \frac{64+c}{128} \right) \pm \O(1).
    \end{align*}
    For the third case, $\frac{1}{4}n < s \leq \frac{1}{2}n$, we use $\erf{\sqrt{\frac{c}{2}}\frac{s}{n}} \leq f_3(\sqrt{\frac{c}{2}}\frac{s}{n}) = \frac{2\e^{-0.85^2}}{\sqrt{\pi}} \sqrt{\frac{c}{2}}\frac{s}{n} + \erf{0.85} - \frac{1.7 \e^{-0.85^2}}{\sqrt{\pi}}$ and $\e^{-\frac{c s^2}{2n^2}} \leq g_3(s) = -4 \e^{-\frac{c}{8}} \left(\e^{\frac{3c}{32}} - 1\right) \frac{s}{n} + \e^{-\frac{c}{8}} \left(2\e^{\frac{3c}{32}} - 1\right)$. Before we plug that in Lemma \ref{lem:offlinesumtos1}, recall that with combined up to constant terms, this yields
    \begin{align*}
        \sum_{k=0}^s \alpha_k &\geq s \left( \sqrt{\frac{c\pi}{2}} \erf{\sqrt{\frac{c}{2}}} + \e^{-\frac{c}{2}} \right) + n \left(1 - \e^{-\frac{c s^2}{2 n^2}}\right) - \sqrt{\frac{\pi}{2c}} (n + c s) \erf{\sqrt{\frac{c}{2}}\frac{s}{n}} \pm \O(1)\\
        &\geq s \left( \sqrt{\frac{c\pi}{2}} \erf{\sqrt{\frac{c}{2}}} + \e^{-\frac{c}{2}} \right) + n \left(1 + 4 \e^{-\frac{c}{8}} \left(\e^{\frac{3c}{32}} - 1\right) \frac{s}{n} - \e^{-\frac{c}{8}} \left(2\e^{\frac{3c}{32}} - 1\right)\right)\\
        &~~~ - \sqrt{\frac{\pi}{2c}} (n + c s) \left(\frac{2\e^{-0.85^2}}{\sqrt{\pi}} \sqrt{\frac{c}{2}}\frac{s}{n} + \erf{0.85} - \frac{1.7 \e^{-0.85^2}}{\sqrt{\pi}}\right) \pm \O(1)\\
        &= s \left( \sqrt{\frac{c\pi}{2}} \erf{\sqrt{\frac{c}{2}}} + \e^{-\frac{c}{2}} + 4 \e^{-\frac{c}{8}} \left(\e^{\frac{3c}{32}} - 1\right) - \left(1 + \frac{c s}{n}\right) \e^{-0.85^2} \right)\\
        &~~~ + n \left(1 - \e^{-\frac{c}{8}} \left(2\e^{\frac{3c}{32}} - 1\right)\right) - \sqrt{\frac{\pi}{2c}} (n + c s) \left(\erf{0.85} - \frac{1.7 \e^{-0.85^2}}{\sqrt{\pi}}\right) \pm \O(1).
    \end{align*}
    For the fourth case, $\frac{1}{2}n < s \leq \frac{3}{4}n$, we use $\erf{\sqrt{\frac{c}{2}}\frac{s}{n}} \leq f_4(\sqrt{\frac{c}{2}}\frac{s}{n}) = \frac{2\e^{-1.35^2}}{\sqrt{\pi}} \sqrt{\frac{c}{2}}\frac{s}{n} + \erf{1.35} - \frac{2.7 \e^{-1.35^2}}{\sqrt{\pi}}$ and $\e^{-\frac{c s^2}{2n^2}} \leq g_4(s) = -4 \e^{-\frac{9c}{32}} \left(\e^{\frac{5c}{32}} - 1\right) \frac{s}{n} + \e^{-\frac{9c}{32}} \left(3\e^{\frac{5c}{32}} - 2\right)$. The calculation is fully analogous to the previous case, yielding
    \begin{align*}
        \sum_{k=0}^s \alpha_k &\geq s \left( \sqrt{\frac{c\pi}{2}} \erf{\sqrt{\frac{c}{2}}} + \e^{-\frac{c}{2}} \right) + n \left(1 + 4 \e^{-\frac{9c}{32}} \left(\e^{\frac{5c}{32}} - 1\right) \frac{s}{n} - \e^{-\frac{9c}{32}} \left(3\e^{\frac{5c}{32}} - 2\right) \right)\\
        &~~~ - \sqrt{\frac{\pi}{2c}} (n + c s) \left(\frac{2\e^{-1.35^2}}{\sqrt{\pi}} \sqrt{\frac{c}{2}}\frac{s}{n} + \erf{1.35} - \frac{2.7 \e^{-1.35^2}}{\sqrt{\pi}}\right) \pm \O(1)\\
        &= s \left( \sqrt{\frac{c\pi}{2}} \erf{\sqrt{\frac{c}{2}}} + \e^{-\frac{c}{2}} + 4 \e^{-\frac{9c}{32}} \left(\e^{\frac{5c}{32}} - 1\right) - \left(1 + \frac{c s}{n}\right) \e^{-1.35^2} \right)\\
        &~~~ + n \left(1 - \e^{-\frac{9c}{32}} \left(3\e^{\frac{5c}{32}} - 2\right) \right) - \sqrt{\frac{\pi}{2c}} (n + c s) \left(\erf{1.35} - \frac{2.7 \e^{-1.35^2}}{\sqrt{\pi}}\right) \pm \O(1).
    \end{align*}
    Lastly, for the fifth case, $\frac{3}{4}n < s \leq n$, we use $\erf{\sqrt{\frac{c}{2}}\frac{s}{n}} \leq f_5(\sqrt{\frac{c}{2}}\frac{s}{n}) = 1$ and $\e^{-\frac{c s^2}{2n^2}} \leq g_5(s) = -4 \e^{-\frac{c}{2}} \left(\e^{\frac{7c}{32}} - 1\right) \frac{s}{n} + \e^{-\frac{c}{2}} \left(4\e^{\frac{7c}{32}} - 3\right)$. With analogous calculation, this yields
    \begin{align*}
        \sum_{k=0}^s \alpha_k &\geq s \left( \sqrt{\frac{c\pi}{2}} \erf{\sqrt{\frac{c}{2}}} + \e^{-\frac{c}{2}} \right) + n \left(1 + 4 \e^{-\frac{c}{2}} \left(\e^{\frac{7c}{32}} - 1\right) \frac{s}{n} - \e^{-\frac{c}{2}} \left(4\e^{\frac{7c}{32}} - 3\right) \right)\\
        &~~~ - \sqrt{\frac{\pi}{2c}} (n + c s) \cdot \erf{\sqrt{\frac{c}{2}}} \pm \O(1)\\
        &= s \left( \sqrt{\frac{c\pi}{2}} \erf{\sqrt{\frac{c}{2}}} + \e^{-\frac{c}{2}} + 4 \e^{-\frac{c}{2}} \left(\e^{\frac{7c}{32}} - 1\right) \right) + n \left(1 - \e^{-\frac{c}{2}} \left(4\e^{\frac{7c}{32}} - 3\right) \right)\\
        &~~~ - \sqrt{\frac{\pi}{2c}} (n + c s) \erf{\sqrt{\frac{c}{2}}} \pm \O(1),
    \end{align*}
    finishing the proof of the last case.
\end{proof}

Finally, we are ready to prove our main result. For the final proof of the theorem, we compare the different cases of Lemma \ref{lem:offlinesumtos2} and Lemma \ref{lem:onlinesumtos2} individually and, each time, take the limit for $n$ towards infinity. The minimum ratio over all the cases is our result. It is attained for the edge-cases $s=0$ and $s\in\{n-1,n\}$.

\begin{proof}[Proof of Theorem \ref{thm:onlbound}]
    Combining the results of Lemma \ref{lem:alphaoff} and Lemma \ref{lem:alphaon} with the bound from Lemma \ref{lem:minbound}, we get that
    \begin{equation}\label{eq:finalbound}
        \lim_{n\to\infty} \frac{\E[\textsc{Onl}_n]}{\E[\textsc{Opt}_n]} \geq \lim_{n\to\infty} \frac{\sum_{k=0}^n \alpha_k \E[x \mid \delta_{p(k+1)} < x \leq  \delta_{p(k)}]}{\sum_{k=0}^n \alpha_k^* \E[x \mid \delta_{p(k+1)} < x \leq \delta_{p(k)}]} \geq \lim_{n\to\infty} \min_{s\in \{0,\dots,n\}} \frac{\sum_{k=0}^{s} \alpha_k}{\sum_{k=0}^{s} \alpha^*_k},
    \end{equation}
    which we aim to lower bound by $0.598$. In order to find the minimum, we will go through all possible value ranges of $s$ that arose in Lemma \ref{lem:offlinesumtos2} and Lemma \ref{lem:onlinesumtos2}. Recall the constants $c=9.71$, $b_1 = \sqrt{\frac{c\pi}{2}} \erf{\sqrt{\frac{c}{2}}} + \e^{-\frac{c}{2}} - 1 \approx 2.906$ and $\varepsilon^\prime = c \varepsilon \left( \sqrt{\frac{c\pi}{2}} \erf{\sqrt{\frac{c}{2}}} + \e^{-\frac{c}{2}} \right) \approx 37.928\varepsilon > 0$.\\
    First, consider $s=0$. We get that
    \[\lim_{n\to\infty} \frac{\alpha_0}{\alpha_0^*} \geq \lim_{n\to\infty} \frac{(0+1)  b_1 - \varepsilon^\prime - o(1)}{\frac{c}{2} + o(1)} = \frac{b_1 - \varepsilon^\prime}{\frac{c}{2}},\]
    which for a very small choice of $\varepsilon$ is arbitrarily close to $\frac{2b_1}{c} \approx 0.598572 > 0.598$.\\
    Next, for $0 < s \leq \varepsilon n$, recall the constant $\varepsilon^{\prime\prime} = \varepsilon \frac{c(c+2)}{2} > 0$. Then,
    \begin{align*}\lim_{n\to\infty} \frac{\sum_{k=0}^s \alpha_k}{\sum_{k=0}^s \alpha_k^*} &\geq \lim_{n\to\infty} \frac{(s+1)  b_1 - \varepsilon^\prime - o(1)}{\frac{c(s+1)}{2} + \varepsilon^{\prime\prime} \pm o(1)} = \lim_{n\to\infty} \frac{b_1 - \frac{\varepsilon^\prime}{s+1} - o(1)}{\frac{c}{2} + \frac{\varepsilon^{\prime\prime}}{s+1} \pm o(1)}\\
    &\geq \lim_{n\to\infty} \frac{b_1 - \varepsilon^\prime - o(1)}{\frac{c}{2} + \varepsilon^{\prime\prime} \pm o(1)} = \frac{b_1 - \varepsilon^\prime}{\frac{c}{2} + \varepsilon^{\prime\prime}},
    \end{align*}
    which again for a very small choice of $\varepsilon$ is arbitrarily close to $\frac{2b_1}{c}\approx 0.598572$.\\
    Next, for $\varepsilon n \leq s \leq \frac{1}{4}n$, we get that
    \begin{align*}
        \lim_{n\to\infty} \frac{\sum_{k=0}^s \alpha_k}{\sum_{k=0}^s \alpha_k^*} &\geq \lim_{n\to\infty} \frac{s \left( b_1 - \frac{c s}{n} \frac{64+c}{128} \right) \pm \O(1)}{\frac{c}{2}s - \frac{64-16c+2c^2-64\e^{-\frac{c}{4}}}{n c}s^2 \pm \O(1)}\\
        &= \lim_{n\to\infty} \frac{b_1 - \frac{c(64+c)}{128} \frac{s}{n} \pm o(1)}{\frac{c}{2} - \frac{64-16c+2c^2-64\e^{-\frac{c}{4}}}{c}\frac{s}{n} \pm o(1)} = \frac{b_1 - \frac{c(64+c)}{128} a}{\frac{c}{2} - \frac{64-16c+2c^2-64\e^{-\frac{c}{4}}}{c} a},
    \end{align*}
    where we rewrote $s = a n$. We need to maximize this term as a function in $a \in [\varepsilon, \frac{1}{4}]$. To find extreme points, we use the quotient rule in order to calculate the first derivative. Note that for two linear functions, this always yields a constant value. It is easy to verify that this constant value is unequal to zero for our functions, such that the quotient has no extreme point. Also, the only zero of the denominator is for some $a>\frac{1}{2}$, such that there is no pole in the range of interest. Consequently, we only need to check the left and right boundaries, which are $a=\varepsilon$ and $a=\frac{1}{4}$, where choosing a small $\varepsilon$ again yields $\frac{2b_1}{c}\approx 0.598572$, and the second one equals
    \[\frac{b_1 - \frac{c(64+c)}{128} \frac{1}{4}}{\frac{c}{2} - \frac{64-16c+2c^2-64\e^{-\frac{c}{4}}}{c} \frac{1}{4}} \approx 0.603837 > 0.598.\]
    Next, for $\frac{1}{4} n \leq s \leq \frac{1}{2}n$, recall
    \[b_2 = \sqrt{\frac{c\pi}{2}} \erf{\sqrt{\frac{c}{2}}} + \e^{-\frac{c}{2}} + 4 \e^{-\frac{c}{8}} \left(\e^{\frac{3c}{32}} - 1\right) - \e^{-0.85^2} - c \sqrt{\frac{\pi}{2c}} \left(\erf{0.85} - \frac{1.7 \e^{-0.85^2}}{\sqrt{\pi}}\right) \approx 3.994\]
    and $b_3 = 1 - \e^{-\frac{c}{8}} \left(2\e^{\frac{3c}{32}} - 1\right) - \sqrt{\frac{\pi}{2c}} \left(\erf{0.85} - \frac{1.7 \e^{-0.85^2}}{\sqrt{\pi}}\right) \approx -0.302$. We get that
    \begin{align*}
        \lim_{n\to\infty} \frac{\sum_{k=0}^s \alpha_k}{\sum_{k=0}^s \alpha_k^*} &\geq \lim_{n\to\infty} \frac{s \left(b_2 - \frac{c s}{n} \e^{-0.85^2} \right) + n b_3 \pm \O(1)}{n - \frac{n^2}{c s} \left(1-\e^{-\frac{c}{4}}\right) \pm \O(1)}\\
        &= \lim_{n\to\infty} \frac{\frac{s}{n} \left(b_2 - \frac{c s}{n} \e^{-0.85^2} \right) + b_3 \pm o(1)}{1 - \frac{n}{c s} \left(1-\e^{-\frac{c}{4}}\right) \pm o(1)}\\
        &= \frac{a \left(b_2 - c a \e^{-0.85^2} \right) + b_3}{1 - \frac{1}{c a} \left(1-\e^{-\frac{c}{4}}\right)},
    \end{align*}
    where we again rewrote $s = a n$. We need to minimize this term as a function in $a \in [\frac{1}{4}, \frac{1}{2}]$. To find extreme points, we use the quotient rule in order to calculate the first derivative and set it equal to zero. Multiplying both sides with $a^2$, we can see that fact we search for the zeros of a cubic function, which has three zeros, of which only one is in the interval $[\frac{1}{4}, \frac{1}{2}]$, at approximately $a \approx 0.37264$. Plugging the exact value of $a$ into the original fraction yields a value of approximately $0.710687 > 0.598$. Also, the denominator of the fraction has exactly one zero outside of the interval $[\frac{1}{4}, \frac{1}{2}]$, such that there is no pole in the range of interest. Consequently, we only need to check the left and right boundaries, which are $a=\frac{1}{4}$ and $a=\frac{1}{2}$, where the first one yields a value of approximately $0.643427 > 0.598$ and the second one a value of approximately $0.635727 > 0.598$.\\
    Next, for $\frac{1}{2} n \leq s \leq \frac{3}{4}n$, recall
    \[b_4 = \sqrt{\frac{c\pi}{2}} \erf{\sqrt{\frac{c}{2}}} + \e^{-\frac{c}{2}} + 4 \e^{-\frac{9c}{32}} \left(\e^{\frac{5c}{32}} - 1\right) - \e^{-1.35^2} - c \sqrt{\frac{\pi}{2c}} \left(\erf{1.35} - \frac{2.7 \e^{-1.35^2}}{\sqrt{\pi}}\right) \approx 1.948\]
    and $b_5 = 1 - \e^{-\frac{9c}{32}} \left(3\e^{\frac{5c}{32}} - 2\right) - \sqrt{\frac{\pi}{2c}} \Big(\erf{1.35} - \frac{2.7 \e^{-1.35^2}}{\sqrt{\pi}}\Big) \approx -0.041$. Since the lower bound in this case results from the same formula as in the previous case, only with $b_2$ switched to $b_4$ and $b_3$ switched to $b_5$, we can use analogous calculations. Here, we find that the quotient has one extreme point in the range of interest at approximately $a \approx 0.5558$ with a value of approximately $0.669426 > 0.598$, but no poles in the range of interest. The left boundary at $a=\frac{1}{2}$ has a value of $0.664978 > 0.598$ and the right boundary at $a=\frac{3}{4}$ has a value of $0.613443 > 0.598$.\\
    Next, for $\frac{1}{2} n \leq s \leq n-2$, recall the constants $b_6 = \sqrt{\frac{c\pi}{2}} \erf{\sqrt{\frac{c}{2}}} + \e^{-\frac{c}{2}} + 4 \e^{-\frac{c}{2}} \left(\e^{\frac{7c}{32}} - 1\right) - c \sqrt{\frac{\pi}{2c}} \erf{\sqrt{\frac{c}{2}}} \approx 0.237$ and $b_7 = 1 - \e^{-\frac{c}{2}} \left(4\e^{\frac{7c}{32}} - 3\right) - \sqrt{\frac{\pi}{2c}} \erf{\sqrt{\frac{c}{2}}} \approx 0.361$. We get that
    \begin{align*}
        \lim_{n\to\infty} \frac{\sum_{k=0}^s \alpha_k}{\sum_{k=0}^s \alpha_k^*} &\geq \lim_{n\to\infty} \frac{s b_6 + n b_7 \pm \O(1)}{n - \frac{n^2}{c s} \left(1-\e^{-\frac{c}{4}}\right) \pm \O(1)}\\
        &= \lim_{n\to\infty} \frac{\frac{s}{n} b_6 + b_7 \pm o(1)}{1 - \frac{n}{c s} \left(1-\e^{-\frac{c}{4}}\right) \pm o(1)} = \frac{a b_6 + b_7}{1 - \frac{1}{c a} \left(1-\e^{-\frac{c}{4}}\right)},
    \end{align*}
    where we again rewrote $s = a n$. We need to minimize this term as a function in $a \in [\frac{3}{4}, 1]$. To find extreme points, we use the quotient rule in order to calculate the first derivative and set it equal to zero. Multiplying both sides with $a^2$, we can see that fact we search for the zeros of a quadratic function, where we can use the p-q-formula in order to derive the two zeros which are both not in the range of interest. The unique pole is at approximately $a = 0.0938973$, which is also not in the range of interest. The left boundary at $a=\frac{3}{4}$ has a value of $0.616382 > 0.598$ and the right boundary at $a=1$ has a value of $0.660554 > 0.598$.\\
    Lastly, for $s\in\{n-1, n\}$, we get that
    \begin{align*}
        \lim_{n\to\infty} \frac{\sum_{k=0}^s \alpha_k}{\sum_{k=0}^s \alpha_k*} \geq \lim_{n\to\infty} \frac{s b_6 + n b_7 \pm \O(1)}{n \pm \O(1)} = \lim_{n\to\infty} \frac{\frac{s}{n} b_6 + b_7 \pm o(1)}{1 \pm o(1)} = b_6 + b_7 \approx 0.598529 > 0.598.
    \end{align*}
    Overall, we found that for every possible value of $s$, the fraction on the right hand side of Equation \eqref{eq:finalbound} is always greater than $0.598$ for $n\to\infty$, such that the corresponding minimum gives a lower bound on the prophet inequality of \textsc{Onl} of $0.598$.
\end{proof}

\section{Conclusion}
We analyzed an over-time variant of the well-known prophet inequality with i.i.d.\ random variables. We presented an algorithm resulting in a prophet inequality of at least $\approx 0.598$ when the number of steps tends to infinity. Our upper bound shows that the best possible prophet inequality is at most $1/\varphi \approx 0.618$, where $\varphi$ denotes the golden ratio.

It is easy to see that our procedure can also be seen as a posted-price mechanism approximating social welfare in an online auction. Thus, the algorithm can even be used without direct access to the realizations of the variables. To see this, suppose customers draw the value for the item from the given joint distribution. The mechanism is the following. In round $i$, whenever the item is available, customer $i$ can rent the item for the current round for free. For all later rounds, the price is given by the current threshold of the algorithm. This yields the following decisions by the customers. Independent of the drawn value, all customers will always accept the item for free for the current round. Additionally, they will rent the item for all remaining rounds if and only if the posted price is below the drawn valuation. Thus, the generated welfare is exactly that of our online procedure and the posted-price mechanism approximates social welfare by the achieved approximation ratio of $0.598$ for $n\to\infty$.      

Although our online procedure \textsc{Onl} from Section \ref{sec:betterlower} is close to optimal, there is a small gap between the upper and lower bound on the prophet inequality for $n\to\infty$.
A natural extension of the model for future work is to introduce different distributions instead of i.i.d.\ random variables.
%
%
\bibliographystyle{plain}
\bibliography{full_references}

\end{document}